\title{Session Types in Abelian Logic}
\author{Yoichi Hirai\thanks{When this work was presented
at the workshop, the author was a student at the University
of Tokyo and a JSPS fellow supported by Grant-in-Aid for
JSPS Fellows 23-6978.}
\institute{National Institute of Advanced Industrial Science and Technology}
\email{y-hirai@aist.go.jp}}
\newcommand{\ignore}[1]{}
\newtheorem{theorem}{Theorem}[section]
\newtheorem{corollary}[theorem]{Corollary}
\newtheorem{example}[theorem]{Example}
\newtheorem{proposition}[theorem]{Proposition}
\newtheorem{definition}[theorem]{Definition}
\newtheorem{lemma}[theorem]{Lemma}
\def\NAT@spacechar{~}
\newcommand{\tuple}[1]{\langle{#1}\rangle}
\newcommand{\fix}[1]{[FIX \fbox{#1}]}
\newcommand{\ruleskip}{\vskip 4mm}
\newcommand{\hyper}{\mathcal{H}}
\newcommand{\hypert}{\mathcal{O}}
\newcommand{\hmid}{\ \ \rule[-2pt]{2pt}{11pt}\ \ }
\newcommand{\tr}{\vdash}
\newcommand{\lpair} [1]{\langle{#1}\rangle}
\newcommand{\inl}[1]{\mathsf{inl}({#1})}
\newcommand{\inr}[1]{\mathsf{inr}({#1})}
\newcommand{\mat} [5]{\mathsf{match}\,{#1}\,\mathsf{of}\, \inl{#2}. {#3}/
\inr{#4}. {#5}}
\newcommand{\tj}   [2]{ {#1} \mathord{\colon\kern -2pt}{#2} }
\newcommand{\wwedge}{\operatorname*{\bigwedge\kern -8pt \bigwedge}}
\newcommand {\G}{\Gamma}
\newcommand {\D}{\Delta}
\renewcommand{\phi}{\varphi}
\renewcommand{\vec}{\overrightarrow}
\newcommand{\limp}{\multimap}
\newcommand{\co}[1]{\bar{#1}}
\newcommand{\one}{\mathbf{1}}
\newcommand{\letin}[3]{\mathsf{let }\,{#1}\,\mathsf{ be
}\,{#2}\,\mathsf{ in }\,{#3}}
\newcommand{\ign}[2]{\mathsf{ign }\,{#1}\,\mathsf{ in }\,{#2}}
\newcommand{\eval}{\Downarrow}
\newcommand{\terminate}{\mathsf{end}}
\newcommand{\sendtype}[2]{\mathop{!}\kern -2pt{#1}\hskip 2pt{#2}}
\newcommand{\recvtype}[2]{\mathop{?}\kern -2pt{#1}\hskip 2pt{#2}}
\newcommand{\sendterm}[3]{{#1}\langle{#2}\rangle.\,{#3}}
\newcommand{\recvterm}[3]{{{#1}({#2}).\,{#3}}}
\newcommand{\leftside }[1]{\mathop\triangleright(#1)}
\newcommand{\rightside}[1]{\mathop\triangleleft(#1) }
\newcommand{\bothside}[1]{\mathop{\triangleright\triangleleft}(#1) }
\newcommand{\hypere}{\mathcal{E}}
\newcommand{\two}{\one\oplus\one}
\newcommand{\imtrans}[1]{\text{\textopencorner}{#1}\text{\textcorner}}
\date{\fix{fix}}
\begin{document}

\maketitle

\begin{abstract}
There was a PhD student who says ``%
I found a pair of wooden shoes.  I put
a coin in the left and a key in the right.
Next morning, I found those objects in the opposite shoes.''
We do not claim existence of such shoes, but propose
a similar programming abstraction in the context of typed lambda
calculi.
The result, which we call the Amida calculus, extends Abramsky's linear lambda
calculus LF and characterizes Abelian logic.
\end{abstract}

\section{Introduction}

We propose a way to unify ML-style programming
languages~\cite{milner1997definition, marlow2010haskell} and
$\pi$-calculus~\cite{milner1999communicating}.
``Well-typed expressions do not go wrong,'' said Milner~\cite{milner1978}.
However, when communication is involved, how to maintain such a
typing principle is not yet settled.
For example, Haskell, which has types similar to the ML-style types,
allows different threads to communicate using a kind of shared data
store called an MVar \texttt{mv} of
type \texttt{MVar
a}, with commands
\texttt{putMVar mv} of type \texttt{a -> IO ()} and \texttt{takeMVar mv}
of type \texttt{IO
a}.
The former command consumes an argument of type~\texttt{a} and
the consumed argument appears from the latter command.
However, if programmers make mistakes, these commands can
cause a deadlock during execution even after the program passes type
checking.

In order to prevent this kind of mistakes,
a type system can force the programmer to use both the sender and the
receiver each once.
For doing this, we use the technique of linear types.
Linear types are refinements of the ML-style intuitionistic types.
Differently from intuitionistic types,
linear types can specify a portion of program to be used
just once.
Linear types are used by Wadler~\cite{wadler2012propositions} and
Caires and Pfenning~\cite{pfenning2010} to encode session types, but our type system can
type processes that Wadler and Pfenning's system cannot.

As intuitionistic types are based on intuitionistic logic,
linear types are based on linear logic.
There are classical and intuitionistic variants of linear logics.
From the intuitionistic linear logic,
our only addition is the Amida axiom\index{axiom!Amida}\index{Amida
axiom|see{axiom}}
$(\phi\limp\psi)\otimes(\psi\limp\phi)$.
We will see that the resulting logic is identical to Abelian
logic~\cite{casari1989} up to provability of formulae.
In the Amida calculus, we can express $\pi$-calculus-like processes as macros.
From the viewpoint of typed lambda calculi, a natural way to add
the axiom
$(\phi\limp\psi)\otimes(\psi\limp\phi)$
is to add a pair of primitives $c$ and $\co c$ so that
$\cdots ct \cdots \co c u \cdots$ reduces to
$\cdots u  \cdots t \cdots$: in words,
$c$ returns $\co c$'s argument and vice versa.
In the axiom, we can substitute the singleton type $\one$ for the
general $\psi$ to
obtain an axiom standing for the send-receive communication primitive pair
$(\phi\limp \one)\otimes(\one\limp\phi)$; the left hand side
${c}$ of type ${\phi\limp\one}$ is the sending primitive and
the right hand side ${\co c}$ of type ${\one \limp\phi}$ is the receiving
primitive.
The sending primitive consumes a data of type~$\phi$ and produces a
meaningless data of unit type~$\one$.
The receiving primitive takes the meaningless data of type~$\one$ and
produces a data of type~$\phi$.

When we want to use these primitives in lambda terms,
there is one problem: what happens to $\co c(c t)$?
In this case, we do not know the output of $c$ because the output of
$c$
comes from $\bar c$'s input, which is the output of $c$.
Fortunately, we just want to know the output of $\co c$, which is the
input of $c$, that is, $t$.
In a more complicated case $\co c(\co d(c (d t)))$,
we can reason the output of $\co c$ as the input of $c$ as the output of
$d$ as the input of $\co d$ as the output of $c$ as the input of $\co c$
as the output of $\co d$ as the input of $d$, which is $t$.

Our first contribution is encoding of session types into a linear
type system.  Although the approach is similar to that of
Caires and Pfenning~\cite{pfenning2010} and
Wadler~\cite{wadler2012propositions}, the Amida calculus has an
additional axiom so that it can type some processes that Caires-Pfenning
or Wadler's type systems cannot.
In essence, the axiom allows two processes to wait for one
another and then exchange information.

Our second contribution is a side effect of our first contribution.
The type system we developed is a previously
unknown proof system for Abelian logic~\cite{casari1989}.
In this paper, we introduce
the axioms of the form $(\phi\limp\psi)\otimes(\psi\limp\phi)$
on top of IMALL, intuitionistic
multiplicative additive linear
logic.

Our third contribution is the use of conjunctive hypersequents.
Hypersequents have been around since Avron~\cite{avron91}, but
in all cases, different components in a hypersequent were interpreted
disjunctively.
In our formalization of Abelian logic,
we use conjunctive hypersequents, where different components are
interpreted conjunctively.
This is the first application of such conjunctive hypersequents.

Later in this paper, we address some issues about
consistency (Theorem~\ref{sound-to-abelian}), complicated protocols
(Section~\ref{sec:session-process}) and encoding process calculi
(Section~\ref{sec:session-process}).

\section{Definitions}

\paragraph{Types}
We assume a countably infinite set of \textit{propositional
variables}\index{propositional variable}\index{variable!propositional}, for which
we use letters $X,Y$ and so on.
We define a type~$\phi$ by BNF:
$
 \phi::=\one \mid X \mid \phi\otimes\phi\mid \phi\limp\phi\mid
 \phi\oplus\phi\mid \phi\with\phi\enspace.
$
A \textit{formula}\index{formula} is a type.
As the typing rules (Figure~\ref{fig:exchange:rules}) reveal,
$\otimes$ is the multiplicative conjunction,
$\limp$ is the multiplicative implication,
$\oplus$ is the additive disjunction and
$\with$ is the additive conjunction.

\paragraph{Terms and Free Variables}
We assume countably infinitely many variables $x, y, z, \ldots$.
Before defining terms, following Abramsky's linear lambda calculus
LF~\cite{abramsky1993computational}, we define \textit{patterns}\index{pattern}
binding sets of variables:
\begin{itemize}
 \item $\ast$ is a pattern binding $\emptyset$,
 \item $\lpair{x,\_}$ and $\lpair{\_,x}$ are patterns binding $\{x\}$,
 \item $x\otimes y$ is a pattern binding $\{x,y\}$.
\end{itemize}
All patterns are from Abramsky's LF~\cite{abramsky1993computational}.
Using patterns, we inductively define a \textit{term}\index{term}~$t$
with \textit{free variables}\index{free variable}\index{variable!free}~$S$.
We assume countably infinitely many \textit{channels}\index{channel}
with involution satisfying $\co c\neq c$ and $\co{\co c} = c$.
\begin{itemize}
 \item $\ast$ is a term with free variables~$\emptyset$,
 \item a variable $x$ is a term with free variables $\{x\}$,
 \item if $t$ is a term with free variables~$S$, $u$ is a term with
       free variables~$S'$, and moreover $S$ and $S'$ are disjoint, then $t\otimes
       u$ and
       $tu$ are terms with free variables $S\cup S'$,
 \item if $t$ and $u$ are terms with free variables $S$, then
       $\lpair{t,u}$ is a term with free variables~$S$,
 \item if $t$ is a term with free variables~$S$, then
       $\inl t$ and $\inr t$ are terms with free variables~$S$,
 \item if $t$ is a term with free variables $S\cup \{x\}$ and $x$ is not
       in $S$, then $\lambda x.t$ is a term with free variables~$S$,
 \item if $t$ is a term with free variables~$S$, $p$ is a pattern
       binding $S'$, $u$ is a term with free variables $S'\cup S''$ and equalities
       $S\cap S'' = S'\cap S'' = \emptyset$ hold, then,
       $\letin t p u$ is a term with free variables $S\cup S''$,
 \item if $t$ is a term with free variables $S$,
       $u$ is a term with free variables $S''\cup \{x\}$,
       $v$ is a term with free variables $S''\cup \{y\}$,
       $x,y\notin S''$ and $S\cap S'' = \emptyset$ hold, then
       \[
	\mat t x u y v
       \]
       is a term with free variables $S\cup S''$, and
 \item if $t$ is a term with free variables~$S$,
       then $ct$ is also a term with free variables~$S$ for any channel~$c$.
\end{itemize}
Only the last clause is original,
introducing channels, which are our communication primitives.
Note that a term with free variables $S$ is not a term with free
variables $S'$ when $S$ and $S'$ are different (even if $S$ is a subset
of $S'$).
In other words, the set of free variables $FV(t)$ is uniquely defined
for a term~$t$.
We introduce an abbreviation
\begin{align*}
 \ign \epsilon t   & \equiv t\\
 \ign {s_0,\vec s} t & \equiv \letin {s_0} \ast {(\ign {\vec s} t)}
\end{align*}
inductively for a sequence of terms~$\vec s$.
Here $\epsilon$ stands for the empty sequence.
The symbol $\mathsf{ign}$ is intended to be pronounced ``ignore.''

\paragraph{Typing Derivations}
On top of Abramsky's linear lambda calculus
LF~\cite{abramsky1993computational}\index{LF}, we add a rule to
make a closed term of type $(\phi\limp\psi)\otimes(\psi\limp\phi)$.
A \textit{context}\index{context}~$\G$ is a possibly empty sequence of
variables associated with
types where the same variable appears at most once.
A context $\tj{x}{X}, \tj{y}{Y}$ is allowed, but $\tj{x}{X}, \tj{x}{Y}$
or $\tj{x}{X},\tj{x}{X}$ is not a context.
A \textit{hypersequent}\index{hypersequent} is inductively defined as
$
 \hypert ::=\, \epsilon
 \mid\, (\G\tr\tj{t}{\phi}\hmid \hypert)
$
where $\G$ is a context.
Each $\G\tr\tj{t}{\phi}$ is called a \textit{component}\index{component}
of a hypersequent.
In this paper, we interpret the components conjunctively.
Differently from the previous
papers~\cite{avron91,Baaz01122003,avrontableau},
here, the hypersequent $\G\tr\phi\hmid \D\tr\psi$ is interpreted as the
conjunction of components:
$(\bigotimes\G\limp\phi)\otimes (\bigotimes\D\limp\psi)$ where
$\bigotimes\G$ stands for the $\otimes$-conjunction of elements of $\G$.
The conjunctive treatment is our original invention, and finding an application
of such a treatment is one of our contributions.
We name this technique the \textit{conjunctive
hypersequent}\index{hypersequent!conjunctive}.
We have to note that, for Abelian logic, there is an ordinary disjunctive
hypersequent system~\cite{metcalfe2006} that enjoys cut-elimination.
We still claim that the conjunctive
hypersequents reflect some computational intuition on
concurrently running multiple processes, all of which are supposed to
succeed (as opposed to the disjunctive interpretation where
at least one of which is supposed succeed, e.g. Hirai's calculus for G\"odel-Dummett logic~\cite{hiraiflops2012}).

The typing rules of the \textit{Amida
calculus}\index{calculus!Amida}\index{Amida calculus} are in
Figure~\ref{fig:exchange:rules}. Most rules are straightforward
modification of Abramsky's
  rules~\cite{abramsky1993computational}.
  The Sync rule is original.   Rules $\with$R and $\oplus$L are only
  applicable to singleton hypersequents.
 \begin{figure}[tb]
  \centering
  \footnotesize
  \AxiomC{}
  \LL{Ax}
  \UnaryInfC{$\tj{x}{\phi}\tr\tj{x}{\phi}$}
  \DisplayProof
  \hfill
  \AxiomC{$\hypert$}
  \AxiomC{$\hypert'$}
  \LL{Merge}
  \BinaryInfC{$\hypert\hmid\hypert'$}
  \DisplayProof
  \hfill
  \AxiomC{$\hypert\hmid\G\tr\tj{t}{\phi}\hmid\tj{x}{\phi},\D\tr\tj{u}{\psi}$}
  \LL{Cut}
  \UnaryInfC{$\hypert\hmid\G,\D\tr\tj{u[t/x]}{\psi}$}
  \DisplayProof
  \ruleskip
  \AxiomC{$\hypert\hmid\G,\tj{x}{\phi},\tj{y}{\psi},\D\tr\tj{t}{\theta}$}
  \LL{IE}
  \UnaryInfC{$\hypert\hmid\G,\tj{y}{\psi},\tj{x}{\phi},\D\tr\tj{t}{\theta}$}
  \DisplayProof
  \hfill
  \AxiomC{$\hypert\hmid \G \tr\tj t\phi \hmid \D \tr\tj u\psi\hmid \hypert'$}
  \LL{EE}
  \UnaryInfC{$\hypert\hmid \D \tr\tj u\psi \hmid \G \tr\tj t\phi \hmid \hypert'$}
  \DisplayProof
  \hfill
  %
  \AxiomC{}
  \LL{$\one$R}
  \UnaryInfC{$\tr\tj{\ast}{\one}$}
  \DisplayProof
  \ruleskip
  \AxiomC{$\hypert\hmid\G\tr\tj{t}{\phi}$}
  \LL{$\one$L}
  \UnaryInfC{$\hypert\hmid\G,\tj{z}{\one}\tr\tj{\ign z t}{\phi}$}
  \DisplayProof
  \hfill
  \AxiomC{$\hypert\hmid\G\tr\tj{t}{\phi}\hmid\D\tr\tj{u}{\psi}$}
  \LL{$\otimes$R}
  \UnaryInfC{$\hypert\hmid\G,\D\tr\tj{t\otimes u}{\phi\otimes \psi}$}
  \DisplayProof
  \hfill
  \AxiomC{$\hypert\hmid\G\tr\tj{t}{\phi}\hmid \D\tr\tj{u}{\psi}$}
  \LL{Sync}
  \UnaryInfC{$\hypert\hmid
  \G\tr\tj{ct}{\psi}\hmid \D\tr\tj{\co cu}{\phi}$}
  \noLine
  \UnaryInfC{($c$ and $\co c$ uniquely introduced here)
}
  \DisplayProof
  \ruleskip
  %
  \AxiomC{$\hypert\hmid\G,\tj{x}{\phi},\tj{y}{\psi}\tr\tj{t}{\theta}$}
  \LL{$\otimes$L}
  \UnaryInfC{$\hypert\hmid\G,\tj{z}{\phi\otimes\psi}\tr\tj{\letin{z}{x\otimes
  y}{t}}{\theta}$}
  \DisplayProof
  \hfill
  \AxiomC{$\hypert\hmid\G,\tj{x}{\phi}\tr\tj{t}{\psi}$}
  \LL{$\limp$R}
  \UnaryInfC{$\hypert\hmid\G\tr\tj{\lambda x.t}{\phi\limp \psi}$}
  \DisplayProof
  \hfill
  \AxiomC{$\hypert\hmid\G\tr\tj{t}{\phi}\hmid\tj{x}{\psi},\D\tr\tj{u}{\theta}$}
  \LL{$\limp$L}
  \UnaryInfC{$\hypert\hmid\G,\tj{f}{\phi\limp\psi},\D\tr \tj{u[(ft)/x]}{\theta}$}
  \DisplayProof
  \ruleskip
  \AxiomC{$\G\tr\tj{t}{\phi}$}
  \AxiomC{$\G\tr\tj{u}{\psi}$}
  \LL{$\with$R}
  \BinaryInfC{$\G\tr\tj{\lpair{t,u}}{\phi\with\psi}$}
  \DisplayProof
  \hfill
  \AxiomC{$\hypert\hmid\G,\tj{x}{\phi}\tr\tj{t}{\theta}$}
  \LL{$\with$L$_0$}
  \UnaryInfC{$\hypert\hmid\G,\tj{z}{\phi\with\psi}\tr\tj{\letin{z}{\lpair{x,\_}}{t}}{\theta}$}
  \DisplayProof
  \hfill
  \AxiomC{$\hypert\hmid\G,\tj{y}{\psi}\tr\tj{t}{\theta}$}
  \LL{$\with$L$_1$}
  \UnaryInfC{$\hypert\hmid\G,\tj{z}{\phi\with\psi}\tr\tj{\letin{z}{\lpair{\_,y}}{t}}{\theta}$}
  \DisplayProof
  \ruleskip
  \AxiomC{$\hypert\hmid\G\tr\tj{t}{\phi}$}
  \LL{$\oplus$R$_0$}
  \UnaryInfC{$\hypert\hmid\G\tr\tj{\inl{t}}{\phi\oplus\psi}$}
  \DisplayProof
  \hfill
  \AxiomC{$\hypert\hmid\G\tr\tj{u}{\psi}$}
  \LL{$\oplus$R$_1$}
  \UnaryInfC{$\hypert\hmid\G\tr\tj{\inr{u}}{\phi\oplus\psi}$}
  \DisplayProof
  \hfill
  \AxiomC{$\G,\tj{x}{\phi}\tr\tj{u}{\theta}
  $}
  \AxiomC{$\G,\tj{y}{\psi}\tr\tj{v}{\theta}$}
  \LL{$\oplus$L}
  \BinaryInfC{$\G,\tj{z}{\phi\oplus\psi}\tr\tj{\mat{z}{x}{u}{y}{v}}{\theta}$}
  \DisplayProof
  \ruleskip
  \caption[The typing rules of the Amida calculus]
  {The typing rules of the Amida calculus.
  $\hypert$ and $\hypert'$ stand for hypersequents.
  }
  \label{fig:exchange:rules}
 \end{figure}
 When $\tr\tj t\phi$ is derivable, the
 type~$\phi$ is \textit{inhabited}\index{inhabited}.
 \begin{example}[Derivation of the Amida axiom]
The type $(\phi\limp\psi)\otimes(\psi\limp\phi)$ is inhabited by
the following derivation.
 \begin{center}
  \AxiomC{}
  \LL{Ax}
  \UnaryInfC{$\tj{x}{\phi}\tr\tj{x}{\phi}$}
  \AxiomC{}
  \LL{Ax}
  \UnaryInfC{$\tj{y}{\psi}\tr\tj{y}{\psi}$}
  \LL{Merge}
  \BinaryInfC{$\tj{x}{\phi}\tr\tj{x}{\phi} \hmid
  \tj{y}{\psi}\tr\tj{y}{\psi}$}
  \LL{Sync}
  \UnaryInfC{$ \tj{x}{\phi}\tr\tj{cx}{\psi} \hmid \tj{y}{\psi}
  \tr\tj{\co c y}{\phi}$}
  \LL{$\limp$R}
  \UnaryInfC{$ \tr\tj{\lambda x.cx}{\phi\limp\psi} \hmid \tj{y}{\psi}
  \tr\tj{\co c y}{\phi}$}
  \LL{$\limp$R}
  \UnaryInfC{$ \tr\tj{\lambda x.cx}{\phi\limp\psi} \hmid
  \tr\tj{\lambda y.\co c y}{\psi\limp\phi}$}
  \LL{$\otimes$R}
  \UnaryInfC{$ \tr\tj{{(\lambda x.cx) \otimes (\lambda y.\co c
  y)}}{(\phi\limp\psi)\otimes(\psi\limp\phi)}$}
  \DisplayProof
 \end{center}
Another example shows how we can type the term $\co c(c x)$.
 \begin{center}
  \AxiomC{}
  \LL{Ax}
  \UnaryInfC{$\tj{x}{\phi}\tr\tj{x}{\phi}$}
  \AxiomC{}
  \LL{Ax}
  \UnaryInfC{$\tj{y}{\psi}\tr\tj{y}{\psi}$}
  \LL{Merge}
  \BinaryInfC{$\tj{x}{\phi}\tr\tj{x}{\phi} \hmid
  \tj{y}{\psi}\tr\tj{y}{\psi} $}
  \LL{Sync}
  \UnaryInfC{
  $\tj{x}{\phi}\tr\tj{cx}{\psi} \hmid
  \tj{y}{\psi}\tr\tj{\co cy}{\phi} $
  }
  \LL{Cut}
  \UnaryInfC{
  $\tj{x}{\phi}\tr\tj{\co c(cx)}{\phi}$
  }
\DisplayProof
 \end{center}
  \vskip 1mm
 \end{example}

\paragraph{Evaluation}
As a programming language, the Amida calculus is equipped with an
operational semantics that evaluates some closed hyper-terms into a sequence
of canonical forms.
The \textit{canonical forms}\index{canonical form} are the same as those of Abramsky's
LF~\cite{abramsky1993computational}:
\[
 \lpair{t,u}\qquad \ast\qquad v\otimes w\qquad \lambda
 x.t\qquad \inl{v}\qquad\inr{w}
\]
where $v$ and $w$ are canonical forms and $t$ and $u$ are terms.

An \textit{evaluation
sequence}\index{hypersequent!evaluation}\index{evaluation
hypersequent}~$\hypere$ is defined by the following grammar:
\[
 \hypere ::= \epsilon\mid (t\eval v\hmid \hypere)
\]
where $t$ is a term and $v$ is a canonical form.
Now we define evaluation as a set of evaluation sequences
(Figure~\ref{fig:eval}).
Though most rules are similar to those of Abramsky's
LF~\cite{abramsky1993computational},
we add the semantics for channels.
It is noteworthy that the results of evaluation are always canonical
forms.

 \begin{figure}
  \centering
  \AxiomC{}
  \UnaryInfC{$\ast\eval \ast$}
  \DisplayProof
  \hfill
  \AxiomC{$\hypere\hmid t\eval \ast \hmid u\eval v$}
  \UnaryInfC{$\hypere\hmid \ign t u\eval v$}
  \DisplayProof
  \hfill
  \AxiomC{$\hypere\hmid t\eval v\hmid u\eval w$}
  \UnaryInfC{$\hypere\hmid t\otimes u\eval v\otimes w$}
  \DisplayProof
  \hfill
  \AxiomC{$\hypere\hmid t\eval v\otimes w\hmid u[v/x,w/y]\eval v'$}
  \UnaryInfC{$\hypere\hmid \letin t {x\otimes y} u \eval v'$}
  \DisplayProof
  \ruleskip
  \AxiomC{$\hypere$}
  \AxiomC{$\hypere'$}
  \LL{Merge}
  \BinaryInfC{$\hypere\hmid \hypere'$}
  \DisplayProof\enspace
  (For any channel~$c$, it is not the case that $\hypere$ contains $c$
  and $\hypere'$ contains $\co c$.)
  \ruleskip
  \AxiomC{}
  \UnaryInfC{$\lambda x.t\eval \lambda x.t$}
  \DisplayProof
  \ruleskip
  \AxiomC{$\hypere\hmid t\eval\lambda x.t'\hmid u\eval v\hmid
  t'[v/x]\eval w$}
  \UnaryInfC{$\hypere \hmid tu \eval w$}
  \DisplayProof
  \hfill
  \AxiomC{$\hypere\hmid t\eval v\hmid u\eval w$}
  \UnaryInfC{$\hypere\hmid ct\eval w\hmid \co cu\eval v$}
  \DisplayProof
  ($\hypere$, $t$ and $u$ do not contain $c$ or $\co c$.)
  \ruleskip
  \AxiomC{   $\hypere\hmid t\eval t'\hmid s\eval s'\hmid
  \hypere'$}
  \UnaryInfC{$\hypere\hmid s\eval s'\hmid t\eval t'\hmid \hypere'$}
  \DisplayProof
  \hfill
  \AxiomC{}
  \UnaryInfC{$\lpair{t,u}\eval \lpair{t,u}$}
  \DisplayProof
  \hfill
  \AxiomC{$\hypere\hmid
  t\eval\lpair{t_0,t_1}
  \hmid  u[t_0/x]\eval w$}
  \UnaryInfC{$\hypere \hmid
  \letin t {\lpair{x,\_}} u\eval w$}
  \DisplayProof
  \ruleskip
  \AxiomC{$\hypere
  \hmid t\eval\lpair{t_0,t_1}
  \hmid  u[t_1/y]\eval w$}
  \UnaryInfC{$\hypere \hmid
  \letin t {\lpair{\_,y}} u\eval w$}
  \DisplayProof
  \hfill
  \AxiomC{$\hypere\hmid t\eval v$}
  \UnaryInfC{$\hypere\hmid \inl{t}\eval \inl{v}$}
  \DisplayProof
  \hfill
  \AxiomC{$\hypere\hmid u\eval w$}
  \UnaryInfC{$\hypere \hmid \inr{u}\eval \inr{w}$}
  \DisplayProof
  \ruleskip
  \AxiomC{$\hypere\hmid t\eval \inl{v}\hmid u[v/x]\eval w$}
  \UnaryInfC{$\hypere\hmid \mat t x u y {u'}\eval w$}
  \DisplayProof
  \hfill
  \AxiomC{$\hypere\hmid t\eval \inr{v}\hmid u'[v/y]\eval w$}
  \UnaryInfC{$\hypere\hmid \mat t x u y {u'}\eval w$}
  \DisplayProof
  \caption[The definition of evaluation relation of the Amida
  calculus]{The definition of evaluation relation of the Amida calculus.
  $\hypere$ is possibly the empty evaluation sequence.
  }
  \label{fig:eval}
 \end{figure}

\section{Type Safety}
\label{type-safe}

When we can evaluate a derivable hypersequent,
the result is also derivable.
Especially, this shows that, whenever a communicating term is used,
the communicating term is used according to the types
shown in the Sync rule occurrence introducing the communicating term.

 \begin{theorem}[Type Preservation of the Amida calculus]
  \label{safety}
  If terms $t_0,\ldots,t_n$ have a hypersequent
   $\tr \tj{t_0}{\phi_0} \hmid \cdots \hmid \tr \tj{t_n}{\phi_n}$ and
  an evaluation sequence $t_0\eval v_0\hmid \cdots \hmid t_n\eval v_n$
  derivable, then \\
$
   \tr\tj{v_0}{\phi_0} \hmid \cdots \hmid \tr \tj{v_n}{\phi_n}
$
  is also derivable.
 \end{theorem}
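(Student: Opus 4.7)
The plan is to prove the statement by induction on the derivation of the evaluation sequence, after setting up three standard preparatory lemmas. First, a \emph{substitution lemma}: if two components $\Gamma,\tj{x}{\phi}\tr\tj{t}{\psi}$ and $\Delta\tr\tj{u}{\phi}$ of a derivable hypersequent have disjoint contexts, we may replace them by the single component $\Gamma,\Delta\tr\tj{t[u/x]}{\psi}$; this follows immediately from Cut. Second, a family of \emph{inversion lemmas} describing, for each term constructor at the root of a typed component, the shape of the typing of its immediate subterms. The critical inversion concerns Sync: because its side-condition stipulates that $c$ and $\co{c}$ are uniquely introduced by that step, any derivable hypersequent containing a component $\tr\tj{ct}{\psi}$ must also contain a paired component $\tr\tj{\co{c}u}{\phi}$, with the responsible Sync premise typing $t$ at $\phi$ and $u$ at $\psi$. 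Third, the rules IE and EE make derivability invariant under permuting contexts and reordering components, so order is inessential throughout.

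With these in hand, I would induct on the final rule of the evaluation derivation. The value axioms for $\ast$, $\lambda x.t$, and $\lpair{t,u}$ are immediate, since a canonical form already carries the desired type. The structural evaluation rules for $t\otimes u$, $\inl{t}$, $\inr{t}$, $\ign{t}{u}$, the various $\letin{}{}{}$ forms, $\mathsf{match}$, and application $tu$ each combine the induction hypothesis on the premise evaluations with the corresponding typing rule; cases that substitute a canonical form into a body (application on $\lambda x.t'$, the $\letin{}{}{}$ cases, and $\mathsf{match}$) additionally invoke the substitution lemma. The evaluation-level Merge and component-swap rules are directly matched by the typing-level Merge and EE rules.

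The conceptually interesting case is the evaluation rule for Sync: from $t\eval v$ and $u\eval w$ the conclusion is $ct\eval w\hmid\co{c}u\eval v$. By the Sync inversion, the typing of the given hypersequent contains the pair $\tr\tj{ct}{\psi}\hmid\tr\tj{\co{c}u}{\phi}$ obtained from a Sync premise in which $t$ has type $\phi$ and $u$ has type $\psi$. Applying the induction hypothesis to $t\eval v$ and $u\eval w$ yields $\tr\tj{v}{\phi}$ and $\tr\tj{w}{\psi}$, which are precisely the two components we need in the positions originally occupied by $ct$ (of type $\psi$) and $\co{c}u$ (of type $\phi$). The main obstacle is justifying that this inversion is unambiguous, that is, that no stray occurrence of $c$ or $\co{c}$ elsewhere in the derivation or inside $t$ or $u$ could confuse the pairing. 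Here the uniqueness side-condition on the typing-level Sync rule and the corresponding side-condition on the evaluation-level Merge rule (no component on one side of the separator may contain $c$ while the other contains $\co{c}$) work in tandem: each channel pair is introduced exactly once in the typing and consumed exactly once in the evaluation, so the two pairings line up canonically and the swap of values preserves types.
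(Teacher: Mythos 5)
Your proposal takes essentially the same route as the paper: both argue by induction on the evaluation derivation, backed by inversion lemmas on the typing derivation (the paper states only a Split proposition and the $\with$L inversion explicitly, dismissing the remaining cases as ``similar,'' so your substitution lemma and Sync inversion are just the cases the paper leaves implicit). One small correction: in the Merge case what is needed is not the typing-level Merge rule but its inverse --- the paper's Split proposition, that a derivation of two channel-disjoint hypersequents placed side by side can be separated into independent derivations of each half --- since the induction hypothesis must be applied to the two evaluation sub-sequences separately before re-merging, and it is exactly the channel-disjointness side condition on the evaluation-level Merge that makes this splitting possible.
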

 \begin{proof}
  By induction on evaluation using the propositions below.
  We analyze the cases by the last rule.
  \begin{description}
   \item[(Merge)]
	By Proposition~\ref{split}, we can use the induction hypothesis.
   \item[($\letin t {\lpair{x,\_}} u$)]
	By Proposition~\ref{inv-with-l}, we can use the induction hypothesis.
   \item[(Other cases)]
	Similar to above.\qedhere
  \end{description}
 \end{proof}

 Two hypersequents $\hypert$ and $\hypert'$ are
 \textit{channel-disjoint}\index{channel-disjoint} if and only if
 it is not the case that $\hypert$ contains $c$ and $\hypert'$ contains
 $\co c$ for any channel~$c$.
 \begin{proposition}[Split]
  \label{split}
  If a type derivation leading to $\hypert\hmid \hypert'$ exists for two
  channel-disjoint
  hypersequents,
  both $\hypert$ and $\hypert'$ are derivable separately.
 \end{proposition}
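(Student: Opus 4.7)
The plan is induction on the depth of the derivation of $\hypert\hmid\hypert'$, casing on the last rule; since the EE rule permits arbitrary reordering, I treat each side of the split as determined only by its multiset of components. Rules whose conclusion is a single component (Ax, $\one$R, $\with$R, $\oplus$L) force one of $\hypert,\hypert'$ to be empty, and the statement is immediate. Rules local to a single distinguished component of the conclusion ($\one$L, $\otimes$L, $\limp$R, $\with$L$_0$, $\with$L$_1$, $\oplus$R$_0$, $\oplus$R$_1$, IE, EE) touch a component lying on exactly one side of the split, so the induction hypothesis applied to the premise (with the split inherited in the obvious way) and a reapplication of the rule on the right side finishes the case. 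Rules that fuse two components in the conclusion ($\otimes$R, $\limp$L, Cut) place the combined component entirely on one side, so the two source components can be assigned to that same side in the premise, and induction followed by rule reapplication suffices.

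The two substantive cases are Merge and Sync. For Merge with premises $\hypert_1,\hypert_2$ and conclusion $\hypert_1\hmid\hypert_2 = \hypert\hmid\hypert'$, I form the multiset intersections $\hypere_i=\hypert_i\cap\hypert$ and $\hypere'_i=\hypert_i\cap\hypert'$ for $i=1,2$. Each $\hypert_i$ then decomposes as $\hypere_i\hmid\hypere'_i$, and the channel-disjointness of $\hypert,\hypert'$ is inherited by this sub-split since $\hypere_i$ and $\hypere'_i$ are multiset-subsets of $\hypert$ and $\hypert'$ respectively. The induction hypothesis applied to each $\hypert_i$ (trivially skipped when one side is empty) yields the four pieces, and two outer Merge applications reconstruct $\hypert=\hypere_1\hmid\hypere_2$ and $\hypert'=\hypere'_1\hmid\hypere'_2$. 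For Sync, the conclusion contains the paired components $\G\tr\tj{ct}{\psi}$ and $\D\tr\tj{\co cu}{\phi}$; if these lay on opposite sides of the split, then $\hypert$ would contain $c$ and $\hypert'$ would contain $\co c$, contradicting channel-disjointness. Hence both paired components lie on the same side, their unwrapped source components $\G\tr\tj t\phi$ and $\D\tr\tj u\psi$ lie together on that side in the premise, the induction hypothesis delivers both halves, and Sync reapplied on the relevant side restores the conclusion.

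The main obstacle is the Merge case, where the partition we want may slice through a subderivation in a way orthogonal to the rule's own boundary, forcing nested splittings on both premises simultaneously; the key bookkeeping is to verify that channel-disjointness descends to each sub-split. The Sync case is the conceptual crux of the argument but collapses to a single observation once the channel-disjointness hypothesis is invoked.
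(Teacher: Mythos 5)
Your proof is correct and follows the same route as the paper, which simply states ``by induction on the type derivation'' and leaves the case analysis implicit. Your spelled-out treatment of the Merge case (inducing a channel-disjoint sub-split on each premise) and the Sync case (both synced components must fall on the same side, by the freshness of $c,\co c$ and channel-disjointness) is exactly the content that induction needs.
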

 \begin{proof}
  By induction on the type derivation.
 \end{proof}

 \begin{proposition}[Inversion on $\with$L]
  \label{inv-with-l}
  If $\hypert\hmid \G\tr\tj{\letin t {\lpair{x,\_}} u}\theta$
  is derivable, then there is a partition of $\G$ into $\G_0$ and $\G_1$
  (up to exchange) such that $\hypert\hmid \G_0\tr \tj t {\phi\with\psi}
  \hmid \G_1,\tj{x}{\phi}\tr\tj u \theta$ is derivable.
 \end{proposition}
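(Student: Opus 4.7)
The plan is to induct on the derivation of $\hypert \hmid \G \tr \tj{\letin t {\lpair{x,\_}} u}{\theta}$, with case analysis on the last rule applied. Ax cannot be the last rule since the target term is not a variable. For every rule whose action on the target component produces a term whose top constructor is \emph{not} of the shape $\letin{\cdot}{\lpair{\cdot,\_}}{\cdot}$ (namely $\one$R, $\one$L, $\otimes$R, $\otimes$L, $\limp$R, $\with$R, $\with$L$_1$, $\oplus$R$_0$, $\oplus$R$_1$, $\oplus$L, and Sync), the target component must appear essentially unchanged in some premise; I apply the inductive hypothesis there and reapply the rule. The structural rules Merge and EE are handled by routing the target component to the appropriate premise, and IE is handled by permuting the resulting partition $\G_0,\G_1$ in the same way.

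The principal case is when the last rule is $\with$L$_0$ and its conclusion is the target component. Then $t$ must be the variable $z$ freshly introduced by the rule, $\G$ has the form $\G',\tj{z}{\phi\with\psi}$, and the premise is $\hypert \hmid \G', \tj{x}{\phi} \tr \tj{u}{\theta}$. Setting $\G_0 = \tj{z}{\phi\with\psi}$ and $\G_1 = \G'$, the desired hypersequent $\hypert \hmid \tj{z}{\phi\with\psi} \tr \tj{z}{\phi\with\psi} \hmid \G', \tj{x}{\phi} \tr \tj{u}{\theta}$ is obtained by Merge from an Ax instance and the premise itself.

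The delicate cases are Cut and $\limp$L, whose conclusions perform a substitution $u'[t'/y]$ that can yield the target shape in two ways: either (a) $u'$ itself already has root $\letin{\cdot}{\lpair{\cdot,\_}}{\cdot}$, in which case I apply the inductive hypothesis to the right premise (the one containing $u'$) and reassemble by Cut or $\limp$L on the derived component carrying $u$; or (b) $u' = y$, so the substituted term $t'$ is the entire target term, in which case I apply the inductive hypothesis to the left premise and then reassemble. The main obstacle is the bookkeeping around this substitution: one has to verify that the free variables of $t'$ are distributed consistently between $\G_0$ and $\G_1$ so that the recombining Cut or $\limp$L step is well-formed and gives back the claimed partition, and that the IE/EE permutations applied along the way are mirrored on the partition itself. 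Once this bookkeeping is fixed, every case reduces to routine application of the inductive hypothesis followed by reapplication of the rule in question.
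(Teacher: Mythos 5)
Your proposal is correct and follows essentially the same route as the paper: the paper's entire proof is the single line ``By induction on the original derivation,'' and your case analysis --- principal case $\with$L$_0$, routine commutation for the rules that do not produce a $\mathsf{let}\,\cdot\,\mathsf{be}\,\lpair{x,\_}$ term together with the structural rules, and the substitution bookkeeping for Cut and $\limp$L --- is precisely the content that line leaves implicit. The only nit is that in your Cut case (a) the cut variable may occur in the scrutinee $t$ rather than in the body $u$, so the reassembling Cut must be applied to whichever of the two derived components inherits that variable (and for $\limp$L your subcase (b) is vacuous, since $ft$ is an application and can never be the target term); both points are subsumed by the bookkeeping you already flag.
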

 \begin{proof}
   By induction on the original derivation.
 \end{proof}

  \textit{Determinacy}\index{determinacy}
  states that if $t\eval v$ and $t\eval w$ hold,
  then $v$ and $w$ are identical.
  Since our evaluation is given to possibly multiple terms at the same
  time, it is easier to prove a more general version.
  \begin{theorem}[General Determinacy of the Amida calculus]
   If
$
    t_0\eval v_0\hmid t_1\eval v_1\hmid \cdots \hmid t_n\eval v_n
$
   and
$
    t_0\eval w_0\hmid t_1\eval w_1\hmid \cdots \hmid t_n\eval w_n
$
   hold, then each $v_i$ is identical to $w_i$.
  \end{theorem}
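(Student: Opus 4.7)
The plan is to proceed by induction on the height of the first evaluation derivation, with case analysis on its last rule, and to strengthen the statement so that the second derivation need only match the first up to a permutation of components (via the exchange rule on evaluation sequences). This way, the exchange case becomes trivial: it permutes positions while preserving the pairing of each $t_i$ with its value.

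For each rule whose conclusion's term has a distinctive outermost shape -- such as $\ast$, $\lambda x.t$, $t\otimes u$, $\lpair{t,u}$, $\inl{t}$, $\inr{u}$, $\letin{t}{x\otimes y}{u}$, $\letin{t}{\lpair{x,\_}}{u}$, $\letin{t}{\lpair{\_,y}}{u}$, $\mat{t}{x}{u}{y}{u'}$, $tu$, and $\ign{t}{u}$ -- the step is routine. The outermost shape forces the second derivation's last rule to be the same one (possibly after some exchanges), the premises evaluate syntactically identical sub-terms, and the induction hypothesis applied to each premise yields equal canonical forms. The additive destructors ($\mat{\cdot}{x}{u}{y}{u'}$ and the two $\lpair{\cdot,\cdot}$ let-forms) require one extra sub-argument: determinacy for the sub-derivation evaluating the scrutinee ensures that both derivations choose the same $\inl$/$\inr$ or first/second branch, so they then agree on the continuation.

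The Merge rule calls for an evaluation-side analogue of Proposition~\ref{split}: any derivation of $\hypere\hmid\hypere'$ whose halves are channel-disjoint decomposes into independent derivations of $\hypere$ and $\hypere'$, so the induction hypothesis applies to each half. The most delicate case is the channel rule, which produces $ct\eval w\hmid \co c u\eval v$ from $t\eval v\hmid u\eval w$. Its side condition forces each channel name to occur uniquely in the evaluation sequence, so the occurrences of $c$ and $\co c$ in the conclusion pin down exactly which two components must have been paired, and hence which sub-terms $t$ and $u$ were evaluated in the premise. The induction hypothesis then fixes the values obtained for $t$ and $u$, and since the rule merely swaps them between the two positions, the canonical forms at those positions must coincide in both derivations.

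The main obstacle I anticipate is that the structural rules (Exchange and Merge) can be interleaved with the substantive rules in quite different ways in two derivations of the same conclusion, so one must argue that this scheduling does not affect the final values. Strengthening the statement to be stable under permutation of components, together with the splitting lemma for evaluation sequences, is designed precisely to neutralise this difficulty and reduce the remainder of the proof to a standard, shape-driven case analysis on the last rule.
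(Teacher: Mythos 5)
Your proposal follows essentially the same route as the paper's (very terse) proof: induction on the height of the evaluation derivation, using that each component's outermost term shape determines the applicable rule and that the freshness condition on channels makes the order of decomposing components irrelevant. The details you add — stability under permutation of components, a splitting lemma for Merge, and the observation that determinacy of the scrutinee resolves the apparent ambiguity of the two match rules — are exactly the points the paper's one-line argument leaves implicit.
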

  \begin{proof}
   By induction on the height of evaluation derivation.
   Each component in the conclusion has only one applicable rule.
   Also, the order of decomposing different components is irrelevant
   (the crucial
   condition is freshness of $c$ and $\co c$ in Figure~\ref{fig:eval}).
  \end{proof}

  \textit{Convergence}\index{convergence} would state that whenever a closed
  term~$t$ is typed $\tr\tj
  t\phi$, then an evaluation $t\eval v$ is also derivable for some
  canonical form~$v$.
  It is a desirable property so that
  Abramsky~\cite{abramsky1993computational}
  proves it for LF, but
  there are counter examples against
  convergence of the Amida calculus.
Consider a typed term $\tr\tj{c(\co c(\inl \ast))}{\one\oplus\one}$
 with no evaluation.  One explanation for the lack of evaluation is
  deadlock.  This illustrates that the current form of Amida calculus
  lacks
  deadlock-freedom.
  In order to avoid the deadlock and to evaluate this closed term,
  we can add the following eval-subst rule:
\[
  \AxiomC{$\hypere\hmid t\eval v\hmid u[v/x]\eval w$}
  \LL{eval-subst}
  \UnaryInfC{$ \hypere\hmid u[t/x] \eval w $}
  \DisplayProof,
\]
which enables an evaluation
   {$c(\co c (\inl\ast))\eval \inl\ast$}\enspace.
   Moreover,
   the eval-subst rule enables an
   evaluation $\co c[C[cv]]\eval v$, which
   reminds us of the call-with-current-continuation
   primitive~\cite{rees1986} and shift/reset primitives~\cite{danvy1990,asai2007}.
  However, adding the eval-subst rule breaks the current proof of
  Theorem~\ref{safety} (safety), but with some modifications,
  the safety property can possibly be proved.
  The main difficulty in proving the safety property
  can be seen in the form of eval-subst rule.
  When we only know the conclusion of an eval-subst occurrence,
  there are many possible assumptions involving free variables,
  all of which we must consider
  if we are to prove the type safety.

  \section{Session Types and Processes as Abbreviations}
  \label{sec:session-process}

    In order to see the usefulness of the communication primitives,
    we try implementing a process calculus and a session type system
    using the Amida calculus.

    \paragraph{Session Types as Abbreviations}

    As an abbreviation, we introduce \textit{session
    types}\index{type!session}\index{session type|see{type}}.
    Session types~\cite{interaction,honda-session} can specify a communication
    protocol over a channel.
    The following definitions and the descriptions are modification from
    Wadler's translations and descriptions of
    session types~\cite{wadler2012propositions}.  The notation here is
    different from the
    original notation by Takeuchi, Honda and Kubo~\cite{interaction}.
    \begin{align*}
     \sendtype\phi\psi&\equiv \phi\limp\psi &\text{
     output a value of $\phi$ then behave as $\psi$} \\
     \recvtype\phi\psi&\equiv \phi\otimes\psi &\text{input a value of
     $\phi$ then behave as $\psi$}\\
     \oplus\{l_i\colon \phi_i\}_{i\in I} &\equiv {\phi_0}\with
     \cdots \with {\phi_n}, \quad I = \{0,\ldots,n\} & \text{select from
     $\phi_i$ with label $l_i$}\\
     \with\{l_i\colon \phi_i\}_{i\in I} &\equiv {\phi_0}\oplus
     \cdots \oplus {\phi_n}, \quad I = \{0,\ldots,n\}& \text{offer choice of
     $\phi_i$ with label $l_i$}
     \\
     \terminate &\equiv \one &\text{terminator}
    \end{align*}
    where $I$ is a finite downward-closed set of natural numbers like
    $\{0,1,2,3\}$.
    As Wadler~\cite{wadler2012propositions} notes, the encoding looks
    opposite of what some would expect, but as
    Wadler~\cite{wadler2012propositions} explains, we are
    typing channels instead of processes.

    The grammar\hfill
$
     \phi,\psi ::= \terminate\mid X \mid \sendtype\phi\psi \mid
     \recvtype\phi\psi
     \mid \oplus\{l_i\colon\phi_i\}_{i\in I}
     \mid \with\{l_i\colon\phi_i\}_{i\in I}
$
    covers all types.
    A linear type ($\phi^\sim$ possibly with subscript) is generated by
    this grammar:
    \[
     \phi^\sim ::= \terminate\mid
     \sendtype{\psi}{\phi^\sim} \mid
     \recvtype{\psi}{\phi^\sim}
     \mid \oplus\{l_i\colon\phi^\sim_i\}_{i\in I}
     \mid \with\{l_i\colon\phi^\sim_i\}_{i\in I}
    \]

    We define duals of linear types.
    Again the definition is almost the
    same as Wadler's~\cite{wadler2012propositions} except that
    $\terminate$ is self-dual.
    \begin{align*}
     \overline{\sendtype\psi{\phi^\sim}}&= \,\recvtype\psi{\overline{\phi^\sim}}&
     \overline{\recvtype\psi{\phi^\sim}}&= \,\sendtype\psi{\overline{\phi^\sim}}\\
     \overline{\oplus\{l_i\colon \phi^\sim_i\}_{i\in I}} &=
     \with\{l_i\colon \overline{\phi^\sim_i}\}_{i\in I} &
     \overline{\with\{l_i\colon \phi^\sim_i\}_{i\in I}} &=
     \oplus\{l_i\colon \overline{\phi^\sim_i}\}_{i\in I} \\
     \overline{\terminate} &= \terminate\enspace.
    \end{align*}

    \paragraph{Processes as Abbreviations}
    We define the sending and receiving constructs of process calculi as
    abbreviations:
    \begin{align*}
     \sendterm x u t &\equiv t[(xu) /x] &\text{send $u$ through channel
     $x$ and then use $x$ in $t$} \\
     \recvterm x y t &\equiv \letin x {y\otimes x} t & \text{receive
     $y$ through channel $x$ and use $x$ and $y$ in $t$} \\
     0 &\equiv \ast &\text{do nothing}
    \end{align*}
    We have to be careful about substitution combined with process
    abbreviations.
    For example, $(\sendterm x u t)[s/x]$ is not $\sendterm s u t$
    because the latter is not defined.  Following the definition,
    $(\sendterm x u t)[s/x]$ is actually $(t[xu/x])[s/x] = t[su/x]$.
    We are going to introduce the name restriction $\nu x.t$ after
    implementing channels.

    Below, we are going to justify these abbreviations statically and
    dynamically.

    \paragraph{Process Typing Rules as Abbreviations}
    The session type abbreviation and the processes abbreviation allow
    us to use the typing rules in the next proposition.
     \begin{proposition}[Process Typing Rules: senders and receivers]
      \label{typing_process}
      These rules are admissible.
       \begin{center}
	\small
      \AxiomC{$\hypert\hmid\tj{y}{\psi}, \tj{x}{\chi}\tr\tj{t}{\phi}$}
	\LL{\rm recv}
      \UnaryInfC{$\hypert\hmid\tj{x}{\recvtype\psi\chi}\tr\tj{\recvterm x y
      t}{\phi}$}
      \DisplayProof
      \hfill
      \AxiomC{$\hypert\hmid \G,\tj{x}{\chi}\tr\tj{t}{\phi}\hmid
	\D\tr\tj{u}\psi$}
	\LL{\rm send}
      \UnaryInfC{$\hypert\hmid \G,\D,\tj{x}{\sendtype
      \psi\chi}\tr\tj{\sendterm x u t}{\phi}$}
      \DisplayProof
      \hfill
      \AxiomC{$\hypert\hmid \G\tr\tj{t}{\phi}$}
	\LL{\rm end}
      \UnaryInfC{$\hypert\hmid \G,\tj{x}{\terminate}\tr\tj{\ign x
      t}{\phi}$}
      \DisplayProof
	\hfill
	\AxiomC{}
	\UnaryInfC{$ \tr\tj 0 \one $}
	\DisplayProof\hfill\phantom{hoge}
       \end{center}
     \end{proposition}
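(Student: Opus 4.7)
The plan is to show each of the four claimed rules is admissible by unfolding the session-type and process abbreviations and recognizing each unfolded rule as a direct instance of a core Amida calculus typing rule from Figure~\ref{fig:exchange:rules}. Specifically, unfolding $\recvtype\psi\chi \equiv \psi\otimes\chi$ and $\recvterm x y t \equiv \letin x {y\otimes x} t$ turns the \emph{recv} rule into the $\otimes$L rule applied with outer let-variable $x$ and pattern $y\otimes x$ (the outer $x$ naming the pair and the pattern $x$ being a fresh shadowing binding for the $\chi$-component); unfolding $\terminate \equiv \one$ and $\ign x t \equiv \letin x \ast t$ makes the \emph{end} rule literally equal to $\one$L; and the atomic judgment $\tr\tj 0 \one$ is just $\one$R under the abbreviation $0 \equiv \ast$.

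The only case that needs any care is \emph{send}. After unfolding $\sendtype\psi\chi \equiv \psi\limp\chi$ and $\sendterm x u t \equiv t[(xu)/x]$, the rule should match $\limp$L with the function variable identified with $x$ (now of the new type $\psi\limp\chi$). The subtlety is that $x$ plays two roles in the notation: it is the new channel variable appearing in the conclusion's context, and it is also the old variable of type $\chi$ inside $t$ that is replaced by substitution. I would handle this by first alpha-renaming the old $x$ in the premise to a fresh $x'$, obtaining the equivalent premise $\hypert\hmid \G,\tj{x'}{\chi}\tr\tj{t[x'/x]}{\phi}\hmid \D\tr\tj{u}\psi$, then applying $\limp$L with function variable $x$ of type $\psi\limp\chi$, argument premise $\D\tr\tj{u}\psi$, and body premise $\G,\tj{x'}{\chi}\tr\tj{t[x'/x]}{\phi}$. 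The resulting term $t[x'/x][(xu)/x']$ is precisely $t[(xu)/x]$, i.e. the abbreviation $\sendterm x u t$, and its type remains $\phi$.

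The main (mild) obstacle is keeping the variable bookkeeping straight in the \emph{send} case, where the convention of reusing the channel name $x$ after a communication step hides an alpha-renaming step behind the abbreviation; analogous shadowing arises in \emph{recv} but is already absorbed into the pattern notation of $\otimes$L. Once that bookkeeping is settled, each rule of the proposition reduces to a single application of a core typing rule, and no induction is needed.
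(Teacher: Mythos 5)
Your proposal is correct and matches the paper's approach: the paper's proof is simply ``Immediate,'' i.e.\ each rule is obtained by unfolding the session-type and process abbreviations and reading off an instance of $\otimes$L, $\limp$L, $\one$L, and $\one$R respectively (plus the structural rules IE/EE to permute contexts and components in the \emph{send} case). Your explicit $\alpha$-renaming in the \emph{send} case just makes precise the variable reuse that the paper leaves implicit.
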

      \begin{proof}
       Immediate.
      \end{proof}
We note that the types of variable $x$
     change in
     the rules.  This reflects the intuition of session types: the
     session type of a channel changes after some communication occurs
     through the channel.

    \begin{example}[Typed communicating terms]
     \label{ex:typed-processes}
     Using Theorem~\ref{typing_process}, we can type processes.
     Figure~\ref{fig:typed-process} contains one process, which sends a channel $y$ through $x$ and then
     waits for input in a channel~$y'$.
      \begin{figure}
       \centering
       \AxiomC{}
       \LL{Ax}
       \UnaryInfC{$\tj{z}{\two} \tr\tj z \two$}
       \LL{end}
       \UnaryInfC{$\tj{z}{\two},\tj{y}{\terminate}\tr\tj{\ign y z}\two$}
       \LL{end}
       \UnaryInfC{$\tj{z}{\two},\tj{x}{\terminate},
       \tj{y}{\terminate}\tr\tj{\ign {x,y} z}\two$}
       \LL{recv}
       \UnaryInfC{$\tj{x}{\terminate},\tj{y}{\recvtype{(\two)}\terminate}\tr
       \tj{\recvterm y z
       {\ign {x,y}  z}}{\two}$}
       \AxiomC{}
       \LL{Ax}
       \UnaryInfC{$\tj{y'}{\sendtype{(\two)}{\terminate}}\tr\tj{y'}{\sendtype{(\two)}\terminate}$}
       \LL{send}
       \BinaryInfC{$\tj{y}{\recvtype{(\two)}\terminate},\tj{x}{\sendtype{(\sendtype
       {(\two)} \terminate)}\terminate}, \tj{y'}{\sendtype {(\two)}
       \terminate}\tr\tj
       {\sendterm x y{\recvterm {y'}z{\ign {x,y} z}}}\two$}
       \DisplayProof
       \caption{A typed process.}
       \label{fig:typed-process}
      \end{figure}
     Here is another process that takes an input~$w'$ from channel~$x'$, where
     the input $w'$ itself is expected to be a channel.
     After receiving $w'$, the process puts $\inl{\ast}$ in $w'$.
      \begin{center}
       \AxiomC{}
       \LL{$\one$R}
       \UnaryInfC{$\tr\tj\ast\one$}
       \LL{end}
       \UnaryInfC{$\tj{w'}\terminate\tr\tj{\ign{w'}\ast}\one$}
       \AxiomC{}
       \LL{$\one$R}
       \UnaryInfC{$\tr\tj\ast\one$}
       \LL{$\oplus$R}
       \UnaryInfC{$\tr\tj{\inl{\ast}}{\two}$}
       \LL{send}
       \BinaryInfC{$\tj{w'}{\sendtype{(\two)}\terminate}\tr\tj{\sendterm{w'}{\inl\ast}{\ign
       {w'}\ast}}{\one}$}
       \LL{end}
       \UnaryInfC{$\tj{w'}{\sendtype{(\two)}\terminate},\tj{x'}\terminate\tr\tj{\ign{x'}{\sendterm{w'}{\inl\ast}{\ign{w'}\ast}}}{\one}$}
       \LL{recv}
       \UnaryInfC{$\tj{x'}{\recvtype{(\sendtype{(\two)}\terminate)}\terminate}\tr\tj{
       \recvterm{x'}{w'}{\ign{x'}{\sendterm{w'}{\inl\ast}{\ign{w'}\ast}}}}\one$}
       \DisplayProof
      \end{center}
    \end{example}

    \paragraph{Implementing Channels.}
    We introduced primitives $c$ and $\co c$ implementing
    the behavior specified by
    $(\one\limp\phi)\otimes(\phi\limp\one)$.
    These primitives can be seen as channels of session types
    $\recvtype\phi\terminate$ and $\sendtype\phi\terminate$.
    Indeed, $\recvtype\phi\terminate$ is $\phi\otimes\one$ (which is
    inter-derivable with $\one\limp\phi$) and $\sendtype\phi\terminate$ is
    $\phi\limp \one$.
    We can generalize this phenomenon to the more complicated session
    types\footnote{This is impossible using the ordinary linear types.}.
     \begin{proposition}[Session realizers]
      For any linear type~$\phi^\sim$\kern -2pt, the hypersequent
      $
       \tr\tj{t}{\phi^\sim}\hmid \tr\tj{u}{\overline{\phi^\sim}}
      $
      is derivable for some terms $t$ and $u$.
     \end{proposition}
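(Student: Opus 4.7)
The plan is to proceed by induction on the structure of the linear type $\phi^\sim$. The base case $\phi^\sim = \terminate$ is immediate: take $t = u = \ast$, derive each singleton $\tr\tj{\ast}{\one}$ by $\one$R, and combine the two with Merge.

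For the send case $\phi^\sim = \sendtype{\psi}{\chi^\sim}$ (with dual $\recvtype{\psi}{\overline{\chi^\sim}} = \psi\otimes\overline{\chi^\sim}$), I would first apply the induction hypothesis to $\chi^\sim$ to obtain $\tr\tj{t_0}{\chi^\sim}\hmid\tr\tj{u_0}{\overline{\chi^\sim}}$. Merging this with an axiom instance $\tj{x}{\psi}\tr\tj{x}{\psi}$ and rearranging components with EE yields a three-component hypersequent. Applying Sync with a fresh channel $c$ between the $\chi^\sim$-component and the $\psi$-axiom swaps their types to produce $\tr\tj{ct_0}{\psi}\hmid\tj{x}{\psi}\tr\tj{\co cx}{\chi^\sim}\hmid\tr\tj{u_0}{\overline{\chi^\sim}}$. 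Then $\limp$R abstracts $x$ to form $\tj{\lambda x.\co cx}{\psi\limp\chi^\sim}$, and $\otimes$R combines the remaining two components into $\tj{(ct_0)\otimes u_0}{\psi\otimes\overline{\chi^\sim}}$. The receive case is symmetric, obtained by swapping the roles of $t$ and $u$.

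The branching cases $\phi^\sim = \oplus\{l_i\colon\phi_i^\sim\}$ and $\phi^\sim = \with\{l_i\colon\phi_i^\sim\}$ are symmetric to one another. In either case, the encoding yields a linear-logic $\with$-type on one side of the hypersequent and an $\oplus$-type on the other. The $\oplus$-side is straightforward: pick a branch index $i_0$, apply IH on $\phi_{i_0}^\sim$ to extract a realizer, and use $\oplus$R to inject it into the full $\oplus$-type. For the $\with$-side, one must iteratively apply $\with$R to pair realizers for each branch into a nested $\lpair{t_0,\ldots,t_n}$; since $\with$R is restricted to singleton hypersequents with a common context, the plan is to use Cut to internalize each IH realizer's dual component so that the premises of $\with$R are in singleton form. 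This branching case is the main obstacle: arranging the IH realizers for the different $\phi_i^\sim$'s so that $\with$R's singleton-hypersequent restriction is respected at every step requires careful orchestration of Cuts, and is the most delicate part of the induction.
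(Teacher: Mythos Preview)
Your treatment of the $\terminate$, $\sendtype{}{}$, and $\recvtype{}{}$ cases is essentially identical to the paper's: the same Merge of an axiom with the IH hypersequent, a single application of Sync with a fresh channel, then $\limp$R and $\otimes$R. Only the orientation of $c$ versus $\co c$ differs.

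In the branching cases you correctly identify the obstacle (the singleton restriction on $\with$R), but your proposed workaround does not go through. The Cut rule removes a component by substituting its term for a \emph{free variable} occurring in the context of another component; the IH components $\tr\tj{t_i}{\phi_i^\sim}$ have empty left-hand side, so there is no variable to cut against, and hence no way to ``internalize'' the dual component into a singleton premise for $\with$R. Choosing a single index $i_0$ on the $\oplus$-side while still needing every $t_i$ on the $\with$-side makes this worse: the unused dual components $u_i$ for $i\neq i_0$ have nowhere to go.

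The paper takes a different route here and does not try to land on a two-component hypersequent in the additive step. For each $i$ it applies $\oplus$R to the dual side of the IH, obtaining $\tr\tj{t_i}{\phi_i}\hmid\tr\tj{i(u_i)}{\oplus_j\overline{\phi_j}}$, and then combines all $|I|$ of these, ending with one $\with$-typed component $\tr\tj{\lpair{t_i}_{i\in I}}{\with_i\phi_i}$ alongside $|I|$ separate $\oplus$-typed dual components (one injection per branch). So the paper keeps every $i(u_i)$ rather than selecting a single $i_0$.
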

      \begin{proof}
       Induction on $\phi^\sim$.
       \begin{description}
	\item[(end)] \AxiomC{} \LL{Ax }\UnaryInfC{$\tr\tj\ast\one$}
	     \AxiomC{} \LL{Ax} \UnaryInfC{$\tr\tj\ast\one$}
	     \LL{Merge}
	     \BinaryInfC{$\tr\tj\ast\one\hmid\tr\tj\ast\one$}
	     \DisplayProof is what we seek.
	\item[($\sendtype{\psi}{\phi^\sim}$)]
	     By the induction hypothesis,
	     $\tr\tj{t'}{\phi^\sim}\hmid \tr \tj{u'}{\overline{\phi^\sim}}$ is
	     derivable.  Using this, we can make the following
	     derivation:
	      \begin{center}
	      \AxiomC{}
	       \LL{Ax}
	       \UnaryInfC{$\tj{x}{\psi}\tr\tj{x}{\psi}$}
	       \AxiomC{IH}
	       \noLine
	       \UnaryInfC{$\tr\tj{t'}{\phi^\sim}\hmid \tr
	       \tj{u'}{\overline{\phi^\sim}}$}
	       \LL{Merge}
	       \BinaryInfC{$\tj{x}{\psi}\tr\tj{x}{\psi}\hmid \tr\tj{t'}{\phi^\sim}\hmid \tr
	       \tj{u'}{\overline{\phi^\sim}}$}
	       \LL{Sync}
	       \UnaryInfC{$\tj{cx}{\phi^\sim}\tr\tj{x}{\psi}\hmid
	       \tr\tj{\co ct'}{\psi}\hmid \tr
	       \tj{u'}{\overline{\phi^\sim}}$}
	       \LL{$\otimes$R}
	       \UnaryInfC{$\tj{x}{\psi}\tr\tj{cx}{\phi^\sim}\hmid
	       \tr\tj{(\co ct')\otimes u'}{\psi\otimes \overline{\phi^\sim}}$}
	       \UnaryInfC{$\tr\tj{\lambda x.cx}{\psi\limp\phi^\sim}\hmid
	       \tr\tj{(\co ct')\otimes u'}{\psi\otimes \overline{\phi^\sim}}$}
	       \DisplayProof\enspace.
	      \end{center}
	\item[($\recvtype\psi\phi$)]
	     Symmetric to the above.
	\item[($\oplus\{l_i\colon \phi_i\}$)]
	     By the induction hypothesis,
	     for each $i\in I$, we have
		   $
	      \tr\tj{t_i}{{\phi_i}}\hmid \tr\tj{u_i}{{\overline{\phi_i}}}
	     $
	     derived.  Hence derivable is
		   $
	      \tr\tj{t_i}{{\phi_i}}\hmid \tr\tj{i(u_i)}{\oplus_{j\in
	     I}
	     {\overline{\phi_j}}}
	     $
	     where $i(u_i)$ is an appropriate nesting of $\inl\cdot$,
	     $\inr\cdot$ and $u_i$.
	     Combining $|I|$ such derivations, we can derive
		   $
	     \tr\tj{\tuple{t_i}_{i\in I}}{\with_{i\in I}{\phi_i}}
	     \hmid
	     \tuple{\tr\tj{i(u_i)}{\oplus_{j\in
	     I}{\overline{\phi_j}}}}_{i\in I}
	    $
	     for a fresh natural number~$n$.
	\item[($\with\{l_i\colon \phi_i\}$)]
	     Symmetric to above.\qedhere
       \end{description}
      \end{proof}
      We call the above pair $t,u$ in the statement the \textit{session
      realizers}\index{session realizer} of $\phi^\sim$ and
      denote them by $\leftside{\phi^\sim}, \rightside{\phi^\sim}$.
      Moreover, we use $\bothside{\phi^\sim}$ to denote the pair
      $\leftside{\phi^\sim}\otimes\rightside{\phi^\sim}$.
      So far, a free variable with a linear type represented a channel
      serving the corresponding session type.
      Now, we can substitute the free variables with the session
      realizers so that the typed processes can actually communicate.
      If we have two terms that use free variables of type $\phi^\sim$ and
      $\overline{\phi^\sim}$,
      we can replace those free variables by session realizers.
       \begin{corollary}[Binding both ends of a channel]
	If
$
	 \hypert\hmid \G,\tj{x}{\phi^\sim}\tr\tj{t}{\psi}\hmid
	\D,\tj{y}{\overline{\phi^\sim}}\tr\tj{u}{\theta}
$
	is derivable, then
$
	\hypert\hmid \G\tr\tj{t[\leftside{\phi^\sim}/ x]}{\psi}
	\hmid \D\tr\tj{u[\rightside{\phi^\sim}/ y]}{\theta}
$
	is also derivable.
       \end{corollary}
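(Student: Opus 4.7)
The plan is to invoke the session realizers proposition to obtain a derivation of the two-component hypersequent $\tr\tj{\leftside{\phi^\sim}}{\phi^\sim}\hmid \tr\tj{\rightside{\phi^\sim}}{\overline{\phi^\sim}}$, Merge it with the assumed derivation, and then apply Cut twice: once to substitute $\leftside{\phi^\sim}$ for $x$ in $t$, and once to substitute $\rightside{\phi^\sim}$ for $y$ in $u$. The two substitutions are independent, because the realizers sit in separate components and each free variable to be replaced sits in a different component as well.

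Concretely, I would first construct the realizer derivation and, before combining anything, alpha-rename the channels that Sync introduces inside $\leftside{\phi^\sim}$ and $\rightside{\phi^\sim}$ so that they are fresh with respect to every channel occurring in the assumed derivation. Applying Merge produces the four-component hypersequent
$\hypert\hmid \G,\tj{x}{\phi^\sim}\tr\tj{t}{\psi}\hmid \D,\tj{y}{\overline{\phi^\sim}}\tr\tj{u}{\theta}\hmid \tr\tj{\leftside{\phi^\sim}}{\phi^\sim}\hmid \tr\tj{\rightside{\phi^\sim}}{\overline{\phi^\sim}}$.
Using EE to bring $\tr\tj{\leftside{\phi^\sim}}{\phi^\sim}$ next to the first component, and IE to shuffle $\tj{x}{\phi^\sim}$ to the head of its context, one Cut yields
$\hypert\hmid \G\tr\tj{t[\leftside{\phi^\sim}/x]}{\psi}\hmid \D,\tj{y}{\overline{\phi^\sim}}\tr\tj{u}{\theta}\hmid \tr\tj{\rightside{\phi^\sim}}{\overline{\phi^\sim}}$.
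A symmetric second Cut, again preceded by EE and IE, replaces $y$ by $\rightside{\phi^\sim}$ in $u$. A final application of EE reorders the two surviving components into exactly the form demanded by the statement.

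The only delicate point is the uniqueness side condition on Sync. Because $\leftside{\phi^\sim}$ and $\rightside{\phi^\sim}$ share the channels introduced by the Sync instances used in their construction, those channels must not already occur in $\hypert$, $t$, or $u$, or else the combined derivation would re-introduce channels that were already bound. This is the main obstacle, but I expect it to be routine: once we agree that Sync-introduced channels behave like bound names and may be alpha-renamed, the remaining argument is entirely structural, consisting only of Merge, EE, IE, and two uses of Cut.
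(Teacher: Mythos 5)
Your proposal is correct and is exactly the argument the paper leaves implicit: the corollary is stated as an immediate consequence of the Session realizers proposition, obtained by Merging the realizer hypersequent $\tr\tj{\leftside{\phi^\sim}}{\phi^\sim}\hmid\tr\tj{\rightside{\phi^\sim}}{\overline{\phi^\sim}}$ with the given derivation and applying Cut twice after rearranging with EE and IE. Your remark on choosing the Sync-introduced channels of the realizers fresh with respect to the ambient derivation is the right (and only) side condition to check, and it is discharged by the countably infinite supply of channels.
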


       Now we can define the name restriction operator as an abbreviation:
       \[
	\nu \tj{x}{\phi^\sim}. t \equiv
	\letin{\bothside{\phi^\sim}}{x_L\otimes x_R} t
       \]
       where we assume injections $x\mapsto x_L$ and $x\mapsto x_R$
       whose images are disjoint.

       Then, in addition to Theorem~\ref{typing_process},
       more typing rules are available.
	\begin{proposition}[Process typing rule: name restriction]
	 \label{typing_connection}
	 The following typing rule is admissible.
	  \begin{center}
	   \AxiomC   {$\hypert\hmid
	   \G,\tj{x}{\phi^\sim},\tj{y}{\overline{\phi^\sim}}\tr\tj t \psi$}
	   \UnaryInfC{$\hypert\hmid
	   \G\tr\tj{\nu\tj{x}{\phi^\sim}.t[x_L/x][x_R/y]}{\psi}$}
	   \DisplayProof
	  \end{center}
	\end{proposition}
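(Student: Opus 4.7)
The plan is to unfold the abbreviation $\nu\tj{x}{\phi^\sim}.t \equiv \letin{\bothside{\phi^\sim}}{x_L\otimes x_R}t$ and construct the derivation by combining the Session Realizers proposition with the $\otimes$L and Cut rules applied to the given premise. No new ingenuity is required beyond assembling these pieces.

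First, I would start from the assumed derivation of
\[
\hypert\hmid \G,\tj{x}{\phi^\sim},\tj{y}{\overline{\phi^\sim}}\tr\tj{t}{\psi}.
\]
Since variable renaming preserves derivability, I may replace $x$ by $x_L$ and $y$ by $x_R$ (using the disjointness of the two injections) to obtain
$\hypert\hmid \G,\tj{x_L}{\phi^\sim},\tj{x_R}{\overline{\phi^\sim}}\tr\tj{t[x_L/x][x_R/y]}{\psi}$. One application of $\otimes$L collapses the two hypotheses into a single hypothesis $\tj{z}{\phi^\sim\otimes\overline{\phi^\sim}}$, yielding
\[
\hypert\hmid \G,\tj{z}{\phi^\sim\otimes\overline{\phi^\sim}}\tr\tj{\letin{z}{x_L\otimes x_R}{t[x_L/x][x_R/y]}}{\psi}.
\]

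Second, I would invoke the Session Realizers proposition, which gives the derivable hypersequent
$\tr\tj{\leftside{\phi^\sim}}{\phi^\sim}\hmid \tr\tj{\rightside{\phi^\sim}}{\overline{\phi^\sim}}$.
A single $\otimes$R step merges these components into
$\tr\tj{\leftside{\phi^\sim}\otimes\rightside{\phi^\sim}}{\phi^\sim\otimes\overline{\phi^\sim}}$,
which is by definition $\tr\tj{\bothside{\phi^\sim}}{\phi^\sim\otimes\overline{\phi^\sim}}$.

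Third, I would Merge this with the hypersequent obtained in the first step, possibly using IE to bring $\tj{z}{\phi^\sim\otimes\overline{\phi^\sim}}$ to the leftmost position of its context so that the Cut rule applies in the stated form. One Cut on $z$ with cut-formula $\phi^\sim\otimes\overline{\phi^\sim}$ substitutes $\bothside{\phi^\sim}$ for $z$, producing
\[
\hypert\hmid \G\tr\tj{\letin{\bothside{\phi^\sim}}{x_L\otimes x_R}{t[x_L/x][x_R/y]}}{\psi},
\]
and by the definition of $\nu$ the term in the succedent is exactly $\nu\tj{x}{\phi^\sim}.t[x_L/x][x_R/y]$, as required.

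The only mildly delicate point is checking that the side conditions of Merge are satisfied, i.e., that the session realizers $\leftside{\phi^\sim}$ and $\rightside{\phi^\sim}$ use channels disjoint from those occurring in $\hypert$ and in $t$; but channels introduced by Sync are always fresh, so after an $\alpha$-like renaming of the internal channels used in constructing the realizers this causes no problem. Everything else is routine rule application.
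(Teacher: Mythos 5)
Your construction is correct and is exactly the argument the paper intends: the paper states this proposition without writing out a proof, but the preceding definition of $\nu\tj{x}{\phi^\sim}.t$ as $\letin{\bothside{\phi^\sim}}{x_L\otimes x_R}{t}$ together with the Session Realizers proposition makes clear that the derivation is obtained by renaming $x,y$ to $x_L,x_R$, applying $\otimes$L, forming $\tr\tj{\bothside{\phi^\sim}}{\phi^\sim\otimes\overline{\phi^\sim}}$ by $\otimes$R from the realizers, and cutting. Your attention to the freshness of the channels introduced by Sync inside the realizers, and to the IE/EE bookkeeping needed to put the cut variable and components in the positions the rules demand, covers the only non-routine points.
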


	\begin{example}[Connecting processes using session realizers]
	 Using the session realizers, we can connect the processes typed
	 in Example~\ref{ex:typed-processes}.  Indeed,
	 \begin{align*}
	  \tr &
	  \nu(\tj{x}{\recvtype{(\sendtype{(\two)}\terminate)}\terminate}).
	  \nu(\tj{y}{\sendtype{(\two)}\terminate}).
	  \\ & {\left(
	 \sendterm {x_R}{y_L}{\recvterm {y_R} z {\ign{x_R,y_R}z}}
	 \right)}
	 \otimes
	  \left(
	 \recvterm{x_L}{w'}{\ign{x_L}{\sendterm{w'}{\inl\ast}{\ign
	  {w'}\ast}}}
	  \right)
	 \colon{(\two)\otimes\one}
	 \end{align*}
	 is derivable.
	\end{example}
    Now we have to check the evaluation of the term in this example.
    For that we prepare a lemma.

    \paragraph{Process Evaluation as Abbreviation.}

    The intention of defining $\sendterm x u {t_0}$ and $\recvterm y z {t_1}$
    is mimicking communication in process calculi.
    When we substitute $x$ and $y$ with session type realizers,
    these terms can actually communicate.

    The next lemma can help us evaluate session realizers.

  \begin{lemma}
   \label{processtype}
   Let $t_0$ be a term containing a free variable $x$ and
   $t_1$ be a term containing free variables
   $y$ and $z$.
   The rule
   \begin{center}
    \AxiomC{$\hypere\hmid \leftside{\phi^\sim}\eval v'\hmid
    \rightside{\phi^\sim}\eval w'\hmid t_0[v'/x]\eval v\hmid u\eval
    u'\hmid
    t_1[u'/z][w'/y]\eval w$}
    \UnaryInfC{
    $\hypere\hmid
    \leftside{\sendtype\psi{\phi^\sim}}\eval \lambda x.cx
    \hmid
    \rightside{\sendtype\psi{\phi^\sim}}\eval u'\otimes w'
    \hmid 
    (\sendterm x u {t_0})[\lambda x.cx / x]\eval v
    \hmid
    (\recvterm y z {t_1})[u'\otimes w'/y] \eval w
    $}
    \DisplayProof
   \end{center}
   is admissible under presence of the eval-subst rule.
  \end{lemma}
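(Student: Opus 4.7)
The plan is to exhibit a single evaluation derivation whose conclusion is the required hypersequent; the heart of the argument is one application of the channel-synchronization rule, surrounded by otherwise routine congruence steps. Write $t_\phi \equiv \leftside{\phi^\sim}$ and $u_\phi \equiv \rightside{\phi^\sim}$. From the construction of session realizers in the previous proposition, $\leftside{\sendtype\psi{\phi^\sim}} = \lambda x.cx$ and $\rightside{\sendtype\psi{\phi^\sim}} = (\co c\, t_\phi)\otimes u_\phi$ for the fresh channel $c$ associated with this sendtype. As a preparatory step I would unfold the process abbreviations: using linearity of $x$ in $t_0$ (so that $x$ does not occur in $u$) and $\alpha$-renaming of the pattern-bound $y$ in the receive macro, one sees that $(\sendterm x u {t_0})[\lambda x.cx/x] \equiv t_0[((\lambda x.cx)\,u)/x]$ and $(\recvterm y z {t_1})[u'\otimes w'/y] \equiv \letin{u'\otimes w'}{z\otimes y}{t_1}$.

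The crucial synchronization step is to apply the evaluation channel rule to the two components $u'\eval u'$ (canonical forms evaluate to themselves, by an easy induction on shape) and $t_\phi \eval v'$ (from the premises), producing the fragment $cu' \eval v' \hmid \co c t_\phi \eval u'$. From these I would assemble the four target components. First, $\lambda x.cx \eval \lambda x.cx$ is immediate from the $\lambda$-axiom. Second, $(\co c t_\phi)\otimes u_\phi \eval u'\otimes w'$ follows by one application of the $\otimes$-evaluation rule, combining $\co c t_\phi \eval u'$ from the synchronization with the premise $u_\phi \eval w'$. Third, for the send side, I would apply the $\lambda$-application evaluation rule to $\lambda x.cx \eval \lambda x.cx$, the premise $u\eval u'$, and $cu'\eval v'$ from the synchronization, obtaining $(\lambda x.cx)\,u \eval v'$; combining with the premise $t_0[v'/x]\eval v$ via the eval-subst rule then yields $t_0[((\lambda x.cx)\,u)/x]\eval v$, which is the unfolded send abbreviation. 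Fourth, for the receive side, the let-pair evaluation rule applied to $u'\otimes w' \eval u'\otimes w'$ and the premise $t_1[u'/z][w'/y]\eval w$ yields $\letin{u'\otimes w'}{z\otimes y}{t_1}\eval w$, which is the unfolded receive abbreviation.

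Finally, I would splice these four components and the ambient $\hypere$ into a single evaluation hypersequent using Merge and the component-swap rule, reordering as needed to match the conclusion. The main obstacle will be bookkeeping around freshness: the channel rule introduces $c$ and $\co c$ and demands that neither appears in the surrounding evaluation sequence or in the side terms at the point of firing, while the Merge rule for evaluation forbids joining hypersequents in which one contains $c$ and the other $\co c$. Since in the construction $c$ appears only in the send-side chain and $\co c$ only in the realizer chain leading to $\rightside{\sendtype\psi{\phi^\sim}}$, choosing $c$ fresh relative to $\hypere$ and to every term in play, and sequencing the synchronization step before any Merge that would straddle the two chains, discharges all side conditions. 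No induction on term structure is required; the lemma is a direct derivation-level construction.
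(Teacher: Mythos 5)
Your proposal is correct and follows essentially the same route as the paper's proof (Figure~\ref{fig:processtype}): unfold the send/receive and realizer abbreviations, fire the channel-evaluation rule once to obtain $cu\eval v'\hmid \co c(\leftside{\phi^\sim})\eval u'$, close the send side with the application rule plus eval-subst, the realizer side with the $\otimes$ rule, and the receive side with the let-pair rule, then Merge in $\lambda x.cx\eval\lambda x.cx$. Your variant of feeding $u'\eval u'$ rather than $u\eval u'$ into the synchronization step is only a presentational difference.
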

  \begin{proof} By the derivation in Figure~\ref{fig:processtype}.
    \begin{figure}
     \scriptsize
     \centering
    \AxiomC{}
    \UnaryInfC{$\lambda x.cx\eval \lambda x.cx$}
    \AxiomC{}
    \UnaryInfC{$\lambda x.cx\eval \lambda x.cx$}
    \AxiomC{$\hypere\hmid \leftside{\phi^\sim}\eval v'\hmid
    \rightside{\phi^\sim}\eval w'\hmid t_0[v'/x]\eval v\hmid u\eval
    u'\hmid
    t_1[u'/z][w'/y]\eval w$}
     \doubleLine
    \UnaryInfC{$\hypere\hmid \leftside{\phi^\sim}\eval v'
    \hmid
    \rightside{\phi^\sim}\eval w'\hmid t_0[v'/x]\eval v\hmid u\eval
    u'\hmid \letin{u'\otimes w'}{z\otimes y}{t_1}\eval w$}
    \BinaryInfC{
    $\hypere\hmid \leftside{\phi^\sim}\eval v'
    \hmid
    \rightside{\phi^\sim}\eval w'\hmid t_0[v'/x]\eval v\hmid \lambda
    x.cx\eval \lambda x.cx\hmid u\eval
    u'\hmid \letin{u'\otimes w'}{z\otimes y}{t_1}\eval w$
    }
     \UnaryInfC{
     $
     \hypere\hmid \co c(\leftside{\phi^\sim})\eval u'
    \hmid
    \rightside{\phi^\sim}\eval w'\hmid t_0[v'/x]\eval v\hmid \lambda
    x.cx\eval \lambda x.cx\hmid cu\eval
    v'\hmid \letin{u'\otimes w'}{z\otimes y}{t_1}\eval w
     $
     }
     \UnaryInfC{
     $
     \hypere\hmid \co c(\leftside{\phi^\sim})\eval u'
    \hmid
    \rightside{\phi^\sim}\eval w'\hmid t_0[v'/x]\eval v\hmid (\lambda
    x.cx)u\eval v'\hmid \letin{u'\otimes w'}{z\otimes y}{t_1}\eval w
     $
     }
     \LL{eval-subst}
     \UnaryInfC{
     $
     \hypere\hmid \co c(\leftside{\phi^\sim})\eval u'
    \hmid
    \rightside{\phi^\sim}\eval w'\hmid t_0[(\lambda
     x.cx)u /x]\eval v\hmid \letin{u'\otimes w'}{z\otimes y}{t_1}\eval w
     $
     }
     \UnaryInfC{
     $
     \hypere\hmid
     \co c(\leftside{\phi^\sim})\otimes \rightside{\phi^\sim} \eval u' \otimes  w'
     \hmid t_0[(\lambda
    x.cx)u /x]\eval v\hmid \letin{u'\otimes w'}{z\otimes y}{t_1}\eval w
     $
     }
     \BinaryInfC{$
     \hypere\hmid \lambda x.cx \eval \lambda x.cx \hmid
     \co c(\leftside{\phi^\sim})\otimes \rightside{\phi^\sim} \eval u' \otimes  w'
     \hmid t_0[(\lambda
    x.cx)u /x]\eval v\hmid \letin{u'\otimes w'}{z\otimes y}{t_1}\eval w
     $}
    \DisplayProof
     \caption[Proof of Lemma~\ref{processtype}.]{Proof of Lemma~\ref{processtype}.    The conclusion is
     identical to our goal up to abbreviations. }
     \label{fig:processtype}
    \end{figure}
  \end{proof}

  \begin{example}[Evaluation of communicating processes]
   Here is an example of evaluation using the eval-subst rule.
   \begin{center}
    \AxiomC{}
    \UnaryInfC{$\leftside{\terminate}\eval \ast\hmid
    \rightside{\terminate}\eval \ast$}
    \AxiomC{}
    \UnaryInfC{$\ast\eval\ast$}
    \UnaryInfC{$\inl\ast\eval\inl\ast$}
    \AxiomC{}
    \UnaryInfC{$\ast\eval\ast$}
    \BinaryInfC{$\inl\ast\eval\inl\ast \hmid \ast\eval\ast$}
    \AxiomC{}
    \UnaryInfC{$\ast\eval\ast$}
    \UnaryInfC{$\inl\ast\eval\inl\ast$}
    \TrinaryInfC{$\leftside{\terminate}\eval \ast\hmid
    \rightside{\terminate}\eval \ast
    \hmid
    \inl\ast\eval\inl\ast \hmid \ast\eval\ast
    \hmid
    \inl\ast\eval\inl\ast
    $}
    \LL{$\diamondsuit$}
    \UnaryInfC{$\leftside{\sendtype{(\two)}\terminate}\eval \lambda x.cx\hmid
    \rightside{\sendtype {(\two)}\terminate}\eval \inl\ast\otimes\ast\hmid
    $}
    \noLine
    \UnaryInfC{\small
    $
    (\sendterm{x_L}{\inl\ast}{\ign {x_L}\ast})[\lambda x.cx/x_L]\eval
    \ast
    \hmid
    (\recvterm{x_R}z{\ign {x_R} z})[\inl\ast\otimes\ast/x_R]\eval\inl\ast
    $
    }
    \UnaryInfC{$\bothside{\sendtype{(\two)}\terminate}\eval
    \lambda x.cx\otimes (\inl\ast\otimes \ast) \hmid$}
    \noLine
    \UnaryInfC{\small $
    (\sendterm{x_L}{\inl\ast}{\ign {x_L}\ast})[\lambda x.cx/x_L]
    \otimes
    (\recvterm{x_R}z{\ign {x_R} z})[\inl\ast\otimes\ast/x_R]
    \eval \ast \otimes \inl\ast
    $}
    \UnaryInfC{$\nu(\tj{x}{\sendtype{(\one\oplus\one)}{\terminate}}).
    \left(\sendterm{x_L}{\inl\ast}{\ign{x_L}\ast}\right)\otimes
    \left(\recvterm{x_R}{z}{\ign {x_R}z}\right)\eval \ast\otimes\inl\ast$}
    \DisplayProof
   \end{center}
   The step $\diamondsuit$ uses Lemma~\ref{processtype}.
  \end{example}

  \begin{proposition}[Copycatting]
   For any linear type~$\phi^\sim$,
   we can derive
   $\tj{x}{\phi^\sim},\tj{y}{\overline{\phi^\sim}}\tr
   \tj{t}{\one}$
   for some term~$t$.
  \end{proposition}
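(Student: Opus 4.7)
The plan is to proceed by induction on the structure of the linear type $\phi^\sim$, in each case explicitly constructing the term $t$ that forwards every message from one channel to the other. Since the statement sits at a single component (no hypersequent aside from the singleton sequent), I can freely use the $\oplus$L and $\with$R rules, which is important for the additive cases.

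For the base case $\phi^\sim = \terminate$, recall that $\overline{\terminate} = \terminate = \one$, so the goal is $\tj{x}{\one},\tj{y}{\one}\tr\tj{t}{\one}$; take $t = \ign{x,y}\ast$, obtained by applying $\one$L twice to the axiom $\tr\tj{\ast}{\one}$. For the send case $\phi^\sim = \sendtype{\psi}{\chi^\sim}$, unfolding gives $\phi^\sim = \psi\limp\chi^\sim$ and $\overline{\phi^\sim} = \psi\otimes\overline{\chi^\sim}$. The idea is to split $y$ by $\otimes$L into $\tj{y_1}{\psi},\tj{y_2}{\overline{\chi^\sim}}$, feed $y_1$ into $x$ via $\limp$L to obtain a subterm of type $\chi^\sim$, and then invoke the IH on $\chi^\sim$ to copycat that subterm against $y_2$, connecting things with Cut. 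The receive case $\recvtype{\psi}{\chi^\sim}$ is symmetric with the roles of $x$ and $y$ swapped.

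For the selection case $\phi^\sim = \oplus\{l_i\colon \phi^\sim_i\}_{i\in I}$, the abbreviation unfolds to a right-nested tree of $\with$ on the $x$ side and a right-nested tree of $\oplus$ on the $y$ side (with all subcomponents dualised). Perform iterated $\oplus$L on $y$ to split into $|I|$ branches; in branch $i$, use the appropriate sequence of $\with$L$_0$/$\with$L$_1$ on $x$ to project out a variable of type $\phi^\sim_i$, and apply the IH on the pair (projected $x$, branch variable of type $\overline{\phi^\sim_i}$). The branching case $\with\{l_i\colon\phi^\sim_i\}_{i\in I}$ is symmetric.

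The main obstacle is the additive case: since the abbreviations $\oplus\{l_i\}_{i\in I}$ and $\with\{l_i\}_{i\in I}$ are encoded as finite nests of binary $\with$ and $\oplus$, the copycat term must be described by an auxiliary induction on $|I|$ (or, equivalently, on the nesting structure) so that the matching rule $\oplus$L and the projections $\with$L$_j$ are chained in the correct pattern and the IH is applied in each branch. The singleton-hypersequent restriction on $\with$R and $\oplus$L causes no trouble because the derivation never leaves the single-component world. Once that bookkeeping is in place, each step simply packages the recursive copycat term with the appropriate forwarding wrapper, and the resulting $t$ has the required type $\one$ by construction.
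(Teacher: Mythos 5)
Your proposal is correct and follows exactly the route the paper takes: the paper's proof is simply ``by induction on $\phi^\sim$,'' and your case analysis (two $\one$L's at the base, $\otimes$L plus $\limp$L feeding one side into the other for $\sendtype{\cdot}{\cdot}$/$\recvtype{\cdot}{\cdot}$, and nested $\oplus$L/$\with$L$_j$ for the additive abbreviations, all within a single component so the singleton restriction is harmless) is just that induction spelled out. The only cosmetic remark is that you never actually need $\with$R or $\oplus$R --- copycatting uses left rules throughout.
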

  \begin{proof}
   By induction on $\phi^\sim$.
  \end{proof}

\subsection{Correctness with Respect to Abelian Logic}
We compare the Amida calculus and Abelian logic and
discover the fact that they are identical.

\begin{theorem}[Completeness of the Amida Calculus for Abelian
 Logic]
 \label{complete-to-Abelian}
 A formula is a theorem of Abelian logic if and only if the formula
 is inhabited in the Amida calculus.
\end{theorem}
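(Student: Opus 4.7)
The plan is to establish each direction of the biconditional separately, using the remark from the introduction that Abelian logic is axiomatized by IMALL extended with the Amida axiom schema $(\phi \limp \psi) \otimes (\psi \limp \phi)$.

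For the ``only if'' direction I would define the conjunctive reading of a hypersequent as a single Abelian formula,
\[
  \semo{\G_1 \tr \phi_1 \hmid \cdots \hmid \G_n \tr \phi_n} \;\equiv\; \bigl({\textstyle\bigotimes} \G_1 \limp \phi_1\bigr) \otimes \cdots \otimes \bigl({\textstyle\bigotimes} \G_n \limp \phi_n\bigr),
\]
and show by induction on the typing derivation of a hypersequent $\hypert$ that $\semo{\hypert}$ is provable in Abelian logic. Most cases are routine IMALL manipulations: Ax gives $\phi \limp \phi$; Merge corresponds to tensoring two independent subproofs; Cut to composition of implications; and EE, IE, $\one$R, $\one$L, $\otimes$R, $\otimes$L, $\limp$R, $\limp$L, $\with$R, $\with$L$_i$, $\oplus$R$_i$ and $\oplus$L are direct translations of the analogous IMALL inferences (the singleton-hypersequent side condition on $\with$R and $\oplus$L keeps us inside the IMALL-provable fragment). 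The one case with real content is Sync, where one must derive $(\bigotimes \G \limp \psi) \otimes (\bigotimes \D \limp \phi)$ from $(\bigotimes \G \limp \phi) \otimes (\bigotimes \D \limp \psi)$; the Amida axiom $(\phi \limp \psi) \otimes (\psi \limp \phi)$, instantiated at the appropriate pair, is precisely what is needed to swap the two codomains by composing the given conjuncts with the two halves of the axiom.

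For the ``if'' direction I would inhabit every Abelian theorem by a concrete Amida term. Working from a Hilbert-style axiomatization, IMALL axioms are inhabited by the standard Abramsky-LF terms (since Figure~\ref{fig:exchange:rules} reproduces LF on the IMALL fragment), modus ponens is realized by combining $\limp$L with Cut, and the characteristic Amida axiom $(\phi \limp \psi) \otimes (\psi \limp \phi)$ is inhabited by the closed term $(\lambda x.cx)\otimes(\lambda y.\co c\, y)$ constructed in the worked example via Sync. An induction on the Hilbert derivation then produces an inhabiting term for every theorem.

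The hard part is the bridging lemma that Abelian logic, as originally presented by Casari (or in Metcalfe's disjunctive-hypersequent formulation), is indeed equivalent at the level of provable formulas to IMALL plus the Amida axiom schema; once this is in hand the two translations above are straightforward inductions, and the only step carrying genuine content is the Sync-to-Amida-axiom translation already sketched.
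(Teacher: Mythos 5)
First, note that the paper itself gives no proof of this theorem in the text: both Theorem~\ref{complete-to-Abelian} and Theorem~\ref{sound-to-abelian} are discharged with a single citation to the author's thesis, so there is no in-paper argument to compare yours against. Judged on its own terms, your outline has the right shape and the routine parts are fine: the conjunctive reading $\semo{\hypert}$ is exactly the interpretation the paper announces for hypersequents, the induction over typing derivations with the Sync case handled by composing the two conjuncts with the two halves of the Amida axiom is correct IMALL reasoning, and the converse induction over a Hilbert presentation, with $(\lambda x.cx)\otimes(\lambda y.\co c\, y)$ inhabiting the Amida axiom, is plausible. (Minor slip: your two directions are labelled backwards --- ``$\phi$ is an Abelian theorem \emph{only if} $\phi$ is inhabited'' is the direction that requires inhabiting every Abelian theorem, not the soundness induction over typing derivations.)

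The genuine gap is the one you flag yourself and then set aside: the claim that Abelian logic (in Casari's or Meyer--Slaney's sense) coincides, on provable formulas, with IMALL plus the schema $(\phi\limp\psi)\otimes(\psi\limp\phi)$. That is not a ``bridging lemma'' auxiliary to this theorem --- it essentially \emph{is} this theorem, since once it is granted the two inductions you describe are bookkeeping. Nothing in your proposal addresses either half of it. You would need (i) that the Amida axiom is an Abelian theorem, which is easy against the lattice-ordered-Abelian-group semantics, where it evaluates to $(\psi-\phi)+(\phi-\psi)=0\geq 0$; and (ii) that the characteristic Abelian axiom, e.g.\ the relativity schema $((\phi\limp\psi)\limp\psi)\limp\phi$, is derivable in IMALL plus the Amida axiom --- equivalently, inhabited in the Amida calculus, where one can exhibit $\lambda f.\,\co c(f(\lambda x.cx))$ by Ax, Ax, Merge, Sync, $\limp$R, $\limp$L, $\limp$R --- together with an argument that this suffices for the particular axiomatization of Abelian logic being used. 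Without (i) and (ii), or an explicit citation establishing the axiomatic equivalence, your argument shows only that the Amida calculus matches Hilbert-style IMALL plus the Amida axiom, not that either coincides with Abelian logic.
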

\begin{theorem}[Soundness of the Amida Calculus for Abelian
 Logic]
 \label{sound-to-abelian}
 An inhabited type in the Amida calculus is a theorem of Abelian logic.
\end{theorem}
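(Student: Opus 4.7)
The plan is to associate to every hypersequent a single formula of Abelian logic and to show by induction on the typing derivation that this formula is a theorem. Following the conjunctive reading already adopted in the paper, I interpret $\hypert = (\G_1 \tr \tj{t_1}{\phi_1} \hmid \cdots \hmid \G_n \tr \tj{t_n}{\phi_n})$ as $\semo{\hypert} = \bigotimes_{i=1}^{n} (\bigotimes \G_i \limp \phi_i)$, where the empty tensor is $\one$, so a singleton hypersequent $\tr \tj{t}{\phi}$ interprets to $\one \limp \phi$, which is equivalent to $\phi$ in Abelian logic. Proving that $\semo{\hypert}$ is a theorem whenever $\hypert$ is derivable then yields the statement, since an inhabited type $\phi$ witnessed by $\tr \tj{t}{\phi}$ becomes provably $\phi$ in Abelian logic.

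The induction proceeds rule by rule. The rules \textbf{Ax}, \textbf{Cut}, \textbf{IE}, $\one$\textbf{R}, $\one$\textbf{L}, and the introduction and elimination rules for $\otimes$, $\limp$, $\with$ and $\oplus$ are ordinary IMALL rules and translate directly, since Abelian logic extends IMALL. The purely hypersequent-level rules \textbf{Merge} and \textbf{EE} become associativity and commutativity of $\otimes$ on the interpreted side, so they are immediate.

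The only case that uses something beyond IMALL is \textbf{Sync}. Its premise interprets to $\semo{\hypert} \otimes (\bigotimes \G \limp \phi) \otimes (\bigotimes \D \limp \psi)$ and its conclusion to $\semo{\hypert} \otimes (\bigotimes \G \limp \psi) \otimes (\bigotimes \D \limp \phi)$. I obtain the latter from the former by invoking the Amida axiom $(\phi \limp \psi) \otimes (\psi \limp \phi)$ instantiated at exactly the succedents of the two selected components: $\otimes$-elimination extracts arrows $h \colon \phi \limp \psi$ and $k \colon \psi \limp \phi$, then I post-compose $h$ with the hypothesised $\bigotimes \G \limp \phi$ and $k$ with $\bigotimes \D \limp \psi$, and reassemble with $\otimes$-introduction. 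All of these steps are pure IMALL manipulations, and the Amida axiom is a theorem of Abelian logic by assumption.

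The main obstacle I anticipate is not mathematical but bookkeeping: keeping $\semo{\hypert}$ aside of the two active components during the \textbf{Sync} translation, and making sure the Amida axiom is instantiated at the component succedents $\phi$ and $\psi$ rather than at the whole arrows $\bigotimes \G \limp \phi$ and $\bigotimes \D \limp \psi$. Once that is handled, the theorem reduces to the observation that every Amida typing rule has a direct propositional counterpart in Abelian logic, with the \textbf{Sync} rule precisely accounted for by the Amida axiom.
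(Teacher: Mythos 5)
Your argument is correct, and it is the natural one: interpret each hypersequent conjunctively as $\semo{\hypert}=\bigotimes_i(\bigotimes\G_i\limp\phi_i)$ and check, rule by rule, that derivability of the premise interpretation in Abelian logic yields derivability of the conclusion interpretation, with Sync discharged by composing with the two halves of the Amida axiom instantiated at the succedents. The paper itself does not print a proof of this theorem (it defers to the author's thesis), so there is nothing concrete to compare against, but your reduction is exactly what the paper's own reading of conjunctive hypersequents sets up. One point you should make explicit rather than fold into ``ordinary IMALL rules translate directly'': the additive rules $\with$R and $\oplus$L are sound under the conjunctive interpretation \emph{only because} the calculus restricts them to singleton hypersequents. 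In the general form one would need to pass from $\semo{\hypert}\otimes(\bigotimes\G\limp\phi)$ and $\semo{\hypert}\otimes(\bigotimes\G\limp\psi)$ to $\semo{\hypert}\otimes(\bigotimes\G\limp(\phi\with\psi))$, which requires sharing the multiplicative context $\semo{\hypert}$ and fails; with the singleton restriction the case is just the usual $\with$-introduction (dually for $\oplus$L). Also note that the Merge case silently uses that the tensor of two theorems is a theorem, which holds in IMALL and hence in Abelian logic; with these two remarks spelled out the induction closes.
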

\begin{proof}
Proofs appear in the author's thesis~\cite{hirai-thesis}.
\end{proof}
Now we can use
some previous literature (Meyer and Slaney~\cite{meyer-slaney-1989}
and Casari~\cite{casari1989}) to find
some facts.
\begin{corollary}[Division by two]
 If $\phi\otimes\phi$ is inhabited, so is $\phi$.
\end{corollary}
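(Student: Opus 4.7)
The plan is to pass through the Abelian-logic correspondence already established by Theorems~\ref{complete-to-Abelian} and~\ref{sound-to-abelian} and then invoke a known property of Abelian logic, which is precisely the ``divisibility'' phenomenon that motivates the name of the corollary.

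First, suppose $\phi\otimes\phi$ is inhabited in the Amida calculus, i.e.\ $\tr\tj{t}{\phi\otimes\phi}$ is derivable for some term~$t$. By soundness (Theorem~\ref{sound-to-abelian}), the formula $\phi\otimes\phi$ is a theorem of Abelian logic. At this point we would like to conclude that $\phi$ itself is already a theorem of Abelian logic; this is a standard consequence of the algebraic semantics. Abelian logic is complete with respect to lattice-ordered abelian groups (indeed, with respect to $\mathbb{Z}$ or $\mathbb{R}$ under addition), where $\otimes$ is interpreted as the group operation and provability corresponds to the formula evaluating to a nonnegative element under every valuation. In any such group, $2x\ge 0$ iff $x\ge 0$, so the validity of $\phi\otimes\phi$ transfers to the validity of $\phi$. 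This is exactly the fact recorded by Meyer and Slaney~\cite{meyer-slaney-1989} and Casari~\cite{casari1989}, to which the preceding paragraph refers.

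Once we know that $\phi$ is a theorem of Abelian logic, completeness (Theorem~\ref{complete-to-Abelian}) gives us a term~$s$ with $\tr\tj{s}{\phi}$ derivable in the Amida calculus, and the corollary is established.

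The only step that is not a direct bookkeeping exercise is the passage from validity of $\phi\otimes\phi$ to validity of $\phi$ in Abelian logic. I expect this to be the main (but modest) obstacle, since it depends on quoting the right fact from the cited literature rather than on anything internal to the Amida calculus; once that fact is in hand, the proof reduces to a two-line composition of soundness and completeness.
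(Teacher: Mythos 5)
Your proof is correct and follows exactly the route the paper intends: the paper offers no explicit argument for this corollary, merely stating that it follows from the soundness/completeness theorems together with the facts from Meyer--Slaney and Casari, which is precisely the soundness--quote-the-literature--completeness composition you spell out. Your added justification via completeness of Abelian logic over (totally) ordered abelian groups, where $2x\ge 0$ iff $x\ge 0$, correctly supplies the fact the paper leaves to the citations.
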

\begin{corollary}
 The law of excluded middle $\phi\otimes(\phi\limp \one)$ 
and prelinearity $(\phi\limp\psi)\oplus(\psi\limp\phi)$
 are inhabited
 in the Amida calculus.
\end{corollary}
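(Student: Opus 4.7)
The plan is to route both claims through Theorem~\ref{complete-to-Abelian}, which states that every theorem of Abelian logic is inhabited in the Amida calculus. It therefore suffices to check that each of the two formulas is provable in Abelian logic, which reduces the task to citing known results (Meyer-Slaney~\cite{meyer-slaney-1989}, Casari~\cite{casari1989}) where these schemata are established for Abelian logic.

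For the excluded middle $\phi\otimes(\phi\limp\one)$ I would in fact give a direct construction, since it is essentially a specialization of the Amida axiom. Instantiating $\psi := \one$ in the derivation of the Amida axiom from the Example just below Figure~\ref{fig:exchange:rules} produces an inhabitant of $(\phi\limp\one)\otimes(\one\limp\phi)$. Because $\one\limp\phi$ is linearly isomorphic to $\phi$ (via $\lambda z.\,\ign z t$ in one direction and application to $\ast$ in the other), and $\otimes$ is commutative up to isomorphism, a short piece of rewriting converts this into an inhabitant of $\phi\otimes(\phi\limp\one)$. For prelinearity $(\phi\limp\psi)\oplus(\psi\limp\phi)$, I would simply appeal to the cited literature, where prelinearity is established in Abelian logic, and then apply Theorem~\ref{complete-to-Abelian}.

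The hard part, were one to try a syntactic derivation, would be the prelinearity case: applying either $\oplus$R$_0$ or $\oplus$R$_1$ forces an early commitment to one of the two implications before any relationship between $\phi$ and $\psi$ is available, so a direct proof would have to route through the conjunctive hypersequent machinery and the Sync rule in a subtle way. Reducing to Abelian-logic provability via completeness sidesteps this construction entirely, and is the cleanest path to the corollary.
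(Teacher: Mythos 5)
Your proposal matches the paper's route exactly: the corollary is stated with no explicit proof precisely because it is meant to follow from Theorem~\ref{complete-to-Abelian} together with the cited facts that these schemata are theorems of Abelian logic (Meyer--Slaney, Casari). Your additional direct construction of $\phi\otimes(\phi\limp\one)$ from the Amida axiom with $\psi:=\one$ is a correct and pleasant bonus, and your remark that prelinearity resists a naive syntactic derivation is consistent with the paper's own observation that it has no cut-free proof.
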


\section{Related Work}

Metcalfe, Olivetti and Gabbay~\cite{metcalfe2006} gave a hypersequent calculus for Abelian logic and
proved cut-elimination theorem for the hypersequent calculus.
His formulation is different from ours because Metcalfe's system does
not use conjunctive hypersequents.
Shirahata~\cite{shirahata} studied the multiplicative fragment of
Abelian logic,
which he called CMLL (compact multiplicative linear logic).
He gave a categorical semantics for the proofs of a sequent calculus
presentation of CMLL and then proved
that the cut-elimination procedure of
the sequent calculus preserves the semantics\footnote{
Ciabattoni, Stra{\ss}burger and
Terui~\cite{expanding} already pointed out
the fact that Shirahata~\cite{shirahata} and
Metcalfe, Olivetti and Gabbay~\cite{metcalfe2002} studied the same
logic.}.

Kobayashi, Pierce and Turner~\cite{kobayashi-pierce-turner} developed a
type system for the $\pi$-calculus processes.
Similarly to the type system presented here,
their type system can specify
types of communication contents through a name and
how many times a name can be used.
In some sense,
that type system is more flexible than the one shown in this paper;
their type system allows
multiple uses of a channel, replicated processes and 
weakening~\cite[Lemma~3.2]{kobayashi-pierce-turner}.
In other respects,
the type system in \cite{kobayashi-pierce-turner} is less expressible.
That type system does not have lambda abstractions.
Also, in contrast to our type system,
it is impossible to substitute a free variable with a process in that
type system.

Caires and Pfenning~\cite{pfenning2010} provide a type system for a
fragment of $\pi$-calculus.
Their type system is, on some processes,
more restrictive than the Amida calculus.
For example, this escrowing process $P$ below is not typable in their type
system:
$
 P = \sendterm x y{\recvterm x a
 {\sendterm y a 0}}\enspace.
$
The process first emits a channel~$y$ through channel~$x$ and then
takes an input from $x$ and outputs it to $y$.
Following the informal description of types by Caires and
Pfenning~\cite{pfenning2010},
the process~$P$ should be typable as
$
 \vdash P::x:(A\limp \one)\otimes A\enspace.
$
However, such typing is not possible because $(A\limp\one)\otimes A$ is
not a theorem of dual intuitionistic linear logic (DILL), which
their type system is based on.
In our type system, the following sequent is derivable
\begin{align*}
\tj{x}{{(\recvtype{A}\terminate)}{\sendtype A \terminate}}
\tr
\tj{
\nu(\tj{y}{\recvtype A \terminate}).
{\sendterm x{y_L}{\recvterm x a {{{\sendterm
{y_R} a {\ign{x,y_L,y_R}0}}}}}}
}{\one}
\end{align*}
The resulting sequent indicates that the process is typable with one
open channel~$x$ that first emits
a channel that one can receive~$A$ from, and second sends a value of
$A$.
This concludes an example of a term which our type system can type but
the type system in Caires and Pfenning~\cite{pfenning2010} cannot.
However, we cannot judge their type system to be too restrictive because
we have not yet obtained both
type safety and deadlock-freedom of Amida calculus at the same time.

On the other hand,
the most complicated example in Caires and Pfenning~\cite{pfenning2010},
which involves a drink server, directs us towards a useful extension of
the Amida calculus.
 \begin{example}[Drink server from Caires and Pfenning~\cite{pfenning2010} in the Amida
  cal.]
  \begin{align*}
   ServerProto &= (N\limp I \limp (N\otimes \one))\with (N\limp( I
  \otimes \one)) \\
   &= (\sendtype N {\sendtype { I} {\recvtype N \terminate}}) \with
   (\sendtype N {\recvtype { I} \one})
  \end{align*}
  $N$ stands for the type of strings and $I$ stands for the type of
  integers, but following Caires and Pfenning~\cite{pfenning2010}, we identify both $N$ and
  $I$ with $\one$.  Below, $SP$ abbreviates $ServerProto$.
  Here is the process of the server, which serves one client and
  terminates.\\
  \[
   Serv = \lpair{\recvterm s {pn} {\recvterm s {cn} {\sendterm s {rc}
  {\ign {pn, cn, s} 0}}}
  ,\quad
  \recvterm s {pn} {\sendterm s {pr} {\ign{s,pn} 0}}
  }
  \]
  We can derive a sequent $\tj{s}{\overline{SP}}\tr\tj{Serv}\one$.
  Here is one client:
  \begin{center}
   \AxiomC{}
   \UnaryInfC{$\tr\tj{0}\one$}
   \UnaryInfC{$\tj{s}{\terminate}\tr\tj{\ign s 0}\one$}
   \UnaryInfC{$\tj{s}{\terminate},
   \tj{pr}{I}\tr\tj{\ign{pr,s} 0}\one$}
   \UnaryInfC{
   $ \tj{s}{\recvtype {I} {\terminate}} \tr\tj{
   \recvterm s {pr} {\ign{pr,s} 0}
   }{\one}$
   }
   \AxiomC{}
   \UnaryInfC{$ \tr\tj{tea}{N} $}
   \BinaryInfC{
   $ \tj{s}{\sendtype N {\recvtype {I} \terminate}} \tr\tj{
   \sendterm s {tea}
   {\recvterm s {pr} {\ign{pr,s} 0}}
   }{\one}$
   }
   \UnaryInfC{
   $ \tj{s}{ServerProto} \tr\tj{
   \letin s {\lpair{\_, s}} {
   \sendterm s {tea}
   {\recvterm s {pr} {\ign{pr,s} 0}}}
   }{\one}$
   }
   \DisplayProof
  \end{center}
  In words, the client first chooses the server's second protocol, which
  is price quoting, and asks the price of the tea, receives the price
  and terminates.
  We can combine the server with this client.
  However, since the Amida calculus lacks the exponential modality, Amida calculus
  cannot type any term with $!ServerProto$, which the type system of
  Caires and Pfenning can~\cite{pfenning2010}.
  In order to do that, we might want to tolerate inconsistency and
 add $\mu$ and $\nu$ operators
  from the modal $\mu$-calculus, like Baelde~\cite{mumall} did, and express
  $!ServerProto$ as $\nu X. (SeverProto\otimes X)$.
  \end{example}

Wadler~\cite{wadler2012propositions} gave a type system for a process
calculus based on classical linear logic.
Although the setting is classical, the idea is more or less the same as
Caires and Pfenning~\cite{pfenning2010}.
Wadler's type system cannot type the escrowing process above.

Giunti and Vasconcelos~\cite{giunti2010} give a type system for
$\pi$-calculus with the type preservation theorem. Their type system is
extremely similar to our type system although the motivations are different;
their motivation is process calculi
 while our motivation is computational interpretation of a logic.
 It will be worthwhile to compare their system with our type system closely.

 \section{Some Future Work and Conclusion}

\paragraph{Implementation.}
Since Abelian logic is incompatible with contraction or weakening,
straightforward implementation the Amida calculus on top of Haskell or OCaml
would not be a
good way to exploit the safety of the Amida calculus.
One promising framework on which to implement the Amida calculus is linear ML%
\footnote{There are no
publications but an implementation is available at
\texttt{https://github.com/pikatchu/LinearML}\enspace.},
whose type system is based on linear logic.
Another way is using the type level programming technique of Haskell.
Imai, Yuen and Agusa~\cite{DBLP:journals/corr/abs-1110-4163} implemented
session types on top
of Haskell using the fact that Haskell types can
contain arbitrary trees of symbols; thus we should be able to use the
same technique to encode the types of Amida calculus in the Haskell types.

\paragraph{Adding Modalities.}
A tempting extension is to add modalities representing agents
and then study the relationship with the
multiparty session types~\cite{sync-multi-session, async-multi-session}.

\paragraph{Cut-Elimination.}
It is easy to see that the prelinearity
$(\phi\limp\psi)\oplus(\psi\limp\phi)$ does not have a cut-free proof.
However,
since there is a cut-free deduction system for Abelian
logic~\cite{metcalfe2006},
we consider it natural to expect the same property for a suitable
extension of the Amida calculus.

  \paragraph{Continuations.}
  The eval-subst rule enables an
  evaluation $\co c[C[cv]]\eval v$, which
  reminds us of the call-with-current-continuation
  primitive~\cite{rees1986} and shift/reset primitives~\cite{danvy1990,asai2007}.
  The appearance of these classical type system primitives is not
  surprising because Abelian logic validates $((p\limp q)\limp q)\limp p$,
  which is a stronger form of the double negation elimination.
  Possibly we could use the technique of Asai and Kameyama~\cite{asai2007} to analyze
  the Amida calculus with eval-subst rule.

  \paragraph{Logic Programming.}
  There are at least two ways to interpret logics computationally.
  One is proof reduction, which is represented by $\lambda$-calculi.
  The other is proof searching.
  We have investigated the Amida calculus, which embodies the proof reduction
  approach to the Amida axiom.
  Then what implication does
  The Amida axiom have in
  the proof searching approach?
  Let us cite an example from Kobayashi and Yonezawa~\cite[A.2]{kobayashi-yonezawa}:
  \begin{quote}
   Consumption of a message $m$ by a process $m\limp B$ is represented by
   the following deduction:
   \[
    (m\otimes(m\limp B)\otimes C)\limp (B\otimes C)
   \]
   where $C$ can be considered as other processes and messages, or an environment.
  \end{quote}
  Using the Amida axiom,
  the inverse
  \[
   (B\otimes C)\limp (m\otimes (m\limp B))\otimes C
  \]
  is derivable.
  This suggests that the Amida axiom states that some
  computation is reversible in the context of proof searching.
  We suspect that this can be
  useful within the realm of reversible computation~\cite{revcon}.

 \paragraph{Conclusion.}
We found a new axiomatization of Abelian logic: the Amida axiom
$(\phi\limp\psi)\otimes(\psi\limp\phi)$ on top of IMALL$^-$.
The axiomatization has an application for encoding process calculi and
session type systems.
The encoding, which we name the Amida calculus,
shows extra flexibility given by the new axiom.
In the current form, the flexibility comes with the cost of convergence.
Though there is a possible way to obtain convergence by adding
a new evaluation rule, then, it is still under investigation
whether type safety is preserved.

\paragraph{Acknowledgements.}
The author thanks Tadeusz Litak for encouragements and
information on relevant research.
The author also thanks Takeuti Izumi, who asked about changing a
disjunction~$\oplus$ into a conjunction~$\otimes$ after the author
talked about
$(\phi\limp\psi)\oplus (\psi\limp\phi)$, a variant of
which is used to model asynchronous communication
in Hirai~\cite{hiraiflops2012}.
Anonymous refrees' careful comments
and the workshop participants' direct questions improved
the presentation of this paper.

\bibliographystyle{eptcs}
\bibliography{hirai-places}

\begin{thebibliography}{10}
\providecommand{\bibitemdeclare}[2]{}
\providecommand{\surnamestart}{}
\providecommand{\surnameend}{}
\providecommand{\urlprefix}{Available at }
\providecommand{\url}[1]{\texttt{#1}}
\providecommand{\href}[2]{\texttt{#2}}
\providecommand{\urlalt}[2]{\href{#1}{#2}}
\providecommand{\doi}[1]{doi:\urlalt{http://dx.doi.org/#1}{#1}}
\providecommand{\bibinfo}[2]{#2}

\bibitemdeclare{article}{abramsky1993computational}
\bibitem{abramsky1993computational}
\bibinfo{author}{S.~\surnamestart Abramsky\surnameend} (\bibinfo{year}{1993}):
  \emph{\bibinfo{title}{Computational Interpretations of Linear Logic}}.
\newblock {\sl \bibinfo{journal}{Theo. Comp. Sci.}}
  \bibinfo{volume}{111}(\bibinfo{number}{1-2}), pp. \bibinfo{pages}{3--57},
  \doi{10.1016/0304-3975(93)90181-R}.

\bibitemdeclare{inproceedings}{asai2007}
\bibitem{asai2007}
\bibinfo{author}{K.~\surnamestart Asai\surnameend} \&
  \bibinfo{author}{Y.~\surnamestart Kameyama\surnameend}
  (\bibinfo{year}{2007}): \emph{\bibinfo{title}{Polymorphic Delimited
  Continuations}}.
\newblock In: {\sl \bibinfo{booktitle}{APLAS~'07}}, {\sl
  \bibinfo{series}{LNCS}} \bibinfo{volume}{4807},
  \bibinfo{publisher}{Springer}, pp. \bibinfo{pages}{239--254},
  \doi{10.1007/978-3-540-76637-7_16}.

\bibitemdeclare{article}{avron91}
\bibitem{avron91}
\bibinfo{author}{A.~\surnamestart Avron\surnameend} (\bibinfo{year}{1991}):
  \emph{\bibinfo{title}{Hypersequents, logical consequence and intermediate
  logics for concurrency}}.
\newblock {\sl \bibinfo{journal}{Ann. Math. Artif. Intell.}}
  \bibinfo{volume}{4}, pp. \bibinfo{pages}{225--248}, \doi{10.1007/BF01531058}.

\bibitemdeclare{incollection}{avrontableau}
\bibitem{avrontableau}
\bibinfo{author}{A.~\surnamestart Avron\surnameend} (\bibinfo{year}{2000}):
  \emph{\bibinfo{title}{A Tableau System for {G}\"odel--{D}ummett Logic Based
  on a Hypersequent Calculus}}.
\newblock In: {\sl \bibinfo{booktitle}{TABLEAUX '00}}, {\sl
  \bibinfo{series}{LNCS}} \bibinfo{volume}{1847},
  \bibinfo{publisher}{Springer}, pp. \bibinfo{pages}{98--111},
  \doi{10.1007/10722086_11}.

\bibitemdeclare{article}{Baaz01122003}
\bibitem{Baaz01122003}
\bibinfo{author}{M.~\surnamestart Baaz\surnameend},
  \bibinfo{author}{A.~\surnamestart Ciabattoni\surnameend} \&
  \bibinfo{author}{C.~G. \surnamestart Ferm\"{u}ller\surnameend}
  (\bibinfo{year}{2003}): \emph{\bibinfo{title}{Hypersequent Calculi for
  {G}\"odel Logics--a Survey}}.
\newblock {\sl \bibinfo{journal}{Journal of Logic and Computation}}
  \bibinfo{volume}{13}(\bibinfo{number}{6}), pp. \bibinfo{pages}{835--861},
  \doi{10.1093/logcom/13.6.835}.

\bibitemdeclare{article}{mumall}
\bibitem{mumall}
\bibinfo{author}{D.~\surnamestart Baelde\surnameend} (\bibinfo{year}{2012}):
  \emph{\bibinfo{title}{Least and Greatest Fixed Points in Linear Logic}}.
\newblock {\sl \bibinfo{journal}{ACM Trans. Comput. Logic}}
  \bibinfo{volume}{13}(\bibinfo{number}{1}), pp. \bibinfo{pages}{2:1--2:44},
  \doi{10.1145/2071368.2071370}.

\bibitemdeclare{article}{sync-multi-session}
\bibitem{sync-multi-session}
\bibinfo{author}{A.~\surnamestart Bejleri\surnameend} \&
  \bibinfo{author}{N.~\surnamestart Yoshida\surnameend} (\bibinfo{year}{2009}):
  \emph{\bibinfo{title}{Synchronous Multiparty Session Types}}.
\newblock {\sl \bibinfo{journal}{Electronic Notes in Theoretical Computer
  Science}} \bibinfo{volume}{241}(\bibinfo{number}{0}), pp.
  \bibinfo{pages}{3--33}, \doi{10.1016/j.entcs.2009.06.002}.

\bibitemdeclare{article}{blute2004category}
\bibitem{blute2004category}
\bibinfo{author}{R.~\surnamestart Blute\surnameend} \&
  \bibinfo{author}{P.~\surnamestart Scott\surnameend} (\bibinfo{year}{2004}):
  \emph{\bibinfo{title}{Category Theory for Linear Logicians}}.
\newblock {\sl \bibinfo{journal}{Linear Logic in Computer Science}}
  \bibinfo{volume}{316}, pp. \bibinfo{pages}{3--65},
  \doi{10.1017/CBO9780511550850.002}.

\bibitemdeclare{incollection}{pfenning2010}
\bibitem{pfenning2010}
\bibinfo{author}{L.~\surnamestart Caires\surnameend} \&
  \bibinfo{author}{F.~\surnamestart Pfenning\surnameend}
  (\bibinfo{year}{2010}): \emph{\bibinfo{title}{Session Types as Intuitionistic
  Linear Propositions}}.
\newblock In \bibinfo{editor}{P.~\surnamestart Gastin\surnameend} \&
  \bibinfo{editor}{F.~\surnamestart Laroussinie\surnameend}, editors: {\sl
  \bibinfo{booktitle}{CONCUR~2010}}, {\sl \bibinfo{series}{LNCS}}
  \bibinfo{volume}{6269}, \bibinfo{publisher}{Springer}, pp.
  \bibinfo{pages}{222--236}, \doi{10.1007/978-3-642-15375-4_16}.

\bibitemdeclare{incollection}{casari1989}
\bibitem{casari1989}
\bibinfo{author}{E.~\surnamestart Casari\surnameend} (\bibinfo{year}{1989}):
  \emph{\bibinfo{title}{Comparative Logics and {A}belian {$l$}-groups}}.
\newblock In \bibinfo{editor}{R.~\surnamestart Ferro\surnameend},
  \bibinfo{editor}{C.~\surnamestart Bonotto\surnameend},
  \bibinfo{editor}{S.~\surnamestart Valentini\surnameend} \&
  \bibinfo{editor}{A.~\surnamestart Zanardo\surnameend}, editors: {\sl
  \bibinfo{booktitle}{Logic Colloquium '88}}, {\sl \bibinfo{series}{Studies in
  logic and the foundations of mathematics}} \bibinfo{volume}{127},
  \bibinfo{publisher}{North-Holland}, pp. \bibinfo{pages}{161--190},
  \doi{10.1016/S0049-237X(08)70269-6}.

\bibitemdeclare{inproceedings}{expanding}
\bibitem{expanding}
\bibinfo{author}{A.~\surnamestart Ciabattoni\surnameend},
  \bibinfo{author}{L.~\surnamestart Stra{\ss}burger\surnameend} \&
  \bibinfo{author}{K.~\surnamestart Terui\surnameend} (\bibinfo{year}{2009}):
  \emph{\bibinfo{title}{Expanding the Realm of Systematic Proof Theory}}.
\newblock In: {\sl \bibinfo{booktitle}{CSL}}, pp. \bibinfo{pages}{163--178},
  \doi{10.1007/978-3-642-04027-6_14}.

\bibitemdeclare{inproceedings}{danvy1990}
\bibitem{danvy1990}
\bibinfo{author}{O.~\surnamestart Danvy\surnameend} \&
  \bibinfo{author}{A.~\surnamestart Filinski\surnameend}
  (\bibinfo{year}{1990}): \emph{\bibinfo{title}{Abstracting control}}.
\newblock In: {\sl \bibinfo{booktitle}{Proceedings of the 1990 ACM conference
  on LISP and functional programming}}, \bibinfo{series}{LFP '90},
  \bibinfo{publisher}{ACM}, pp. \bibinfo{pages}{151--160},
  \doi{10.1145/91556.91622}.

\bibitemdeclare{book}{residuated}
\bibitem{residuated}
\bibinfo{author}{N.~\surnamestart Galatos\surnameend},
  \bibinfo{author}{P.~\surnamestart Jipsen\surnameend},
  \bibinfo{author}{T.~\surnamestart Kowalski\surnameend} \&
  \bibinfo{author}{H.~\surnamestart Ono\surnameend} (\bibinfo{year}{2007}):
  \emph{\bibinfo{title}{Residuated Lattices: An Algebraic Glimpse at
  Substructural Logics}}, \bibinfo{edition}{1st} edition.
\newblock {\sl \bibinfo{series}{Studies in Logic and The Foundations of
  Mathematics}} \bibinfo{volume}{151}, \bibinfo{publisher}{Elsevier}.

\bibitemdeclare{incollection}{giunti2010}
\bibitem{giunti2010}
\bibinfo{author}{M.~\surnamestart Giunti\surnameend} \& \bibinfo{author}{V.~T.
  \surnamestart Vasconcelos\surnameend} (\bibinfo{year}{2010}):
  \emph{\bibinfo{title}{A Linear Account of Session Types in the Pi Calculus}}.
\newblock In: {\sl \bibinfo{booktitle}{CONCUR~2010}}, \bibinfo{series}{LNCS},
  \bibinfo{publisher}{Springer}, pp. \bibinfo{pages}{432--446},
  \doi{10.1007/978-3-642-15375-4_30}.

\bibitemdeclare{incollection}{hiraiflops2012}
\bibitem{hiraiflops2012}
\bibinfo{author}{Y.~\surnamestart Hirai\surnameend} (\bibinfo{year}{2012}):
  \emph{\bibinfo{title}{A Lambda Calculus for {G\"{o}del-Dummett} Logic
  Capturing Waitfreedom}}.
\newblock In \bibinfo{editor}{T.~\surnamestart Schrijvers\surnameend} \&
  \bibinfo{editor}{P.~\surnamestart Thiemann\surnameend}, editors: {\sl
  \bibinfo{booktitle}{Functional and Logic Programming}}, {\sl
  \bibinfo{series}{LNCS}} \bibinfo{volume}{7294},
  \bibinfo{publisher}{Springer}, pp. \bibinfo{pages}{151--165},
  \doi{10.1007/978-3-642-29822-6_14}.

\bibitemdeclare{phdthesis}{hirai-thesis}
\bibitem{hirai-thesis}
\bibinfo{author}{Y.~\surnamestart Hirai\surnameend} (\bibinfo{year}{2013}):
  \emph{\bibinfo{title}{Hyper-Lambda Calculi}}.
\newblock Ph.D. thesis, \bibinfo{school}{Univ. of Tokyo}.

\bibitemdeclare{incollection}{honda-session}
\bibitem{honda-session}
\bibinfo{author}{K.~\surnamestart Honda\surnameend} (\bibinfo{year}{1993}):
  \emph{\bibinfo{title}{Types for Dyadic Interaction}}.
\newblock In: {\sl \bibinfo{booktitle}{CONCUR~'93}}, {\sl
  \bibinfo{series}{LNCS}} \bibinfo{volume}{715}, \bibinfo{publisher}{Springer},
  pp. \bibinfo{pages}{509--523}, \doi{10.1007/3-540-57208-2_35}.

\bibitemdeclare{inproceedings}{async-multi-session}
\bibitem{async-multi-session}
\bibinfo{author}{K.~\surnamestart Honda\surnameend},
  \bibinfo{author}{N.~\surnamestart Yoshida\surnameend} \&
  \bibinfo{author}{M.~\surnamestart Carbone\surnameend} (\bibinfo{year}{2008}):
  \emph{\bibinfo{title}{Multiparty Asynchronous Session Types}}.
\newblock In: {\sl \bibinfo{booktitle}{POPL '08}}, \bibinfo{series}{POPL '08},
  \bibinfo{publisher}{ACM}, pp. \bibinfo{pages}{273--284},
  \doi{10.1145/1328438.1328472}.

\bibitemdeclare{inproceedings}{DBLP:journals/corr/abs-1110-4163}
\bibitem{DBLP:journals/corr/abs-1110-4163}
\bibinfo{author}{K.~\surnamestart Imai\surnameend},
  \bibinfo{author}{S.~\surnamestart Yuen\surnameend} \&
  \bibinfo{author}{K.~\surnamestart Agusa\surnameend} (\bibinfo{year}{2010}):
  \emph{\bibinfo{title}{Session Type Inference in {H}askell}}.
\newblock In: {\sl \bibinfo{booktitle}{PLACES}}, pp. \bibinfo{pages}{74--91},
  \doi{10.4204/EPTCS.69.6}.

\bibitemdeclare{article}{kobayashi-pierce-turner}
\bibitem{kobayashi-pierce-turner}
\bibinfo{author}{N.~\surnamestart Kobayashi\surnameend}, \bibinfo{author}{B.~C.
  \surnamestart Pierce\surnameend} \& \bibinfo{author}{D.~N. \surnamestart
  Turner\surnameend} (\bibinfo{year}{1999}): \emph{\bibinfo{title}{Linearity
  and the pi-calculus}}.
\newblock {\sl \bibinfo{journal}{ACM Trans. Program. Lang. Syst.}}
  \bibinfo{volume}{21}(\bibinfo{number}{5}), pp. \bibinfo{pages}{914--947},
  \doi{10.1145/237721.237804}.

\bibitemdeclare{incollection}{kobayashi-yonezawa}
\bibitem{kobayashi-yonezawa}
\bibinfo{author}{N.~\surnamestart Kobayashi\surnameend} \&
  \bibinfo{author}{A.~\surnamestart Yonezawa\surnameend}
  (\bibinfo{year}{1995}): \emph{\bibinfo{title}{Higher-Order Concurrent Linear
  Logic Programming}}.
\newblock In: {\sl \bibinfo{booktitle}{Theory and Practice of Parallel
  Programming}}, {\sl \bibinfo{series}{LNCS}} \bibinfo{volume}{907},
  \bibinfo{publisher}{Springer}, pp. \bibinfo{pages}{137--166},
  \doi{10.1007/BFb0026568}.

\bibitemdeclare{techreport}{lamarche2008}
\bibitem{lamarche2008}
\bibinfo{author}{F.~\surnamestart Lamarche\surnameend} (\bibinfo{year}{2008}):
  \emph{\bibinfo{title}{Proof Nets for Intuitionistic Linear Logic: essential
  nets}}.
\newblock \bibinfo{type}{Technical Report}, \bibinfo{institution}{Loria \&
  INRIA-Lorraine}.

\bibitemdeclare{article}{marlow2010haskell}
\bibitem{marlow2010haskell}
\bibinfo{author}{S.~\surnamestart Marlow\surnameend} et~al.
  (\bibinfo{year}{2010}): \emph{\bibinfo{title}{Haskell 2010 Language Report}}.
\newblock {\sl \bibinfo{journal}{Available online
  http://www.haskell.org/onlinereport/haskell2010}}.

\bibitemdeclare{article}{metcalfe2006}
\bibitem{metcalfe2006}
\bibinfo{author}{G.~\surnamestart Metcalfe\surnameend} (\bibinfo{year}{2006}):
  \emph{\bibinfo{title}{Proof Theory for {C}asari's Comparative Logics}}.
\newblock {\sl \bibinfo{journal}{Journal of Logic and Computation}}
  \bibinfo{volume}{16}(\bibinfo{number}{4}), pp. \bibinfo{pages}{405--422},
  \doi{10.1093/logcom/exl001}.

\bibitemdeclare{incollection}{metcalfe2002}
\bibitem{metcalfe2002}
\bibinfo{author}{G.~\surnamestart Metcalfe\surnameend},
  \bibinfo{author}{N.~\surnamestart Olivetti\surnameend} \&
  \bibinfo{author}{D.~\surnamestart Gabbay\surnameend} (\bibinfo{year}{2002}):
  \emph{\bibinfo{title}{Analytic Sequent Calculi for {A}belian and
  {{\L}}ukasiewicz Logics}}.
\newblock In \bibinfo{editor}{U.~\surnamestart Egly\surnameend} \&
  \bibinfo{editor}{C.~\surnamestart Ferm\"{u}ller\surnameend}, editors: {\sl
  \bibinfo{booktitle}{TABLEAUX 2002}}, {\sl \bibinfo{series}{LNCS}}
  \bibinfo{volume}{2381}, \bibinfo{publisher}{Springer}, pp.
  \bibinfo{pages}{191--205}, \doi{10.1007/3-540-45616-3_14}.

\bibitemdeclare{incollection}{meyer-slaney-1989}
\bibitem{meyer-slaney-1989}
\bibinfo{author}{R.~K. \surnamestart Meyer\surnameend} \&
  \bibinfo{author}{J.~K. \surnamestart Slaney\surnameend}
  (\bibinfo{year}{1989}): \emph{\bibinfo{title}{Abelian Logic (from {A} to
  {Z})}}.
\newblock In \bibinfo{editor}{G.~\surnamestart Priest\surnameend},
  \bibinfo{editor}{R.~\surnamestart Richard\surnameend} \&
  \bibinfo{editor}{J.~\surnamestart Norman\surnameend}, editors: {\sl
  \bibinfo{booktitle}{Paraconsistent Logic: Essays on the Inconsistent}},
  chapter~\bibinfo{chapter}{IX}, \bibinfo{publisher}{Philosophia Verlag}, pp.
  \bibinfo{pages}{245--288}.

\bibitemdeclare{article}{milner1978}
\bibitem{milner1978}
\bibinfo{author}{R.~\surnamestart Milner\surnameend} (\bibinfo{year}{1978}):
  \emph{\bibinfo{title}{A Theory of Type Polymorphism in Programming}}.
\newblock {\sl \bibinfo{journal}{Journal of Computer and System Sciences}}
  \bibinfo{volume}{17}(\bibinfo{number}{3}), pp. \bibinfo{pages}{348--375},
  \doi{10.1016/0022-0000(78)90014-4}.

\bibitemdeclare{book}{milner1999communicating}
\bibitem{milner1999communicating}
\bibinfo{author}{R.~\surnamestart Milner\surnameend} (\bibinfo{year}{1999}):
  \emph{\bibinfo{title}{Communicating and Mobile Systems: the pi-calculus}}.
\newblock \bibinfo{publisher}{Cambridge University Press}.

\bibitemdeclare{book}{milner1997definition}
\bibitem{milner1997definition}
\bibinfo{author}{R.~\surnamestart Milner\surnameend},
  \bibinfo{author}{M.~\surnamestart Tofte\surnameend},
  \bibinfo{author}{R.~\surnamestart Harper\surnameend} \&
  \bibinfo{author}{D.~\surnamestart MacQueen\surnameend}
  (\bibinfo{year}{1997}): \emph{\bibinfo{title}{The definition of Standard
  {ML}}}.
\newblock \bibinfo{publisher}{MIT press}.

\bibitemdeclare{article}{murawski2003}
\bibitem{murawski2003}
\bibinfo{author}{A.~S. \surnamestart Murawski\surnameend} \&
  \bibinfo{author}{C.~H.~L. \surnamestart Ong\surnameend}
  (\bibinfo{year}{2003}): \emph{\bibinfo{title}{Exhausting Strategies, Joker
  Games and Full Completeness for {IMLL} with Unit}}.
\newblock {\sl \bibinfo{journal}{Theoretical Computer Science}}
  \bibinfo{volume}{294}(\bibinfo{number}{1-2}), pp. \bibinfo{pages}{269--305},
  \doi{10.1016/S0304-3975(01)00244-4}.

\bibitemdeclare{article}{rees1986}
\bibitem{rees1986}
\bibinfo{author}{J.~\surnamestart Rees\surnameend} \&
  \bibinfo{author}{W.~\surnamestart Clinger\surnameend} (\bibinfo{year}{1986}):
  \emph{\bibinfo{title}{Revised Report on the Algorithmic Language {S}cheme}}.
\newblock {\sl \bibinfo{journal}{SIGPLAN Not.}}
  \bibinfo{volume}{21}(\bibinfo{number}{12}), pp. \bibinfo{pages}{37--79},
  \doi{10.1145/15042.15043}.

\bibitemdeclare{unpublished}{shirahata}
\bibitem{shirahata}
\bibinfo{author}{M.~\surnamestart Shirahata\surnameend}:
  \emph{\bibinfo{title}{A Sequent Calculus for Compact Closed Categories}}.

\bibitemdeclare{article}{interaction}
\bibitem{interaction}
\bibinfo{author}{K.~\surnamestart Takeuchi\surnameend},
  \bibinfo{author}{K.~\surnamestart Honda\surnameend} \&
  \bibinfo{author}{M.~\surnamestart Kubo\surnameend}: \emph{\bibinfo{title}{An
  Interaction-Based Language and its Typing System}}.
\newblock {\sl \bibinfo{journal}{PARLE'94 Parallel Architectures and Languages
  Europe}}, pp. \bibinfo{pages}{398--413}, \doi{10.1007/3-540-58184-7_118}.

\bibitemdeclare{incollection}{revcon}
\bibitem{revcon}
\bibinfo{author}{T.~\surnamestart Toffoli\surnameend} (\bibinfo{year}{1980}):
  \emph{\bibinfo{title}{Reversible Computing}}.
\newblock In \bibinfo{editor}{J.~\surnamestart Bakker\surnameend} \&
  \bibinfo{editor}{J.~\surnamestart Leeuwen\surnameend}, editors: {\sl
  \bibinfo{booktitle}{Automata, Languages and Programming}}, {\sl
  \bibinfo{series}{LNCS}}~\bibinfo{volume}{85}, \bibinfo{publisher}{Springer},
  pp. \bibinfo{pages}{632--644}, \doi{10.1007/3-540-10003-2_104}.

\bibitemdeclare{inproceedings}{wadler2012propositions}
\bibitem{wadler2012propositions}
\bibinfo{author}{P.~\surnamestart Wadler\surnameend} (\bibinfo{year}{2012}):
  \emph{\bibinfo{title}{Propositions as Sessions}}.
\newblock In: {\sl \bibinfo{booktitle}{Proceedings of the 17th ACM SIGPLAN
  International Conference on Functional Programming}}, \bibinfo{series}{ICFP
  '12}, \bibinfo{publisher}{ACM}, pp. \bibinfo{pages}{273--286},
  \doi{10.1145/2398856.2364568}.

\end{thebibliography}

\appendix
 \section{Categorical Considerations}
 One might want to ask whether we can model the logic with
 a symmetric monoidal closed
 category~\cite{blute2004category}
   with identified isomorphisms
 $\sigma_{ABCD}\colon (A\limp B)\otimes (C\limp D)\rightarrow (A\limp D) \otimes
  (C\limp B)$, with naturality conditions.
  Before considering equality among morphisms,
  we know there is a non-trivial example.
   \begin{example}[Integer Model of {\cite[p.~107]{residuated}}]
    \label{smcc}
    The preorder formed by objects as integers and morphisms as the usual
    order among integers~$\le$
    forms a symmetric monoidal closed category with swaps
    when we interpret $\otimes$ as addition and
    $m \limp n$ as $n-m$.
   \end{example}
   On the other hand,
   if we take another formulation requiring natural isomorphisms
   $\mathcal C(C,A)\times\mathcal C(D,B) \cong \mathcal C(D,A)\times
   \mathcal C{(C,B)}$,
   only singletons can be preorder models because $\tuple{id_A,id_B}$ is
   mapped to $\tuple{f,g}$ where $f\colon A\rightarrow B$ and $g\colon
   B\rightarrow A$ for any two objects $A$ and $B$.

 A straightforward reading of evaluation rules gives somewhat complicated
 equality conditions for morphisms.
 The condition says the following diagram commutes:
 \[
    \begin{CD}
     (A\limp B)\otimes (C\limp D) @>{d_{ABCD}}>> (A\otimes
      C)\limp(B\otimes D)\\
     @VV{\sigma_{ABCD}}V @VV{id\limp s_{B,D}}V\\
     (A\limp D)\otimes (C\limp B) @>{d_{ADCB}}>> (A\otimes C)\limp (D\otimes B)
    \end{CD}
 \]
  where $d_{ABCD}$ is induced by adjunction between $\otimes$ and $\limp$
  from a morphism
  $((A\limp B)\otimes (C\limp D))\otimes (A\otimes C)\rightarrow
  (B\otimes D)$, which is provided by symmetric monoidal closed
  properties.

  Moreover, since $\phi^\ast \equiv \phi\limp\one$
  has derivable sequents $\one \tr \phi^\ast \otimes \phi$
  and $\phi\otimes\phi^\ast\tr \one$,
  we can expect the semantics of $\phi^\ast$ to be the dual object of
  that of $\phi$.
  Indeed, checking one of the coherence condition of compact closedness
  is evaluating the below typed term
  \begin{center}
   \AxiomC{$\tj t \phi$}
   \AxiomC{}
   \UnaryInfC{$\tr\tj{\ast}\one$}
   \BinaryInfC{$\tr \tj t \phi \hmid \tr \tj \ast \one$}
   \AxiomC{}
   \UnaryInfC{$\tj x \phi \tr \tj x \phi$}
   \AxiomC{}
   \UnaryInfC{$\tr \tj \ast \one$}
   \BinaryInfC{$\tj x \phi \tr \tj{cx}\one\hmid \tr \tj {\co c \ast}
   \phi$}
   \UnaryInfC{$\tj x \phi \tr \tj{cx} \one \hmid \tj{z}\one\tr \tj{\ign z
   {\co c \ast}}\phi$}
   \BinaryInfC{$\tr\tj t \phi\hmid \tj x \phi \tr \tj{cx}\one \hmid \tr
   \tj{\ign \ast {\co c \ast}}\phi$}
   \UnaryInfC{$\tr\tj{ct}\one \hmid \ign \ast {\co c \ast}\phi$}
   \UnaryInfC{$\tr\tj{ct}\one \hmid \tj y \one \tr \tj {\ign y {\ign \ast
   {\co c \ast}}} \phi$}
   \UnaryInfC{$\tr \tj{\ign {ct}{\ign \ast {\co c \ast}}}\phi$}
   \DisplayProof\enspace .
  \end{center}
  At least, if $t \eval v$ is derivable, $\ign {ct} {\ign \ast {\co c
 \ast}}\eval v$ is also derivable.
  \begin{center}
   \AxiomC{$t\eval v$}
   \AxiomC{}
   \UnaryInfC{$\ast\eval \ast$}
   \BinaryInfC{$t\eval v\hmid \ast\eval \ast$}
   \UnaryInfC{$ct \eval \ast\hmid \co c\ast \eval v$}
   \AxiomC{}
   \UnaryInfC{$\ast\eval \ast$}
   \BinaryInfC{$ct \eval \ast \hmid \ign \ast {\co c \ast} \eval v$}
   \UnaryInfC{$\ign {ct} {\ign \ast {\co c \ast}}\eval v$}
   \DisplayProof\enspace.
  \end{center}
  Showing the other direction is more
  involved because of eval-subst rule, but the author expects the case
  analysis on possible substitutions will succeed.

\section{Proof Nets}
\label{sec:proofnets}

Toward better understanding the Amida calculus, a technique called proof nets
seems promising.
Generally, proof nets are straightforward for the multiplicative
fragments but complicated when additive and exponential connectives are
involved.
Since the Amida axiom $(\phi\limp\psi)\otimes(\psi\limp\phi)$
does not contain additives ($\with,\oplus$) or exponentials ($!,?$),
we can focus on the multiplicative connectives ($\limp$ and $\otimes$).
The fragment is called \textit{IMLL} (intuitionistic multiplicative fragment of
linear logic).  We also use the unit $\one$ for technical reasons.
We first describe the IMLL proof nets and their properties.
Then we add a new kind of edges called the Amida edges, which
characterizes Abelian logic.
The Amida links are named after the Amida lottery (also known as the
Ghost Leg) for the syntactic similarity.

\subsection{IMLL Essential Nets}

The proof nets for intuitionistic linear logics are called \textit{essential
nets}\index{essential net}\index{net!essential}.
This subsection reviews some known results about the essential
nets\index{essential net} for
intuitionistic multiplicative linear logic\index{logic!intuitionistic
multiplicative linear}\index{intuitionistic multiplicative linear logic|see{logic}}
(IMLL\index{IMLL|see{logic,
intuitionistic multiplicative linear}}). The exposition here is strongly
influenced by Murawski and Ong~\cite{murawski2003}.

We can translate
a polarity $p\in\{+,-\}$ and an IMLL formula~$\phi$ into
a \textit{polarized MLL formula}\index{formula!polarized
MLL}\index{polarized MLL formula|see{formula}} $\imtrans{\phi}^p$
following Lamarche~\cite{lamarche2008} and Murawski and Ong~\cite{murawski2003}.
We omit the definition of polarized MLL formulae because the whole grammar is
exposed in the translation below:
\begin{align*}
 \imtrans{\one}^+ = \one^+ \qquad & \imtrans{\one}^- = \bot^- \\
 \imtrans{X}^+ = X^+      \qquad & \imtrans{X}^- = X^- \\
 \imtrans{\phi\limp\psi}^+ = \imtrans{\phi}^-\parr^+\imtrans{\psi}^+
 \qquad & \imtrans{\phi\limp\psi}^- = \imtrans{\phi}^+ \otimes^-
 \imtrans{\psi}^- \\
 \imtrans{\phi\otimes\psi}^+ = \imtrans{\phi}^+\otimes^+\imtrans{\psi}^+
 \qquad & \imtrans{\phi\otimes\psi}^- =
 \imtrans{\phi}^- \parr^- \imtrans{\psi}^-\enspace.
\end{align*}

For example, the Amida axiom can be translated into a polarized MLL formula
\begin{align*}
   &\imtrans{(X\limp Y)\otimes (Y\limp X)}^+ \\
 =\, &\imtrans{X\limp Y}^+ \otimes^+ \imtrans{Y\limp X}^+ \\
 =\, &\left(\imtrans{X}^- \parr^+ \imtrans{Y}^+ \right) \otimes^+
    \left(\imtrans{Y}^- \parr^+ \imtrans{X}^+ \right) \\
 =\, &\left({X}^- \parr^+ {Y}^+ \right) \otimes^+
    \left({Y}^- \parr^+ {X}^+ \right)\enspace.
\end{align*}
The symbol $\parr$ is pronounced ``parr.''

Any polarized MLL formula can be translated further into
a finite rooted tree containing these branches and polarized atomic formulae
($X^-, X^+, \one^+,\bot^-$) at the leaves.
 \begin{center}
  \includegraphics[scale=0.3]{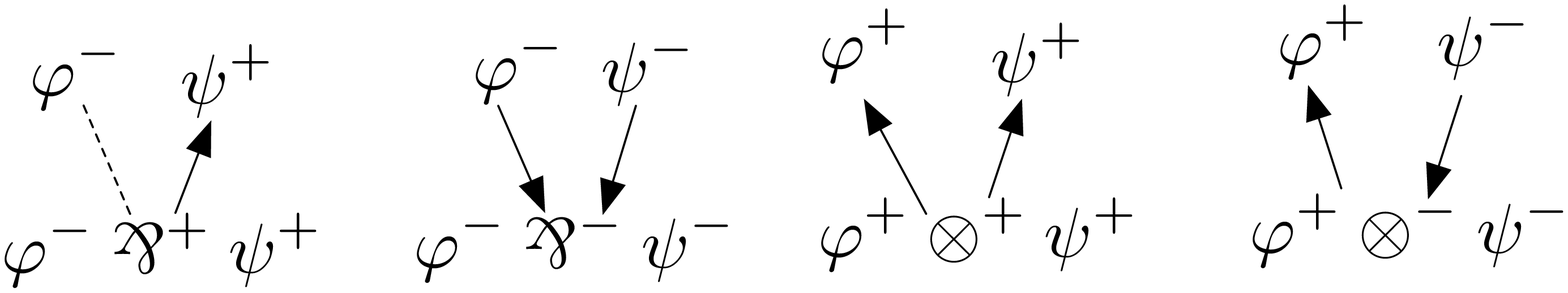}
 \end{center}
For brevity, we sometimes write only the top connectives of labeling
formulae.
In that case, these branching nodes above are denoted like this.
 \begin{center} 
  \includegraphics[scale=0.3]{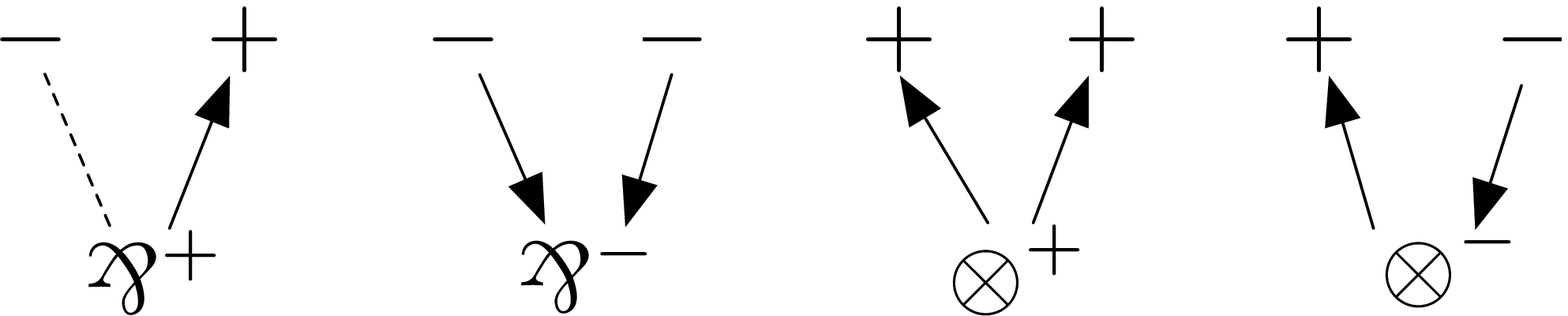}
 \end{center}
We call arrows with upward (resp. downward) signs
\textit{up-edges}\index{up-edge}
(resp. \textit{down-edges}\index{down-edge}).
The \textit{dashed child}\index{dashed child}
of a $\parr^+$ node~$p$ is the node which the dashed
line from $p$ reaches.
The branching nodes labeled by $\parr^+, \parr^-, \otimes^+$ and
$\otimes^-$ are called \textit{operator nodes}.
A \textit{path} follows
solid edges according to the direction of the edges.  Dashed edges are
not directed and they are not contained in paths.

When we add axiom edges and $\bot$-branches (shown below)
to the other operator nodes (shown above)
we obtain an \textit{essential net}\index{essential net} of $\phi$.
Due to the arbitrariness of choosing axiom edges and $\bot$-branches,
there are possibly multiple essential nets for a formula.
Murawski and Ong~\cite{murawski2003} restricts the class of formulae to linearly balanced
formulae so that the essential net is uniquely determined.
 \begin{center}
  \includegraphics[scale=0.3]{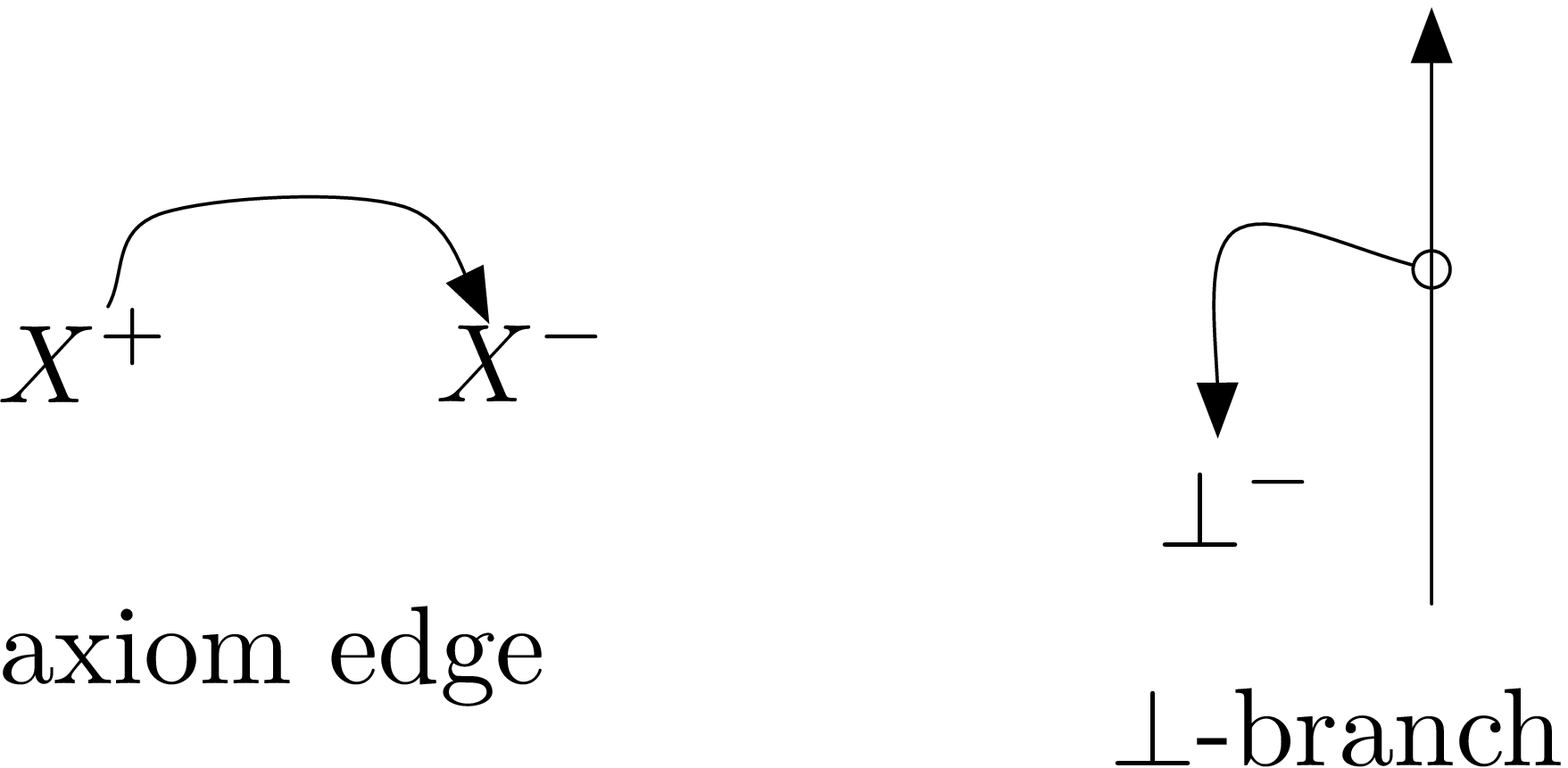}
 \end{center}

 \begin{example}[An essential net of the Amida axiom] \label{essential-amida}
  Here is one of the essential nets of
  the Amida axiom $(X^-\parr^+Y^+)\otimes^+(Y^-\parr^+ X^+)$.
   \begin{center}
    \includegraphics[scale=0.3]{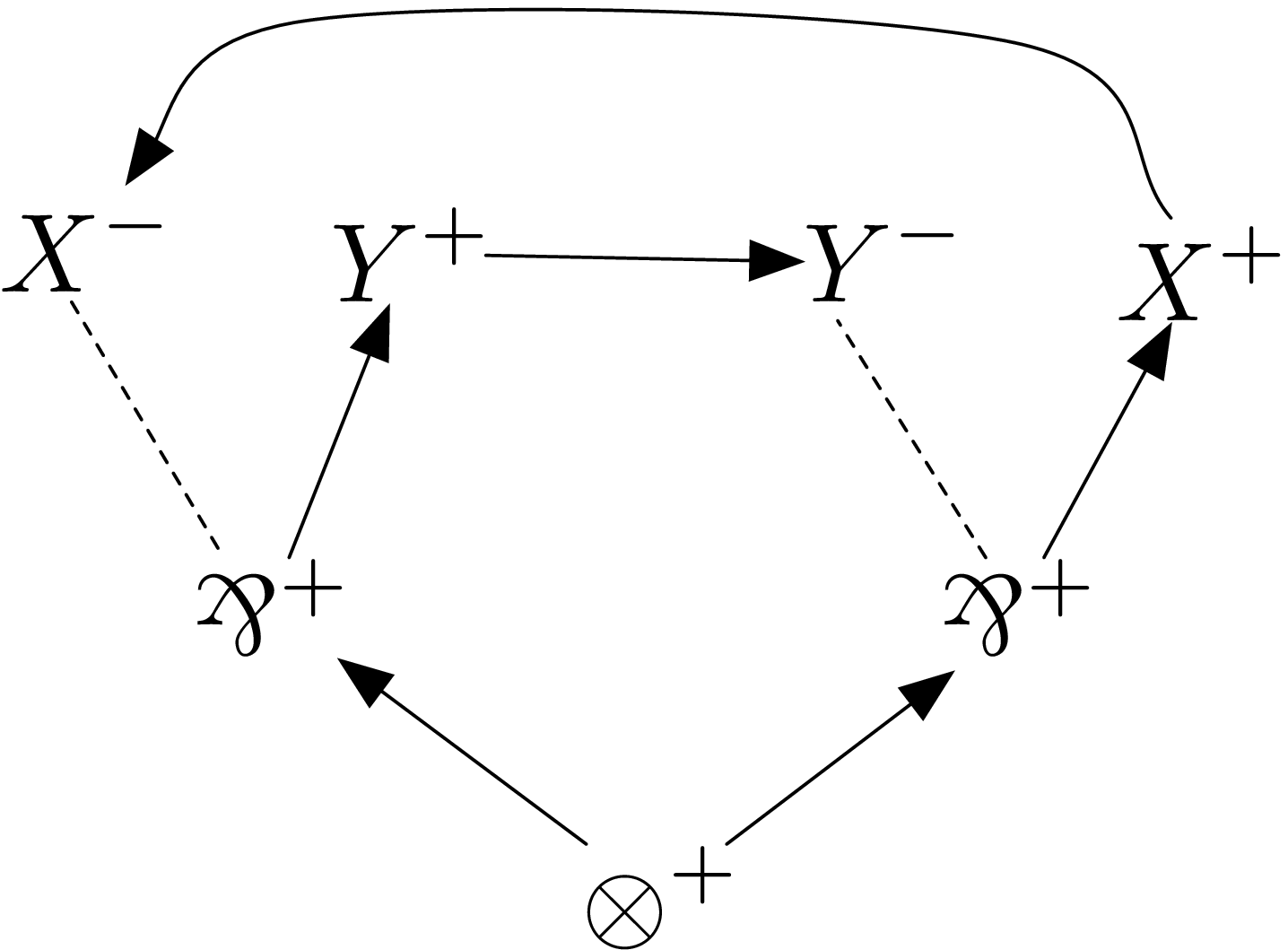}
   \end{center}
 \end{example}
 However, the essential net in Example~\ref{essential-amida} is rejected by
 the following correctness criterion.
  \begin{definition}[Correct essential nets]
   A \textit{correct essential net}\index{correct essential
   net}\index{essential net!correct} is an essential net satisfying all
   these conditions:
\begin{enumerate}
 \item Any node labeled with $X^+$ (resp. $Y^-$) is connected to a
       unique node labeled with $X^-$ (resp. $Y^+$).
       Any leaf labeled with $\bot^-$ is connected to a $\bot$-branch.
       $\one^+$ is not connected to anything above itself;
 \item the directed graph formed by up-edge, down-edge, axiom edges and
       $\bot$-branches is acyclic;
 \item \label{conditionL}
       for every $\parr^+$-node~$p$, every path
       from the root that reaches
       $p$'s dashed child also passes through $p$.
\end{enumerate}
  \end{definition}
The essential net in Example~\ref{essential-amida} is not correct for
condition~\ref{conditionL}.  Actually, the Amida axiom does not have
any correct essential net.  IMLL sequent calculus has the subformula
property so that we can confirm that the Amida axiom is not provable in IMLL.

 \begin{theorem}[Essential nets by~\cite{lamarche2008,murawski2003}]
  \label{essential-ok}
  An IMLL formula~$\phi$ is provable in IMLL if and only if there exists a correct essential net
  of $\phi$.
 \end{theorem}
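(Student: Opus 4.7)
The plan is to prove both directions separately, using induction on IMLL derivations for completeness and a sequentialization argument for soundness. The bridge between syntax and nets is the polarization translation $\imtrans{\cdot}^p$, which assigns to each sequent $\phi_1,\ldots,\phi_n\tr\phi$ the multiset of polarized formulas $\imtrans{\phi_1}^-,\ldots,\imtrans{\phi_n}^-,\imtrans{\phi}^+$. A net of a sequent is then an essential net built over the forest obtained by expanding each of these polarized formulas.

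For the ``only if'' direction (completeness of nets for provable formulas), I would induct on a cut-free IMLL derivation of $\tr\phi$. For the axiom rule $X\tr X$, the net consists of a single axiom edge connecting the leaves $X^-$ and $X^+$; conditions 1 and 2 are trivial and condition 3 is vacuous. For $\one$R we attach a $\bot$-branch if needed. For $\otimes$R and $\limp$L, the induction hypothesis provides two correct subnets which we glue at the new $\otimes^+$ (respectively $\otimes^-$) node; acyclicity is preserved because the two subnets are disjoint before gluing, and condition 3 is preserved because the new operator node is not a $\parr^+$. For $\otimes$L and $\limp$R we add a $\parr^-$ or $\parr^+$ node; the fresh dashed edge in the $\parr^+$ case points downward to the minor premise's positive leaf, so every path from the root reaching that child must pass through the new $\parr^+$ by construction. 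Finally, Cut is absorbed by identifying a positive and a negative occurrence with an axiom edge, and one verifies that correctness is preserved.

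For the ``if'' direction (soundness), I would prove a sequentialization lemma: every correct essential net of $\phi$ contains a \emph{splitting} top-level operator node, i.e.\ a node whose removal decomposes the net into one or two correct subnets corresponding to IMLL subderivations. The standard strategy, adapted from Lamarche, is to look at the root of the polarized tree: if its top connective is $\limp$, the root $\parr^+$ splits (and condition 3 guarantees the dashed child lies on the appropriate side), yielding the premise of $\limp$R; if the top connective is $\otimes$, one finds a splitting $\otimes^+$ node by a kingdom/empire argument using acyclicity. When neither applies directly, one identifies an inner $\otimes^-$ or $\parr^-$ that is maximal in a suitable order, peels it off, and invokes the induction hypothesis on a smaller net. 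Each peeling step corresponds to a left rule or a cut in the reconstructed derivation.

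The main obstacle will be proving the splitting lemma while respecting condition 3 on $\parr^+$ nodes. Acyclicity alone (condition 2) is not enough, because a naively chosen splitting node may leave a subnet in which some positive leaf no longer has a path from the new root that factors through the required dashed child. The delicate step is to show that among all candidate splitting nodes some choice preserves condition 3 in both residual nets; this is where the kingdom/empire construction of Girard-Lamarche is essential, and where one genuinely uses the asymmetry between positive and negative polarities. Once this splitting lemma is established, the back-translation to an IMLL derivation is routine induction on the size of the net.
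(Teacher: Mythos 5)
Your proposal is correct and follows essentially the same route as the paper, which for this cited result simply points to Lamarche and to Murawski--Ong: the provability-to-net direction by induction on a (cut-free) derivation, and the net-to-provability direction by sequentialization, i.e.\ decomposing a correct essential net from the bottom via a splitting argument. Note only that this is exactly the part the paper delegates to the literature, so the step you flag as delicate (choosing a splitting node that preserves condition~3 in both residual nets) is precisely the content of the cited sequentialization theorems rather than something the paper itself supplies.
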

 \begin{proof}
  The left to right is relatively easy.  For the other way around,
  Lamarche~\cite{lamarche2008} uses a common technique of decomposing an
  essential net from the bottom.
  Murawski and Ong \cite{murawski2003} chose to reduce the problem to sequents of special forms
  called regular.
 \end{proof}

 Actually, Lamarche~\cite{lamarche2008} also considers the cut rule (as well as
 additive operators and exponentials) in essential nets, thus
 we can include the following general axioms (as macros) and cuts (as
 primitives) and still use Theorem~\ref{essential-ok}:
 \begin{center}
  \includegraphics[scale=0.3]{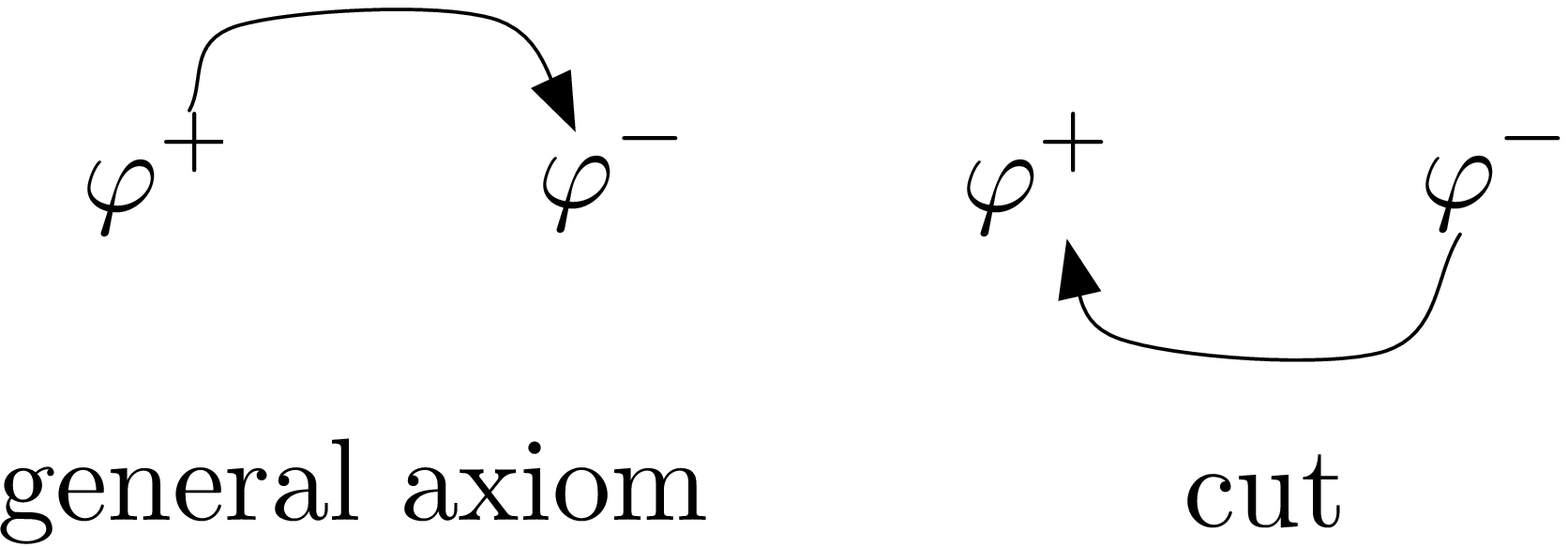}\enspace.
 \end{center}

\subsection{The Amida Nets}

 \begin{definition}[The Amida nets]
  \label{def:amidanets}
For a hypersequent~$\hyper$,
\textit{Amida nets}\index{Amida nets} of $\hyper$ are inductively
  defined by the following three clauses:
\begin{itemize}
 \item an essential net of $\hyper$ is an Amida net of $\hyper$;
 \item for an Amida net of $\hyper$ with two different\footnote{The two edges can
       be connected by a new edge as long as they are different; their
       relative positions do not matter.} up-edges,
	\begin{center}
	 \includegraphics[scale=0.3]{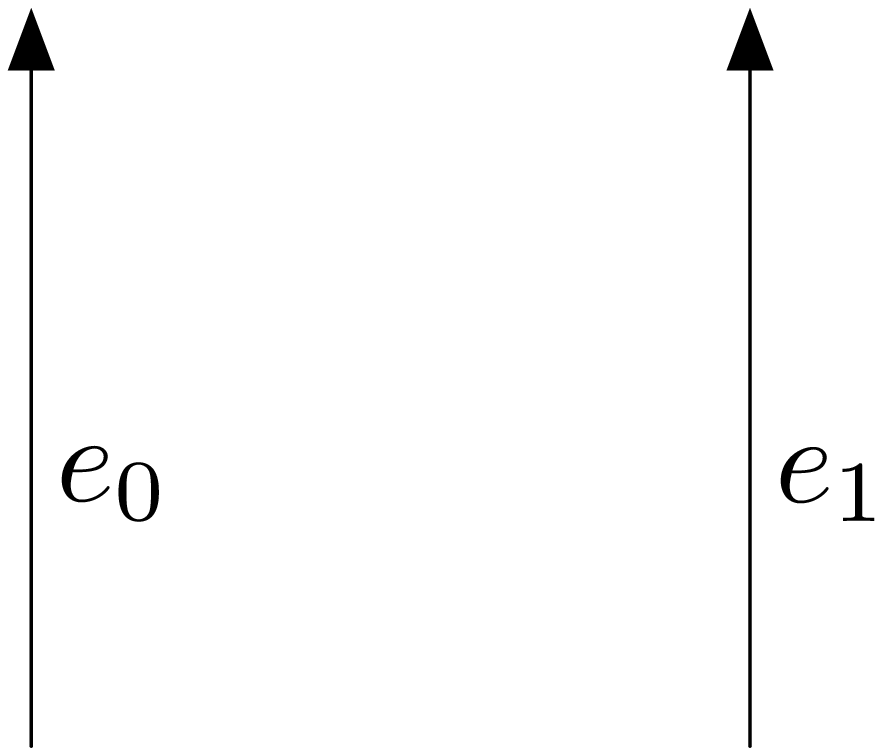}
	\end{center}
       replacing these with
	\begin{center}
	 \includegraphics[scale=0.3]{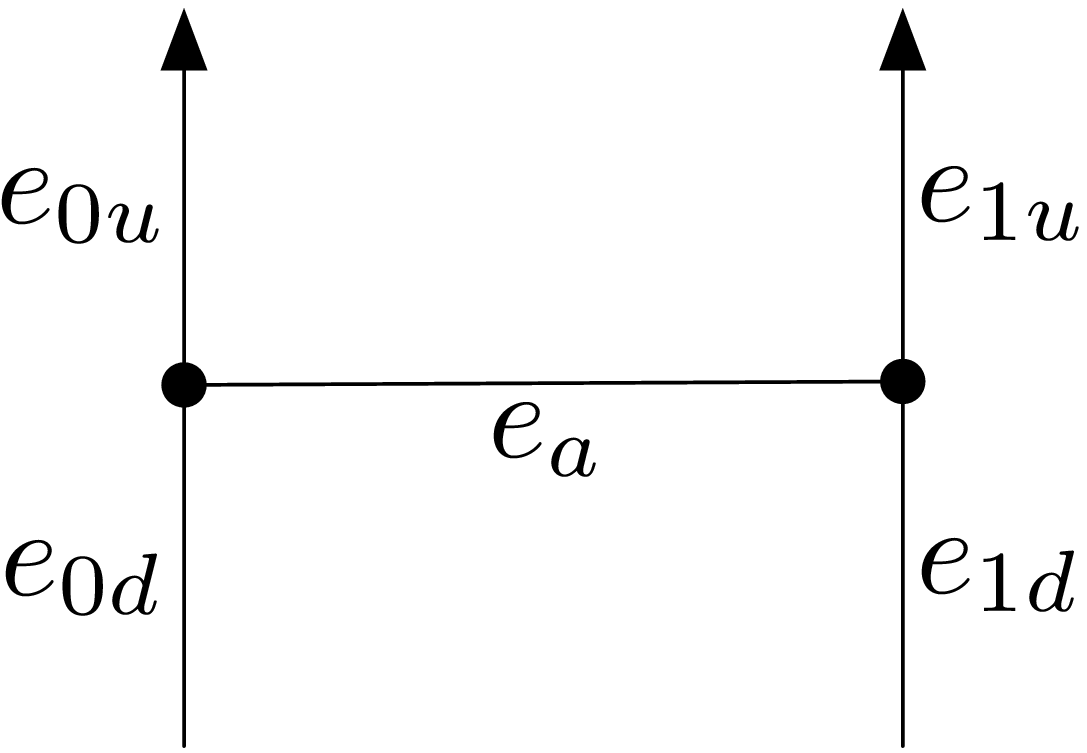}
	\end{center}
       yields an Amida net of $\hyper$,
       where the above component has two paths $e_{0d} e_a e_{1u}$
       and $e_{1d} e_a e_{0u}$;
 \item for an Amida net of $\hyper$ with an up-edge,
	\begin{center}
	 \includegraphics[scale=0.3]{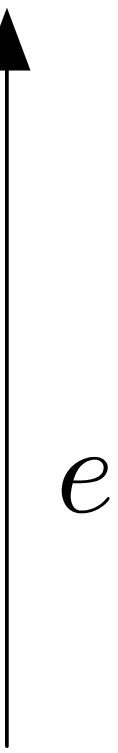}
	\end{center}
       replacing this with
	\begin{center}
	 \includegraphics[scale=0.3]{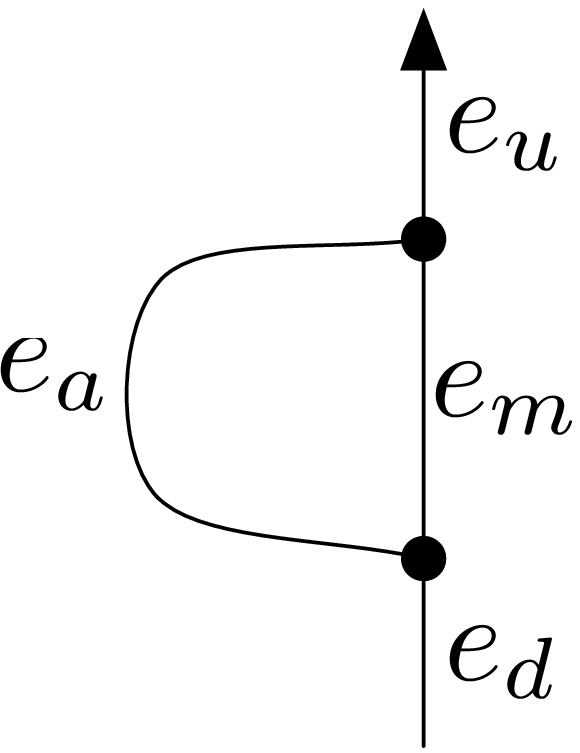}
	\end{center}
       yields an Amida net of $\hyper$,
       where the above component has
       one finite path $e_de_ae_u$
       and one infinite path $\cdots e_m e_a e_m e_a \cdots$.
\end{itemize}
 \end{definition}
 In these clauses, we call the edges labeled~$e_a$ the \textit{Amida
 edges}\index{edge!Amida}\index{Amida edge}.

 \begin{definition}[Correct Amida nets]
  A \textit{correct}\index{Amida net!correct}\index{net!correct
  Amida}\index{correct Amida net}
  Amida net is an Amida net satisfying the three conditions
  in Definition~\ref{def:amidanets}.
 \end{definition}

 The Amida edge is not merely a crossing of up-edges.
 See the difference between
 \begin{center}
\raisebox{-.5\height}{\includegraphics[scale=0.2]{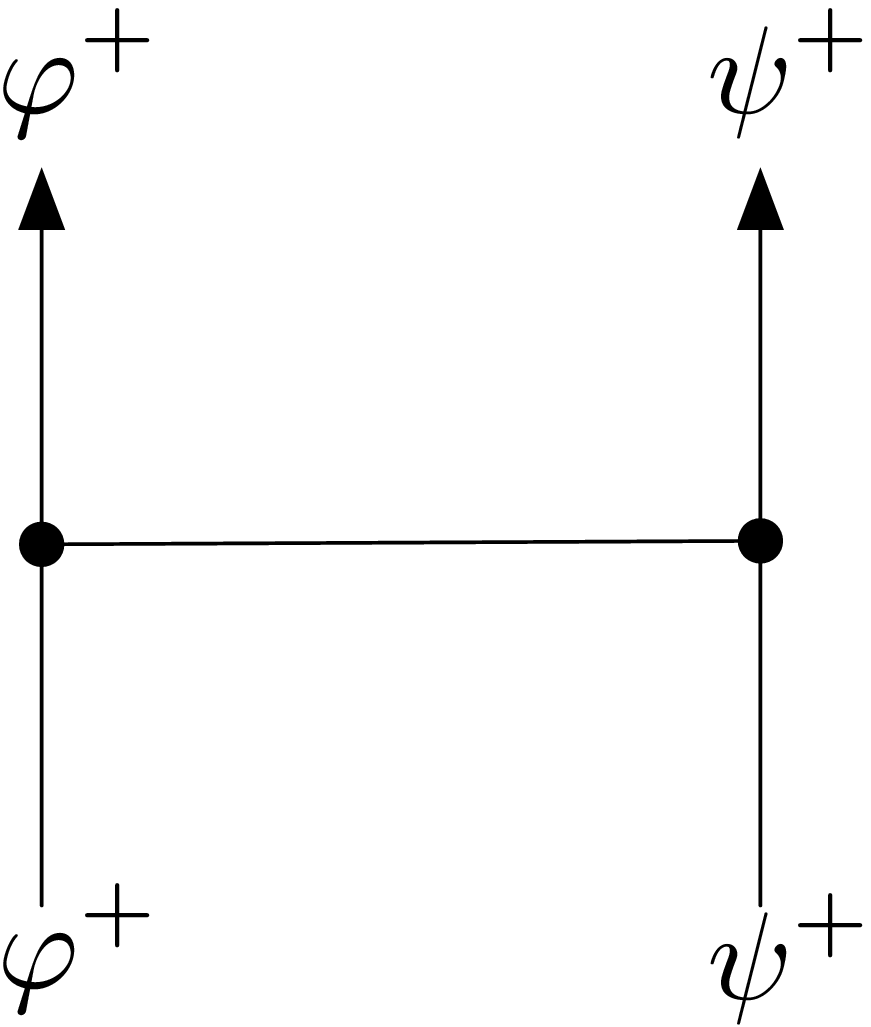}}
and
\raisebox{-.5\height}{\includegraphics[scale=0.2]{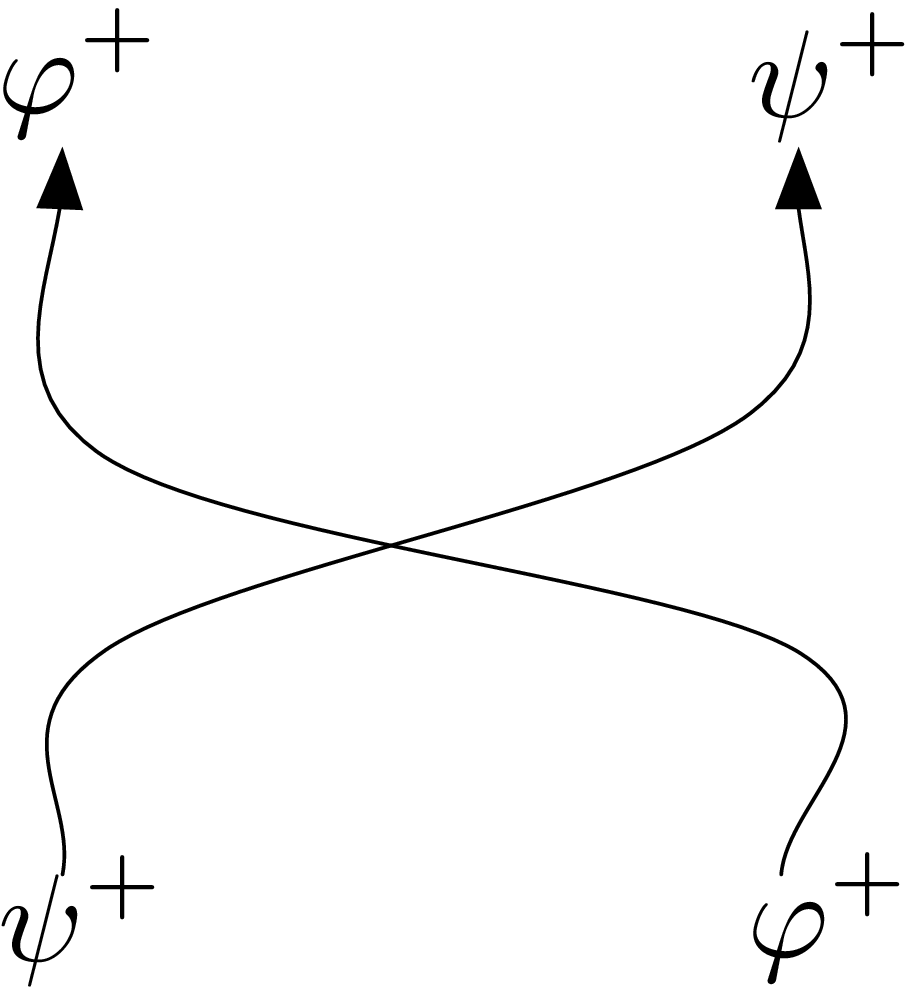}}\enspace.
 \end{center}
The difference is the labels at the bottom.
Although Amida edges cross the paths,
they do not transfer labels.
This difference of labels makes Amida nets validate the Amida axiom.
 \begin{example}[A correct Amida net for the Amida axiom]
  Here is a correct Amida net for the Amida axiom $(X\limp Y)\otimes(Y\limp X)$.
   \begin{center}
    \includegraphics[scale=0.2]{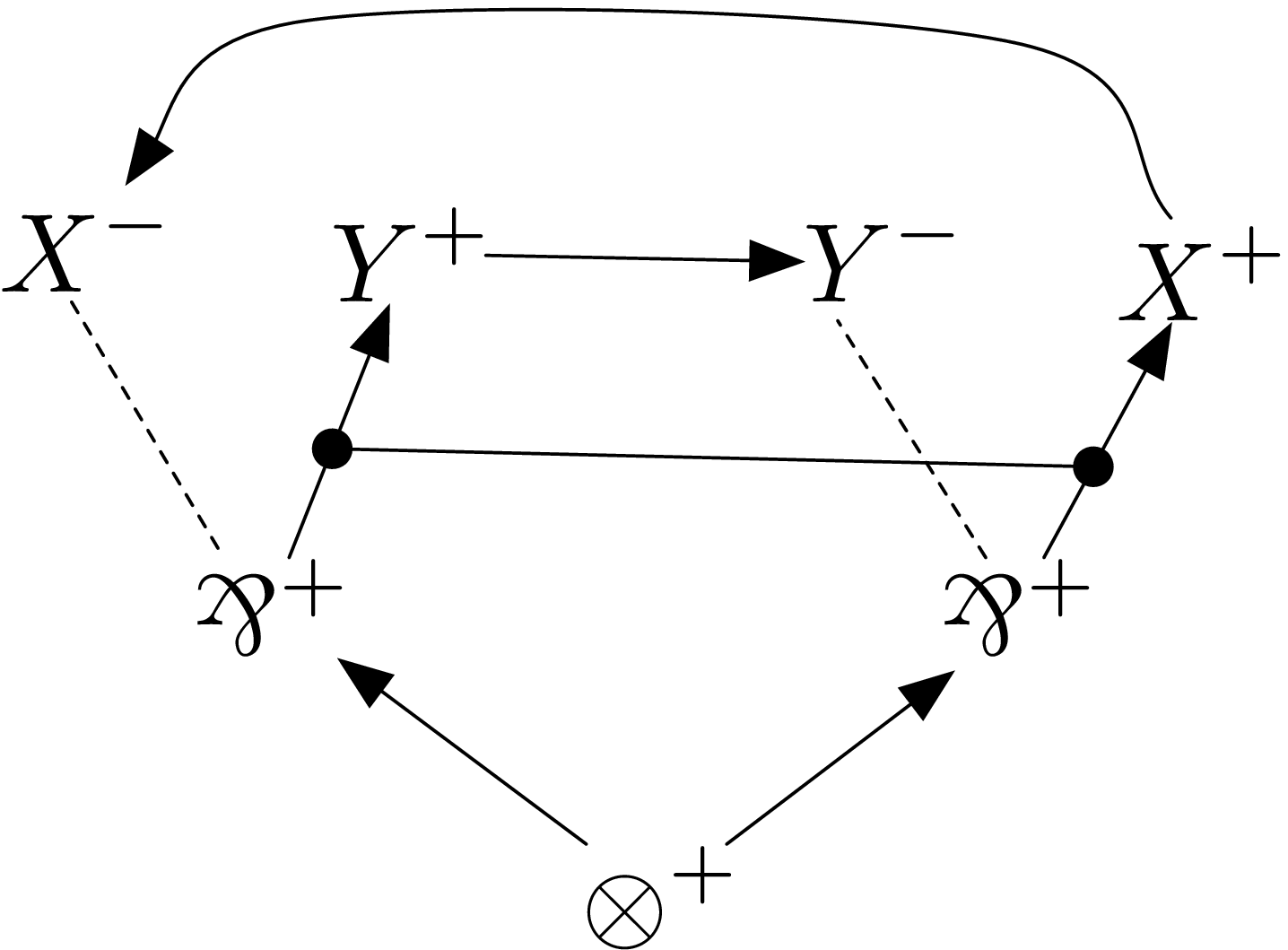}
   \end{center}
  In terms of the set of paths, the above Amida net is equivalent to the
  following correct essential net for $(X\limp X)\otimes (Y\limp Y)$.
   \begin{center}
    \includegraphics[scale=0.2]{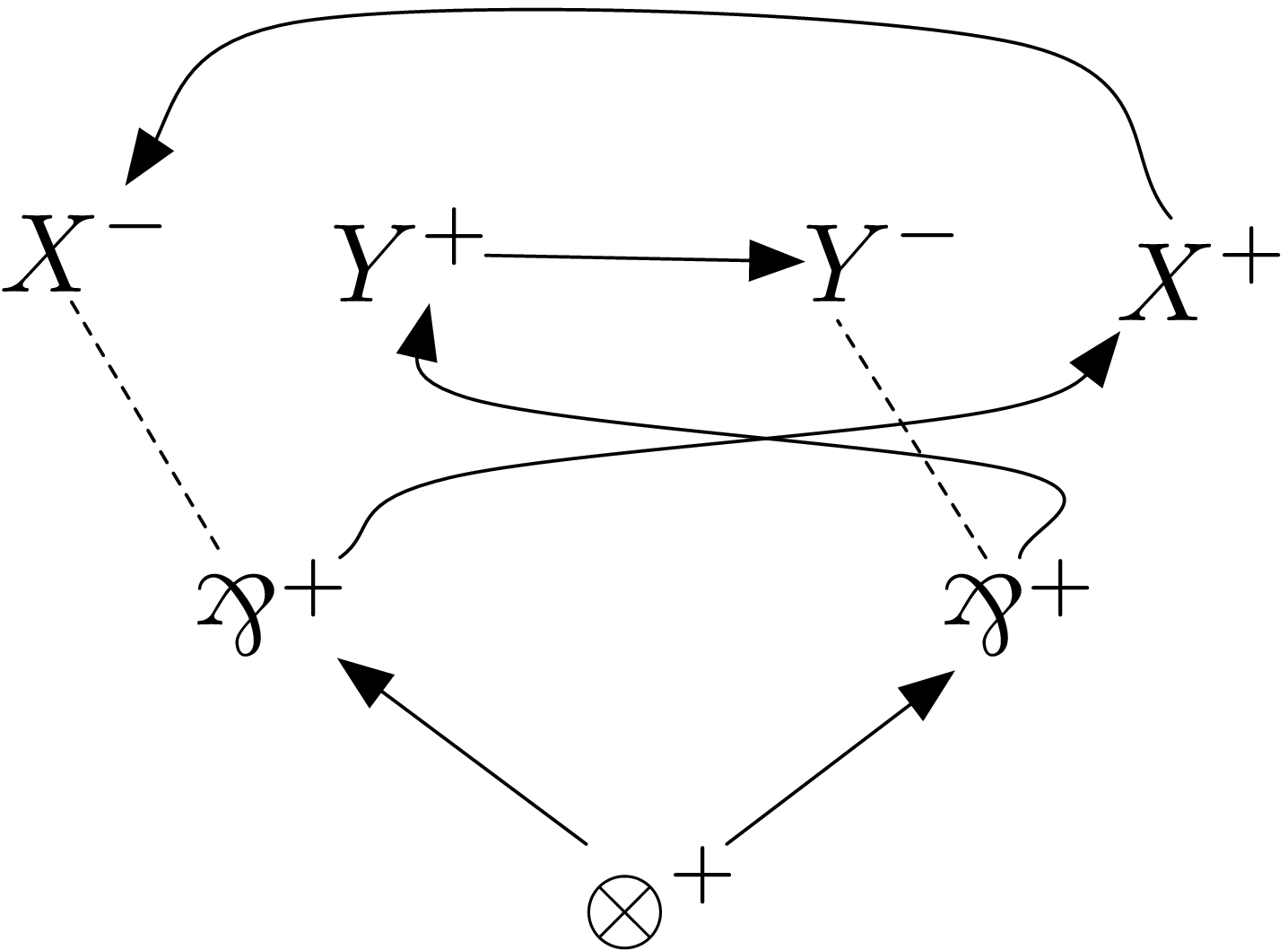}
   \end{center}
 \end{example}

\subsection{Soundness and Completeness of Amida nets}

 \begin{theorem}[Completeness of Amida nets]
  \label{amida-net-complete}
  If a hypersequent $\hyper$ is derivable,
  there is a correct Amida net for $\hyper$.
 \end{theorem}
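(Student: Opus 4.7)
The plan is to proceed by induction on the derivation of $\hyper$ in the Amida calculus, exhibiting for each typing rule a construction that turns correct Amida nets for the premises into a correct Amida net for the conclusion. For the base case Ax, an axiom edge joining the dual polarized leaves of the trees for $\imtrans{\phi}^+$ and $\imtrans{\phi}^-$ yields a correct essential net, which is a correct Amida net by the first clause of Definition~\ref{def:amidanets}.

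For the standard multiplicative and structural rules---$\one$R, $\one$L, $\otimes$R, $\otimes$L, $\limp$R, $\limp$L, Cut, and the exchanges IE and EE---I would reuse the familiar translation from IMLL sequent calculus into essential nets due to Lamarche~\cite{lamarche2008} and Murawski and Ong~\cite{murawski2003} that underlies Theorem~\ref{essential-ok}, applied to the single component on which each rule acts; $\one$L introduces a $\bot$-branch at a fresh $\bot^-$ leaf, and Cut is translated by a cut edge as described in the paragraph following Theorem~\ref{essential-ok}. The Merge rule juxtaposes two correct Amida nets side-by-side; since the three correctness conditions are local to a single component's formula tree together with its axiom, cut, and $\bot$-branch attachments, juxtaposition preserves correctness trivially. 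Exchange of components (EE) is a structural rewriting of the host hypersequent that leaves the underlying graph unchanged.

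The genuinely new case is Sync. Assume inductively a correct Amida net $N$ for $\hypert \hmid \G \tr \tj{t}{\phi} \hmid \D \tr \tj{u}{\psi}$. The polarized trees for $\phi$ and $\psi$ are unchanged by the rule; what must change is which context-tree feeds which conclusion-tree. I would identify the up-edge issuing from the root of $\G$'s tree toward $\phi$'s tree and the up-edge issuing from the root of $\D$'s tree toward $\psi$'s tree, and apply clause~2 of Definition~\ref{def:amidanets} to these two up-edges to splice in an Amida edge. By construction the two resulting paths $e_{0d}e_ae_{1u}$ and $e_{1d}e_ae_{0u}$ route $\G$ through $\psi$'s tree and $\D$ through $\phi$'s tree, which matches exactly the swap effected by the Sync rule on the typing side.

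The main obstacle will be verifying that the three correctness conditions survive the Amida edge insertion in the Sync case. Condition~1 is untouched, since the set of polarized leaves and their pairings is unchanged. Condition~2 (acyclicity) requires showing that splicing the Amida edge between two previously separate up-edges does not close a directed cycle; this should follow because polarity forces $e_a$ to glue a down-edge of one side to an up-edge of the other while respecting the pre-existing partial order on up/down-edges. Condition~3 is the delicate one: any $\parr^+$ node whose dashed child is reached by a rerouted path crossing $e_a$ must still gate that path. This should reduce to the observation that a $\parr^+$ node and its dashed child always lie in the same conclusion tree, so the gating property can be checked locally on each side of the Amida edge and is then preserved globally by the symmetry of the swap. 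Once these three points are checked, the induction closes.
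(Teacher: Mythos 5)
Your proposal matches the paper's own proof: the paper likewise argues by induction on hypersequent derivations, leaves all non-Sync cases to the standard essential-net translation behind Theorem~\ref{essential-ok}, and renders Sync as a crossing that exchanges the two conclusion formulae together with an Amida edge that keeps the path connections vertically straight---exactly your splice of clause~2 of Definition~\ref{def:amidanets} onto the two up-edges. Your closing verification that the three correctness conditions survive the insertion is the detail the paper's one-line proof leaves implicit, and your path-preservation argument is the right way to discharge it.
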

 \begin{proof}
  Inductively on hypersequent derivations.
  The Sync rule is translated into a crossing with an Amida edge:
   \begin{center}
    \AxiomC{$\G\tr\phi\hmid\D\tr\psi$}
    \UnaryInfC{$\G\tr\psi\hmid\D\tr\phi$}
    \DisplayProof
    \quad
    $\mapsto$
    \quad
    $\vcenter{\hbox{\includegraphics[scale=0.3]{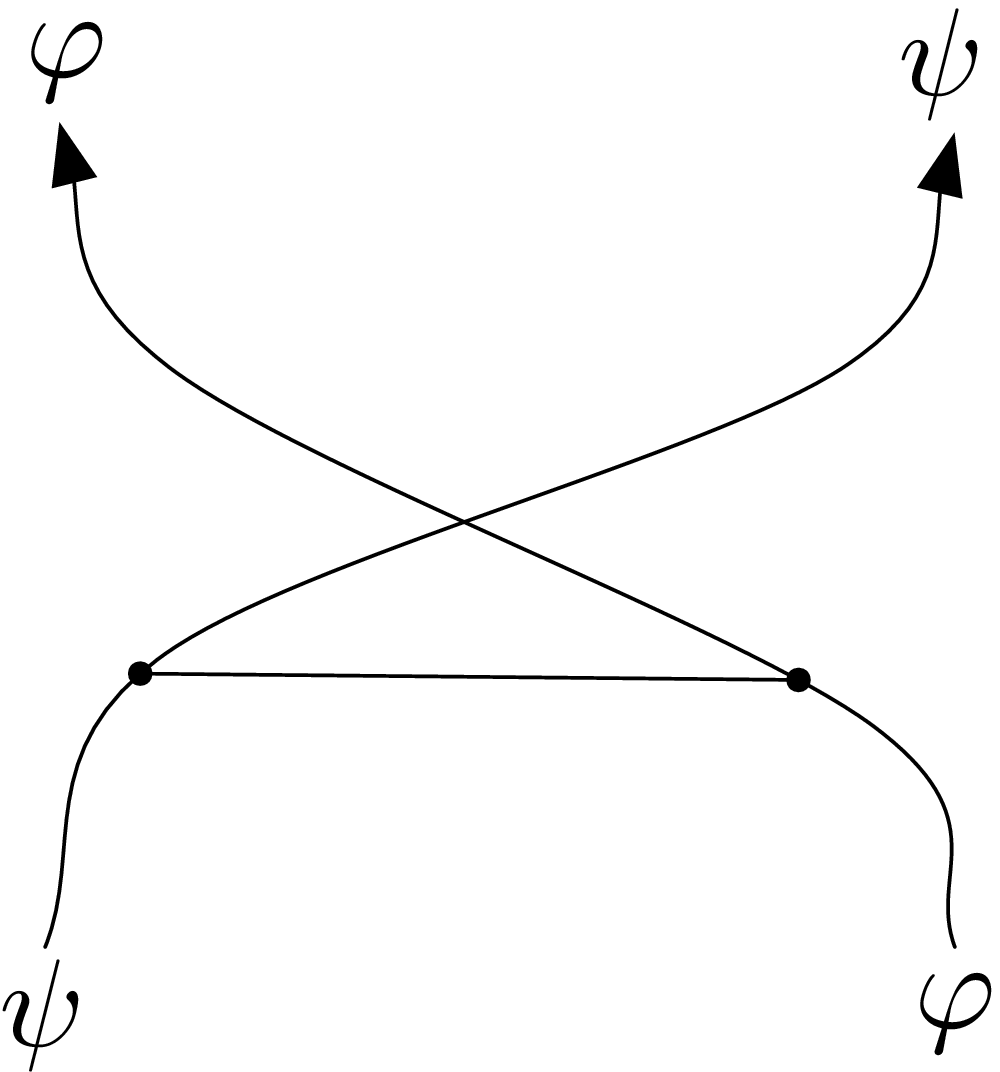}}}$
   \end{center}
  where the crossing exchanges the formulae and the Amida edge keeps the
  path connections vertically straight.
 \end{proof}
 \begin{theorem}[Soundness of Amida nets]
  If there is a correct Amida net for a hypersequent~$\hyper$, then
  $\hyper$ is derivable.
 \end{theorem}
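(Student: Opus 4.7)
The plan is to prove soundness by induction on the number $n$ of Amida edges appearing in the given correct Amida net of $\hyper$. In the base case $n = 0$, the Amida net is a correct essential net. Since distinct components of the hypersequent occupy disjoint subnets, the net decomposes into a correct essential net for each component $\G_i \tr \phi_i$. By Theorem~\ref{essential-ok}, each such component is provable in IMLL, hence in the Amida calculus. Repeated applications of Merge then combine these single-component derivations into a derivation of the full hypersequent $\hyper$.

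For the inductive step with $n + 1 \geq 1$ Amida edges, the idea is to peel off one Amida edge $e_a$ so as to read back an application of the Sync rule (possibly together with Cut). I would select $e_a$ as a topmost Amida edge in a natural partial order on the Amida edges, namely one such that no other Amida edge lies strictly above $e_a$ on any path from the roots. If $e_a$ is of the first kind, joining two up-edges that sit in two distinct components, then removing $e_a$ and swapping the formulas carried on those up-edges yields a correct Amida net with $n$ Amida edges for a hypersequent $\hyper'$ differing from $\hyper$ only by that swap; by the induction hypothesis $\hyper'$ is derivable, and one application of Sync, framed by IE and EE for context and component rearrangement, recovers a derivation of $\hyper$. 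If $e_a$ is of the second kind, the self-loop can be undone by splitting the affected component at the site of the loop, introducing a fresh cut variable; this produces a correct Amida net with $n$ Amida edges for a hypersequent $\hyper'$ carrying one extra component, derivable by induction, and one application of Sync followed by Cut delivers a derivation of $\hyper$.

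The principal obstacle is to justify that peeling off a suitably chosen Amida edge preserves the three correctness conditions of Definition~\ref{def:amidanets}, especially acyclicity and condition~(3) on dashed children of $\parr^+$-nodes. One needs to argue that a topmost Amida edge is always available and that removing it neither creates directed cycles among up-, down-, axiom- and $\bot$-branch edges, nor spoils any switching test at a $\parr^+$-node. A trip or switching argument in the style of Lamarche~\cite{lamarche2008}, adapted to accommodate the two path patterns introduced by Amida edges in Definition~\ref{def:amidanets}, should suffice. The self-loop case carries an additional subtlety in showing that the split of a component is genuine, so that the two sides of the split are connected only by the reintroduced Cut; this should follow from the local nature of the finite and infinite paths around a self-loop as recorded in the definition.
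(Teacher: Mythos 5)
Your base case is fine in outline, but the inductive step has a genuine gap. The Sync rule acts on whole components of a hypersequent: it swaps the succedent types of two components $\G\tr\phi$ and $\D\tr\psi$. An Amida edge, by contrast, may join two up-edges lying arbitrarily deep inside the formula trees, below several $\otimes$ and $\parr$ nodes. Removing such an edge and ``swapping the formulas carried on those up-edges'' does not yield a hypersequent obtained from $\hyper$ by one Sync step; the swap happens at the level of internal subformula occurrences, not of component succedents, so the object you get is not a correct Amida net of any hypersequent that the calculus can reach from $\hyper$ by Sync (framed by IE/EE or not). Hence the reading-back of a single internal Amida edge as one application of Sync does not go through as stated. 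There is also a problem with your choice of a ``topmost'' edge: the second clause of Definition~\ref{def:amidanets} explicitly creates an \emph{infinite} path cycling through the Amida edge, so ``lies above on a path'' need not be a well-founded (or even irreflexive) relation on Amida edges, and a maximal element need not exist in the order you propose.

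The paper's proof supplies exactly the missing idea: before reading anything back, it \emph{normalizes} the net by commuting every Amida edge upward past the operator nodes (one explicit move for $\otimes$, one for $\parr$, each checked to preserve the correctness conditions and the end-node labels up to logical equivalence such as $\phi\simeq\phi\otimes\one$), until all Amida edges sit in a layer directly below the axiom links. That middle layer is then literally an Amida lottery, i.e.\ a permutation, which factors into transpositions; each transposition is an instance of Sync applied at the axiom level, and the remaining bottom layer is an ordinary essential net sequentialized by Lamarche's method. If you wish to keep an induction on the number of Amida edges, you must build these commutation moves into the induction so that the edge you peel off has first been pushed to a position where Sync, which only sees component succedents, actually applies.
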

 \begin{proof}
  From a correct Amida net, first we move the Amida edges upwards
  until they are just below axiom edges\footnote{The idea is similar to
  the most popular syntactic
  cut-elimination proofs.}.
The moves are as follows.
 \begin{center}
\includegraphics[scale=0.2]{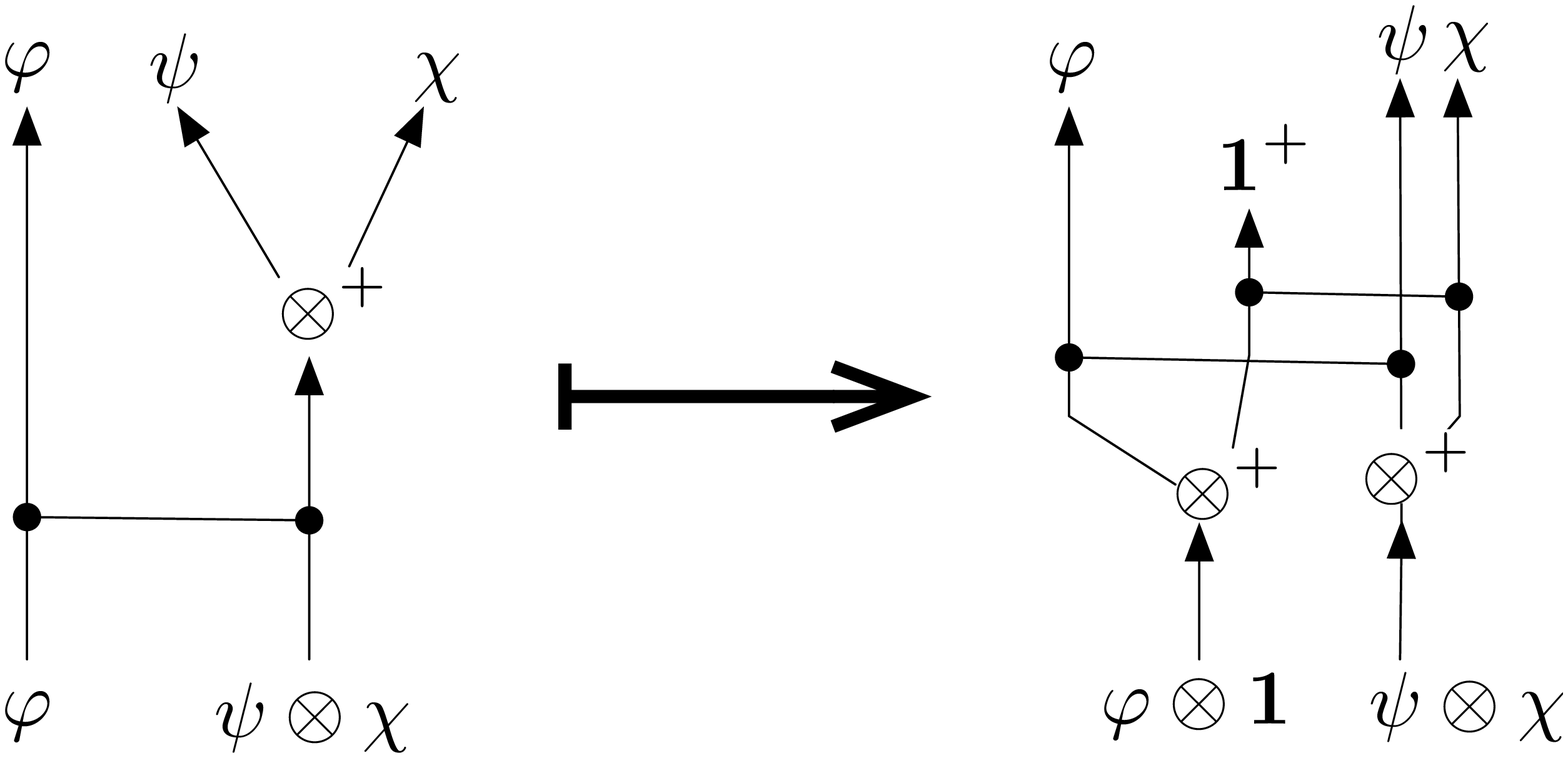}
\vskip 1.2cm
\includegraphics[scale=0.2]{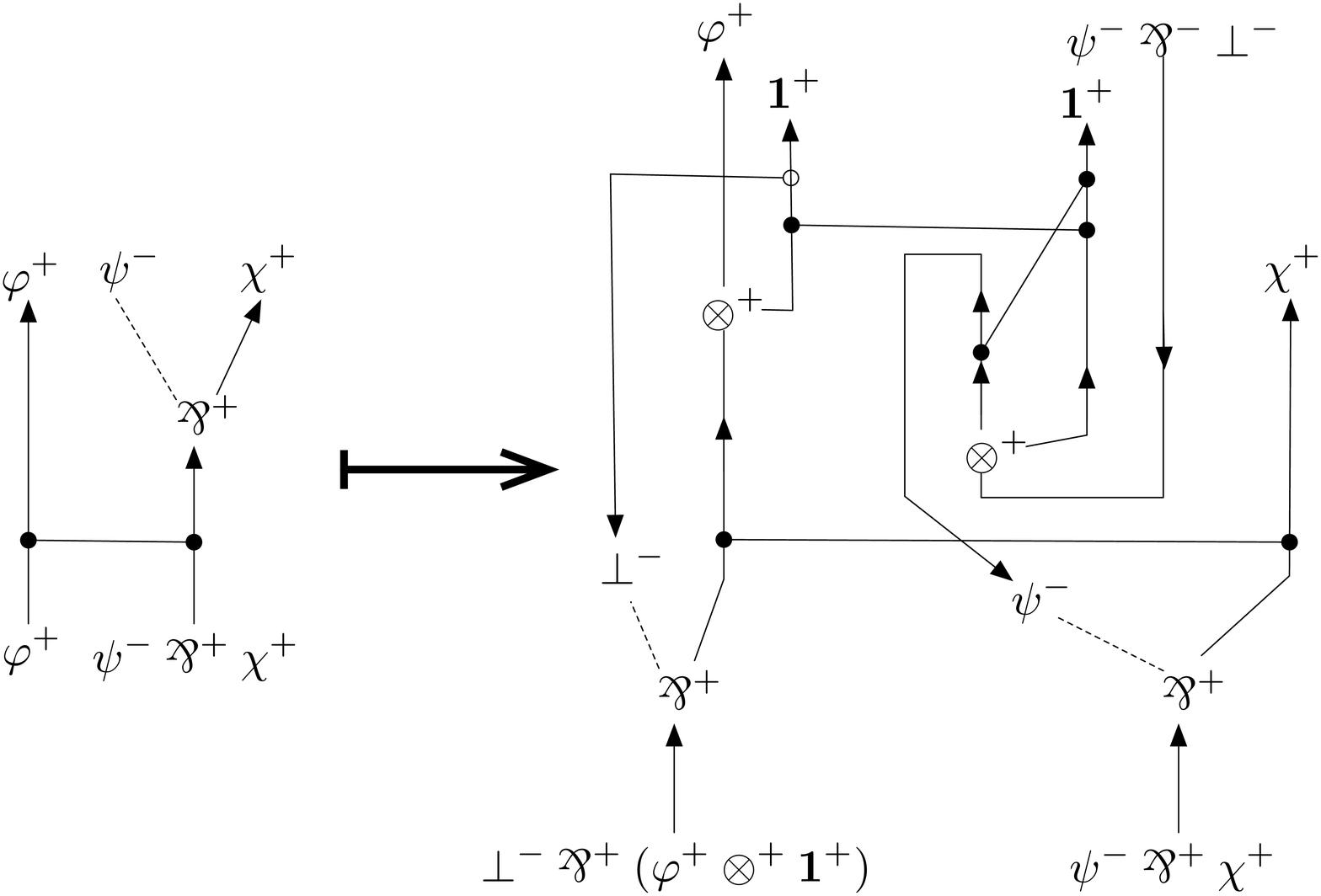}
 \end{center}
These translations have two properties.
\begin{enumerate}
 \item When the original (contained in a larger picture)
       is a correct Amida net, the translation (contained in the same
       larger picture) is also a correct Amida net.
 \item The end nodes of the translation correspond to the end nodes of
       the original, and the corresponding end nodes have the same label
       (up to logical equivalence in IMLL).  In case of $\otimes$
       translation, $\phi$ and
       $\phi\otimes \one$ are logically equivalent because $\one$ is the
       unit of $\otimes$.
       In case of $\parr$ translation, $\phi$ and $\bot\parr \phi$ are
       logically equivalent because $\bot$ is the unit of $\parr$.
\end{enumerate}
For checking the first condition, it is enough to follow the paths
(crossing all Amida edges).
For the second condition, it is enough to follow the vertical edges
ignoring the Amida edges and $\bot$-branches.

The $\otimes$ move introduces Amida edges only above the branching rules.
Although the $\parr$ move introduces an Amida edge below a
branching rule, that branching rule is of $\otimes$ nature.
Also, the $\parr$ move introduces an Amida edge below $\psi$-axiom link,
which is actually a macro.  So we have to continue applying the
translation moves in the macro.  However, since $\psi$ is a strictly
smaller subformula of $\psi\parr\chi$, this does not cause infinite
recursion.

Then, by these translation moves,
the whole Amida net is decomposed vertically into three layers.
At the top, there is a layer with only axiom edges.
In the middle, there is a layer with only vertical edges and Amida
edges.
At the bottom, there is a layer that contains only ordinary
essential net nodes.

Since the middle layer is an Amida lottery, it defines a permutation.
That permutation can be expressed as a product of transpositions, so
that
the original Amida lottery is equivalent to an encoding of a
  hypersequent derivation
that consists of only Sync rules.

After we encode the top and the middle layer into a hypersequent
derivation, encoding the bottom layer
can be done in the same way as Lamarche's approach~\cite{lamarche2008}.
\end{proof}

We wonder whether it is possible to add Amida edges to
the IMALL$^-$ essential nets following
Lamarche~\cite{lamarche2008}.
The additives are notoriously difficult for proof nets and we do not
expect the combination of additive connectives and Amida edges can be
treated in any straightforward way.

\end{document}